\documentclass[11pt]{article}

\usepackage[margin=1in]{geometry}

\usepackage[T1]{fontenc}
\usepackage[utf8]{inputenc}
\usepackage{amsmath,amssymb,amsfonts,amsthm,mathtools}
\usepackage{authblk}
\usepackage{bm}
\usepackage{dsfont}
\usepackage{booktabs}
\usepackage{enumitem}
\usepackage{graphicx}
\usepackage{subcaption}
\usepackage[numbers,sort&compress,square]{natbib}
\usepackage[colorlinks=true,citecolor=blue,linkcolor=blue,urlcolor=blue,breaklinks=true]{hyperref}
\usepackage{tikz}
\usetikzlibrary{arrows.meta}
\usetikzlibrary{calc}
\theoremstyle{plain}
\newtheorem{theorem}{Theorem}

\usepackage{newtxtext,newtxmath}
\usepackage{xcolor}

\theoremstyle{definition}

\definecolor{bg}{RGB}{244,244,244}
\definecolor{grid}{RGB}{198,202,202}

\newcommand{\ZN}{\mathbb Z_N}

\newcommand{\U}{\mathrm U}

\newcommand{\supp}{\operatorname{supp}}

\newcommand{\ket}[1]{\left|#1\right\rangle}
\newcommand{\dd}{\mathrm d}

\newcommand{\loc}{\mathrm{loc}}

\graphicspath{{./figures/}}

\newcommand{\ZZ}{\mathbb Z}

\newcommand{\OverlapWord}[2]{%
	\mathord{\vcenter{\hbox{%
				\begin{tikzpicture}[x=1cm,y=0.36cm]
					\draw[black,line width=0.9pt,line cap=round] (0,0) -- (3,0);
					
					\foreach \x in {0,1,2,3}
					\fill[black] (\x,0) ellipse[x radius=0.045,y radius=0.16];
					
					\node[inner sep=0.5pt] at (0.00,0.72) {$\scriptstyle -#1$};
					\node[inner sep=0.5pt] at (1.00,0.72) {$\scriptstyle #2$};
					\node[inner sep=0.5pt] at (2.00,0.72) {$\scriptstyle #1$};
					\node[inner sep=0.5pt] at (3.00,0.72) {$\scriptstyle -#2$};
					
					\node[inner sep=0.5pt] at (0.00,-0.72) {$\scriptstyle x$};
					\node[inner sep=0.5pt] at (1.00,-0.72) {$\scriptstyle y'$};
					\node[inner sep=0.5pt] at (2.00,-0.72) {$\scriptstyle x'$};
					\node[inner sep=0.5pt] at (3.00,-0.72) {$\scriptstyle y$};
				\end{tikzpicture}%
	}}}%
}

\newcommand{\LinkedEllipses}[2]{%
	\ensuremath{\mathord{\vcenter{\hbox{%
					\begin{tikzpicture}[
						x=.55em,y=.55em,
						line cap=round,
						line join=round,
						inner sep=0pt,
						outer sep=0pt
						]
						\def\lw{.09em}%
						\def\mw{.20em}%
						\def\vax{.75}%
						\def\vay{1.85}%
						\def\hcx{1.75}%
						\def\hcy{.05}%
						\def\hax{2.20}%
						\def\hay{.55}%
						
						\draw[line width=\lw]
						(0,0) ellipse [x radius=\vax,y radius=\vay];
						
						\draw[line width=\lw]
						(\hcx,\hcy) ellipse [x radius=\hax,y radius=\hay];
						
						\draw[draw=white,line width=\mw,line cap=butt]
						({\vax*cos(0)},{\vay*sin(0)})
						arc[start angle=0,end angle=35,x radius=\vax,y radius=\vay];
						
						\draw[line width=\lw,line cap=butt]
						({\vax*cos(0)},{\vay*sin(0)})
						arc[start angle=0,end angle=35,x radius=\vax,y radius=\vay];
						
						\draw[draw=white,line width=\mw,line cap=butt]
						({\hcx+\hax*cos(226)},{\hcy+\hay*sin(226)})
						arc[start angle=226,end angle=260,x radius=\hax,y radius=\hay];
						
						\draw[line width=\lw,line cap=butt]
						({\hcx+\hax*cos(226)},{\hcy+\hay*sin(226)})
						arc[start angle=226,end angle=260,x radius=\hax,y radius=\hay];
						
						\node at (1,-2.34) {$#1$};
						\node at (4.5,-1.30) {$#2$};
					\end{tikzpicture}%
	}}}}%
}

\usetikzlibrary{arrows.meta,decorations.markings}

\definecolor{braidred}{RGB}{214,0,0}
\definecolor{braidblue}{RGB}{0,130,235}
\pgfdeclarelayer{foreground}
\pgfsetlayers{main,foreground}

\tikzset{
  lattice/.style={black,line width=0.75pt},
  time line/.style={black,line width=0.55pt},
  sheet/.style={line width=3.1pt,line cap=round,line join=round},
  arrowed/.style={
    postaction={
      decorate,
      decoration={
        markings,
        mark=at position #1 with {\arrow{Stealth[length=8.2pt,width=7.2pt]}}
      }
    }
  },
  arrowed/.default=0.55,
  defect/.pic={
    \draw[pic actions,line width=4.3pt,line cap=round]
      (-0.13,-0.13) -- (0.13,0.13)
      (-0.13,0.13) -- (0.13,-0.13);
  }
}

\title{Bockstein braiding statistics}
\author[1]{Po-Shen Hsin}
\author[2,$\dagger$]{Yu-An Chen}
\affil[1]{Department of Mathematics, King's College London, Strand, London WC2R 2LS, UK}
\affil[2]{International Center for Quantum Materials, School of Physics, Peking University, Beijing 100871, China}

\date{\today}

\newif\ifincludeappendices
\includeappendicestrue

\begin{document}
\maketitle

\begingroup
\renewcommand{\thefootnote}{$\dagger$}
\footnotetext{Contact author: \href{mailto:yuanchen@pku.edu.cn}{yuanchen@pku.edu.cn}}
\endgroup

\begin{abstract}
Braiding phenomena, from the charge-flux Aharonov-Bohm effect to anyonic statistics in fractional quantum Hall systems, are paradigmatic manifestations of topology in quantum physics.
Ordinary mutual braiding between $p$- and $q$-dimensional excitations occurs in $d=p+q+2$ spatial dimensions.
In this work, we introduce a universal construction of mutual statistics in the adjacent dimension, $d=p+q+1$, applicable to excitations obeying $\mathbb Z_N$ fusion for arbitrary $N$ and all excitation dimensions $p$ and $q$. The corresponding invariant is the Berry phase accumulated in a simple $4N$-step microscopic unitary process built from local excitation operators on lattices. This process measures the linking of one excitation with the $N$-fold fusion junction of the other, encompassing particle-particle statistics in one dimension, particle-loop statistics in two dimensions, and loop-loop or particle-membrane statistics in three dimensions. We establish the quantization and bilinearity of the invariant and show that its field-theory response is governed by the Bockstein homomorphism, motivating the name Bockstein braiding statistics.
Interpreting the excitation operators as open symmetry operators turns the same invariant into a direct microscopic diagnostic of mixed anomalies between symmetries. We demonstrate this diagnostic in a $(1{+}1)$D spin chain, where the nontrivial Bockstein braiding phase proves the mixed anomaly between the spin-flip symmetry $\prod_i X_i$ and the nearest-neighbor controlled-$Z$ symmetry $\prod_i \mathrm{CZ}_{i,i+1}$. We construct explicit $(2{+}1)$D and $(3{+}1)$D lattice analogs, yielding new anomalous symmetry pairs, and apply the framework to strongly coupled $(3{+}1)$D continuum gauge theories.
Nontrivial Bockstein braiding rules out a fully symmetric gapped phase, obstructs simultaneous condensation of the two excitations, and implies fractionalization of higher-form symmetries.
\end{abstract}

\section{Introduction}

\begin{figure}[t]
    \centering

    \begin{subfigure}[t]{0.4\textwidth}
        \centering
        \caption[Ordinary braiding statistics]{
            Ordinary braiding statistics\\
            $\left[U^{B},U^{A}\right]$
        }
        \label{fig:ordinary-braiding}
        \includegraphics[
            width=\linewidth
        ]{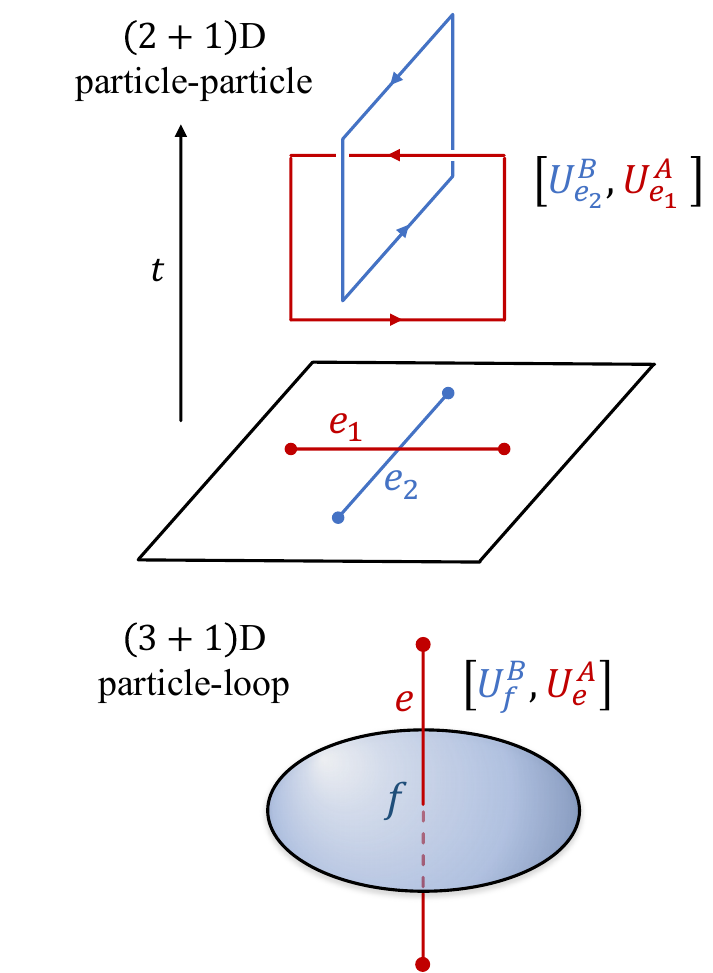}
    \end{subfigure}
    \hfill
    \begin{subfigure}[t]{0.58\textwidth}
        \centering
        \caption[Bockstein braiding statistics]{
            Bockstein braiding statistics\\
            $\bigl(Y^{-1}X^{-1}\bigr)^N(YX)^N$
        }
        \label{fig:bockstein-braiding}
        \includegraphics[
            width=\linewidth
        ]{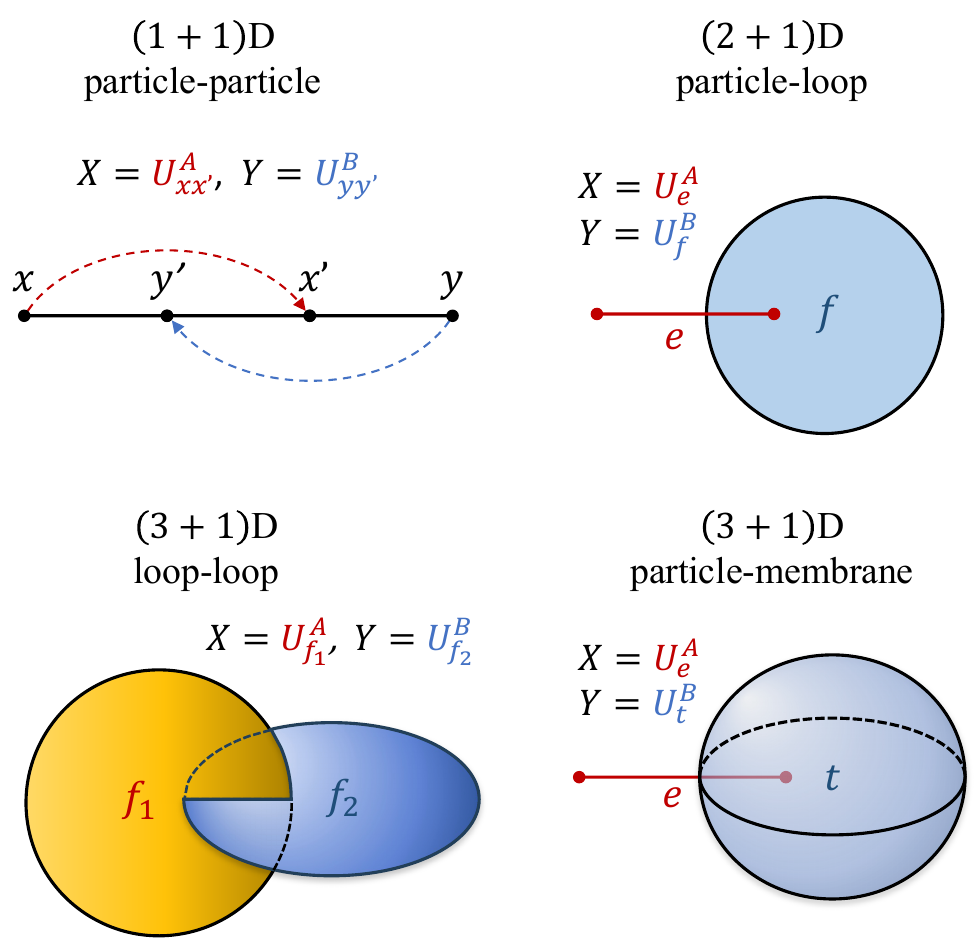}
    \end{subfigure}
    \caption{
        Comparison of the two mutual statistics. Operator \(U_\Delta\) is a creation or hopping operator supported on
        \(\Delta\), creating an excitation on \(\partial\Delta\).
        \textbf{(a)} Ordinary braiding is represented by the commutator
        $\left[U^{B},U^{A}\right]=(U^B)^{-1}(U^A)^{-1}U^B U^A$.  The examples
        depict particle-particle braiding in $(2{+}1)$ dimensions and
        particle-loop braiding in $(3{+}1)$ dimensions.
        \textbf{(b)} Bockstein braiding is represented by
        $W_N(X,Y)=\bigl(Y^{-1}X^{-1}\bigr)^N(YX)^N$ for excitation operators
        whose supports have a staggered one-dimensional overlap.  The examples
        depict particle-particle statistics in $(1{+}1)$ dimensions,
        particle-loop statistics in $(2{+}1)$ dimensions, and loop-loop and
        particle-membrane statistics in $(3{+}1)$ dimensions.
    }
    \label{fig:ordinary-and-bockstein-braiding}
\end{figure}

Braiding phases are among the most direct signatures of topology in quantum systems. In the Aharonov--Bohm effect, a charged particle acquires a phase by encircling a magnetic flux even when the electromagnetic field vanishes along its trajectory~\cite{AharonovBohm1959}. In a fractional quantum Hall fluid, quasiparticles acquire anyonic phases when they are exchanged or braided~\cite{ArovasSchriefferWilczek1984,Wilczek1982}.
Particle--loop braiding also occurs in three-dimensional superconductors: a Bogoliubov quasiparticle whose closed trajectory links an Abrikosov vortex loop acquires a $\pi$ phase from the loop's quantized magnetic flux~\cite{Hansson_2004}.
These phases are universal because they depend only on the topology of the spacetime history. They also provide the basic operational data behind many proposed schemes for topological quantum computation~\cite{Preskill1999TopologicalQuantumComputation, Kitaev2003Fault, freedman2003topological, Nayak2008NonAbelian}.

Their common origin is linking. In $d$ spatial dimensions,
ordinary mutual braiding between $p$- and $q$-dimensional excitations requires $d=p+q+2$,
so that their spacetime worldvolumes can link while remaining disjoint. If
$X$ and $Y$ create the two excitations and their supports meet at one interior
point (Fig.~\ref{fig:ordinary-and-bockstein-braiding}\subref{fig:ordinary-braiding}), the process is the commutator
\begin{equation}
    [Y,X]:=Y^{-1}X^{-1}YX\in\U(1).
    \label{eq:comm-def}
\end{equation}
For excitations with $\ZN$ fusion, this phase is quantized as
$\exp(2\pi i k/N)$ and is encoded by the ordinary inflow response
$(2\pi i k/N)\int A\cup B$
\cite{Gaiotto:2014kfa,Kapustin2014Anomalous,
kapustin2014anomaliesdiscretesymmetriesvarious}.

However, changing the spatial dimension destroys this disjoint-linking geometry. We therefore ask whether a similarly local and robust statistical process exists in different dimensions.
A statistical process is a closed sequence of excitation operators whose Berry phase is unchanged by local microscopic
redefinitions~\cite{xue2025statistics,kobayashi2024generalized}.
General constructions can be long and geometrically opaque: for example, the
Pontryagin $\mathbb Z_3$ membrane statistic uses 56 operators
\cite{feng2025anyonic}, with further complexity for higher-dimensional
excitations~\cite{feng2026paulistabilizerformalismtopological}.

In this work, we find a uniform construction for
\begin{equation}
    d=p+q+1.
    \label{eq:bockstein-dimension-matching}
\end{equation}
Here the two creation-operator supports generically overlap along a line. For
excitations with $\ZN$ fusion, arrange the supports in the staggered geometry
of Fig.~\ref{fig:ordinary-and-bockstein-braiding}\subref{fig:bockstein-braiding}
and define
\begin{equation}
    W_N(X,Y):=(Y^{-1}X^{-1})^N(YX)^N.
    \label{eq:WN-def}
\end{equation}
The phase measures the linking of an $A$-type excitation with a selected $N$-fold fusion junction of the $B$-type excitation.
The previously studied $N{=}2$ particle--membrane process~\cite{kobayashi2024generalized} in $(3{+}1)$D is a
special case of this construction.\footnote{Appendix~\ref{app:eq41} establishes the equivalence explicitly.}
The new results are the dimension-independent formula for arbitrary fusion order $N$ and all $d=p+q+1$, its Bockstein and mixed-anomaly interpretation, the quantization and bilinearity of the invariant, and explicit lattice realizations.

In field theory, the $N$-fold fusion junction is encoded by the Bockstein class $\beta_N B$.~\footnote{Here, $\beta_N$ is the connecting homomorphism
associated with the short exact sequence
$0\to\ZN\xrightarrow{\times N}\mathbb Z_{N^2}
\xrightarrow{\mathrm{mod}\,N}\ZN\to0$.
The background fields $A$ and $B$ are $\ZN$-valued cocycles, and
$\beta_N B$ is Poincar\'e dual to the selected junction at which $N$
$B$-type excitations fuse to the vacuum~\cite{Bockstein1942,hatcher2002algebraic}.}
The associated anomaly-inflow action is
\begin{equation}
    S_{\mathrm{inflow}}[A,B]
    =
    \frac{2\pi i k}{N}
    \int_{\mathcal M_{d+2}}
    A_{d-p}\cup\beta_N B_{d-q},
    \qquad k\in\ZN,
    \label{eq:bockstein-response-intro}
\end{equation}
where $\partial \mathcal M_{d+2} = \mathcal M_{d+1}$ is the physical spacetime.
We call the resulting invariant \emph{Bockstein braiding statistics}. It includes
particle--particle statistics in one dimension, particle--loop statistics in
two dimensions, and loop--loop or particle--membrane statistics in three
dimensions.
In three dimensions, the loop--loop realization defines an intrinsic loop
statistic distinct from three-loop braiding
\cite{Wang2014braiding,Else2017Cheshire,Levin2015loopbraiding,
Jiang2014Generalized,Bi2014anyonloopbraiding}.

Anomalies of lattice symmetries have been investigated from several
perspectives
\cite{Else2014Classifying,Else2020LSM,Kapustin2025Anomalous,kapustin2025higher,
kapustin2025higher2,feng2025higherformanomalieslattices,
Wilbur2026DisentanglingAnomaly}.
Closing an excitation operator turns it into a generalized-symmetry generator. The same junction-linking phase therefore gives a direct
microscopic diagnosis of the mixed anomaly $A\cup\beta_NB$. We demonstrate
this in a spin chain and in higher-dimensional lattice models, and connect it
to Abelian and non-Abelian gauge theories. A nontrivial phase obstructs
simultaneous condensation, excludes a fully symmetric gapped phase, and
enforces higher-form symmetry fractionalization.


\section{The Bockstein braiding process}
\label{sec:overlap-process}

To make the response in Eq.~\eqref{eq:bockstein-response-intro}
operational, we show that $W_N$ is a closed microscopic statistical process
and identify the fusion junction that it detects. In $d=p+q+1$, the supports
of $X$ and $Y$ overlap along a line, so their ordinary commutator need not be
a robust phase. Instead,
\begin{equation}
    W_N(X,Y)=\bigl((XY)^N\bigr)^{-1}(YX)^N~,
    \label{eq:WN-two-histories}
\end{equation}
compares two alternating histories that close by the $\ZN$ fusion rule.

Choose the orientation of $Y$ so that it raises a lifted $B$-type fusion
label $j$ by one. Along $(YX)^N$, $X$ acts at
$j=0,1,\ldots,N-1$, whereas along $(XY)^N$ it acts at
$j=1,2,\ldots,N$. Locality pairs the contributions at
$1\leq j\leq N-1$; the surviving comparison between $j=0$ and $j=N$
measures the linking of the $A$ excitation with the selected $N$-fold
$B$-fusion junction. This is the microscopic representative of
$\beta_NB$.
The cancellation makes the phase independent of local symmetric
decorations of the open operators. Such decorations only redefine local
configuration states and conjugate the movement operators, leaving a closed
statistical process unchanged \cite{FHH21,kobayashi2024generalized}.
An explicit local cancellation, including its higher-dimensional extension,
is given in Appendix~\ref{app:local-cancellation}.

\begin{figure}
    \centering
    \begin{tikzpicture}[
    scale=1.25,
    edgelabel/.style={font=\large, inner sep=1pt},
    gridline/.style={gray!35, line width=1.5pt},
    squareedge/.style={
        black,
        line width=5.5pt,
        line join=round,
        line cap=round
    },
    edgearrow/.style={
        white,
        line width=1.6pt,
        line cap=round,
        -{Stealth[length=5.5pt,width=4.5pt]}
    }
]

\draw[gridline] (0,0) grid (3,3);

\newcommand{\orientedSquare}[2]{%
    \draw[squareedge]
        (#1,#2) --
        (#1+1,#2) --
        (#1+1,#2+1) --
        (#1,#2+1) --
        cycle;

    \draw[edgearrow] (#1+0.36,#2) -- (#1+0.68,#2);        
    \draw[edgearrow] (#1+0.68,#2+1) -- (#1+0.36,#2+1);    
    \draw[edgearrow] (#1,#2+0.68) -- (#1,#2+0.36);        
    \draw[edgearrow] (#1+1,#2+0.36) -- (#1+1,#2+0.68);    

    \node[edgelabel,below=4pt] at (#1+0.68,#2)     {$X$};
    \node[edgelabel,above=4pt] at (#1+0.32,#2+1)   {$X^{-1}$};
    \node[edgelabel,left=4pt]  at (#1,#2+0.68)     {$Y^{-1}$};
    \node[edgelabel,right=4pt] at (#1+1,#2+0.32)   {$Y$};
}

\orientedSquare{0}{0}
\orientedSquare{1}{1}
\orientedSquare{2}{2}

\end{tikzpicture}
    \caption{
    Configuration-space picture of the Bockstein braiding process for $N=3$.
    The lattice point $(i,j)$ represents the state with $i$ units of the
    $A$-type excitation and $j$ units of the $B$-type excitation, with
    $i,j\in\mathbb Z_N$.  Horizontal and vertical edges correspond to the
    actions of $X$ and $Y$, respectively.  The diagonal staircase compares
    the two alternating histories $(YX)^N$ and $(XY)^N$.
    }
    \label{fig:path}
\end{figure}

\subsection{Relation to one-dimensional fusion statistics}

In $(1{+}1)$D, the Bockstein invariant can also be expressed through the
fusion statistics of Refs.~\cite{CarolynZhangSPTEntangler,kobayashi2024generalized}.
For three ordered points $i<j<k$, let
\begin{equation}
    Z_3^{(\alpha)}
    :=
    \left[
        \left(U_{ij}^{(\alpha)}\right)^N,
        U_{jk}^{(\alpha)}
    \right],
    \qquad \alpha=A,B,AB .
\end{equation}
After resolving the common fusion point into two points $j_-$ and $j_+$ with $i<j_-<j_+<k$, the mixed part of the fusion statistic is
\begin{equation}
    W_N\!\left(U^A_{ij_+},U^B_{kj_-}\right)
    =
    \frac{Z_3^{(AB)}}{Z_3^{(A)}Z_3^{(B)}} .
\end{equation}
Appendix~\ref{app:fusion-bockstein} proves this relation. It is the
one-dimensional analog of extracting mutual braiding from the topological spins of anyons in two dimensions: $B(a_1,a_2)=\theta(a_1a_2)/[\theta(a_1)\theta(a_2)]$.

\subsection{Quantization of the Bockstein braiding phase}

The configuration torus also makes quantization immediate. Translate $W_N$ in Fig.~\ref{fig:path} horizontally through its $N$ possible initial fusion labels. The oriented edges of the translated loops cancel pairwise, while initial-state independence gives the same phase to every process~\cite{kobayashi2024generalized,xue2025statistics}. Hence
\begin{equation}
    W_N(X,Y)^N=1,
    \qquad
    W_N(X,Y)=\exp(2\pi i k/N),\quad k\in\ZN .
    \label{eq:WN-quantization}
\end{equation}
The invariant is also bilinear in the fusion labels,
\begin{equation}
    W_N(X^\alpha,Y^\beta)=W_N(X,Y)^{\alpha\beta}.
    \label{eq:WN-bilinearity-main}
\end{equation}
Appendix~\ref{app:WN-linearity} proves bilinearity. A condensed
$\mathbb Z_{N^2}$ toric code realizes every primitive phase, as constructed
in Appendix~\ref{app:toric-code-realization}.

\subsection{One-dimensional symmetry patches with nontrivial Bockstein braiding}
\label{subsec:one-dimensional-patch-example}

\begin{figure}[t]
    \centering
    \begin{tikzpicture}[
        x=1.2cm,
        y=1.05cm,
        every node/.style={font=\small},
        a_site/.style={circle,draw=black,fill=black,inner sep=1.9pt},
        b_site/.style={circle,draw=black,fill=black,inner sep=1.9pt},
        chain/.style={black!65,line width=0.7pt},
        guide/.style={
            black!55,
            line width=0.6pt,
            dash pattern=on 2.3pt off 2.3pt
        },
        xgate/.style={font=\normalsize\bfseries,text=braidred},
        ygate/.style={font=\normalsize\bfseries,text=braidblue!85!black},
        redgate/.style={braidred,line width=1.35pt,line cap=round},
        bluegate/.style={braidblue!85!black,line width=1.35pt,line cap=round},
        czlabelred/.style={font=\scriptsize\bfseries,text=braidred,inner sep=0.7pt},
        czlabelblue/.style={font=\scriptsize\bfseries,text=braidblue!85!black,inner sep=0.7pt},
        slabel/.style={font=\scriptsize\bfseries,text=braidred,inner sep=0.6pt},
        paneltitle/.style={font=\bfseries},
        formula/.style={font=\small,inner sep=1.3pt}
    ]

    \def\ya{0.55}
    \def\yb{-0.55}
    \def\panelgap{3.75}
    \def\yA{0}
    \def\yB{-\panelgap}


    \newcommand{\tiltedguide}[4]{%
        \draw[guide]
            ($(#1,#3+\yb)!-0.3!(#2,#3+\ya)$)
            --
            ($(#2,#3+\ya)!-0.3!(#1,#3+\yb)$);
        \node[below=12pt] at (#1,#3+\yb) {#4};
    }

    \newcommand{\drawtwolayer}[1]{%
        \fill[grid!65]
            (2,#1+\yb)
            --
            (1.5,#1+\ya)
            --
            (4.5,#1+\ya)
            --
            (5,#1+\yb)
            -- cycle;

        \draw[chain] (-0.85,#1+\ya) -- (8.15,#1+\ya);
        \draw[chain] (-0.65,#1+\yb) -- (8.15,#1+\yb);

        \node[text=braidred!80!black,font=\itshape] at (-1.05,#1+\ya) {$a$};
        \node[text=braidblue!80!black,font=\itshape] at (-1.05,#1+\yb) {$b$};

        \tiltedguide{0}{-0.5}{#1}{$x$}
        \tiltedguide{2}{1.5}{#1}{$y'$}
        \tiltedguide{5}{4.5}{#1}{$x'$}
        \tiltedguide{7}{6.5}{#1}{$y$}

        \node[inner sep=1pt,font=\scriptsize] at (3.5,#1-0.15) {$I\cap J$};

        \begin{pgfonlayer}{foreground}
            \foreach \j in {0,1,...,8} {
                \pgfmathsetmacro{\apos}{\j-0.5}
                \node[a_site] at (\apos,#1+\ya) {};
            }
            \foreach \j in {0,1,...,8} {
                \node[b_site] at (\j,#1+\yb) {};
            }
        \end{pgfonlayer}
    }

    \drawtwolayer{\yA}

    \node[paneltitle,anchor=west] at (-0.75,\yA+1.50) {(a)};
    \node[formula,anchor=west,text=braidred!85!black] at (0.25,\yA+1.50)
    {$
        X_I
        =
        \left(
            \prod_{j=x+1}^{x'} X^a_j
        \right)
        \left(
            \prod_{r=x}^{x'-1}
            S^b_r\,
            \mathrm{CZ}^b_{r,r+1}\,
            S^b_{r+1}
        \right)
    $};

    \foreach \p in {0.5,1.5,2.5,3.5,4.5} {
        \node[xgate] at (\p,\yA+\ya+0.36) {$X$};
    }

    \foreach \r in {0,1,2,3,4} {
        \pgfmathtruncatemacro{\rp}{\r+1}
        \draw[redgate]
            (\r,\yA+\yb-0.08)
            to[out=-65,in=-115]
            (\rp,\yA+\yb-0.08);
        \node[czlabelred] at (\r+0.5,\yA+\yb-0.60) {$\mathrm{CZ}$};
    }

    \node[slabel] at (0,\yA+\yb+0.34) {$S$};
    \foreach \p in {1,2,3,4} {
        \node[slabel] at (\p,\yA+\yb+0.34) {$Z$};
    }
    \node[slabel] at (5,\yA+\yb+0.34) {$S$};

    \drawtwolayer{\yB}

    \node[paneltitle,anchor=west] at (-0.75,\yB+1.50) {(b)};
    \node[formula,anchor=west,text=braidblue!85!black] at (0.25,\yB+1.50)
    {$
        Y_J
        =
        \mathrm{CZ}_{a_{y'+1},b_{y'}}
        \mathrm{CZ}_{a_{y'+1},b_{y'+1}}
        \left(
            \prod_{j=y'+1}^{y} X^b_j
        \right)
        \mathrm{CZ}_{a_{y+1},b_y}
        \mathrm{CZ}_{a_{y+1},b_{y+1}}
    $};

    \foreach \p in {3,4,5,6,7} {
        \node[ygate] at (\p,\yB+\yb-0.3) {$X$};
    }

    \draw[bluegate]
        (2.5,\yB+\ya) --
        node[czlabelblue,pos=0.43,left=3pt] {$\mathrm{CZ}$}
        (2,\yB+\yb);

    \draw[bluegate]
        (2.5,\yB+\ya) --
        node[czlabelblue,pos=0.43,right=3pt] {$\mathrm{CZ}$}
        (3,\yB+\yb);

    \draw[bluegate]
        (7.5,\yB+\ya) --
        node[czlabelblue,pos=0.43,left=3pt] {$\mathrm{CZ}$}
        (7,\yB+\yb);

    \draw[bluegate]
        (7.5,\yB+\ya) --
        node[czlabelblue,pos=0.43,right=3pt] {$\mathrm{CZ}$}
        (8,\yB+\yb);

    \end{tikzpicture}
    \caption{Double-layer representation of the finite patch symmetry
    operators in the one-dimensional spin chain. Each unit cell contains an
    $a$ qubit and a $b$ qubit, with the $a$-qubit lattice shifted by half a
    lattice spacing. \textbf{(a)} The $A$-type patch $X_I$ restricts $U_A$ to
    $I=[x,x']$ and contains the red $X^a$ string and the $b$-layer factors
    $S_r^b\mathrm{CZ}_{r,r+1}^bS_{r+1}^b$. \textbf{(b)} The $B$-type patch
    $Y_J$ contains the onsite $X^b$ string on $J=[y',y]$, dressed by the four
    endpoint controlled-$Z$ gates shown in blue. The shaded parallelogram is
    the staggered overlap $I\cap J=[y',x']$.}
    \label{fig:one-dimensional-patch-operators}
\end{figure}

For a one-dimensional spin chain with two qubits, $a_j$ and $b_j$, per unit cell, take
\begin{equation}
    U_A
    =
    \left(\prod_j X_j^a\right)
    \left(\prod_j Z_j^b\,\mathrm{CZ}_{j,j+1}^b\right),
    \qquad
    U_B=\prod_j X_j^b .
    \label{eq:one-dimensional-symmetry-ZCZ-main}
\end{equation}
The onsite $a$-qubit factor $\prod_jX_j^a$ is auxiliary and does not affect the mixed anomaly, while the factor $\prod_jZ_j^b$ provides a convenient representative for the finite-patch calculation. On a closed chain with an even number of sites,
$\prod_jZ_j^b$ can be removed by a finite-depth change of basis that leaves
$U_B$ invariant, reducing $U_A$ to the familiar
$U_A^{\mathrm{CZ}}
=(\prod_jX_j^a)(\prod_j\mathrm{CZ}_{j,j+1}^b)$ representative.
Thus, the two representatives realize the same mixed anomaly between the
$\prod X$ and $\prod\mathrm{CZ}$ symmetries. The finite-depth circuit and the detailed patch calculations are given in
Appendix~\ref{app:patch-ZCZ-X-anomaly}.

Writing $S=\sqrt Z=\operatorname{diag}(1,i)$, truncate the two global
generators to staggered intervals $I=[x,x']$ and $J=[y',y]$ with
$x<y'<x'<y$. The resulting symmetry-preserving patch operators $X_I$ and
$Y_J$ are shown in Fig.~\ref{fig:one-dimensional-patch-operators}. Their
endpoint $S$ and controlled-$Z$ decorations make the open operators commute
with both global symmetries.
For these patches,
\begin{equation}
    [X_I,U_A]=[X_I,U_B]=[Y_J,U_A]=[Y_J,U_B]=1,
    \qquad
    W_2(X_I,Y_J)=-1 .
    \label{eq:one-dimensional-patch-W2-main}
\end{equation}
Thus the Bockstein word directly diagnoses the mixed anomaly between the
onsite spin-flip and the non-onsite controlled-$Z$ symmetry. Appendix
\ref{app:patch-ZCZ-X-anomaly} gives the complete patch definitions, symmetry
checks, and direct evaluation.

\subsection{Higher-dimensional lattice analogs}
For a $(2{+}1)$D realization, put one qubit on every edge of a triangular
lattice and define
\begin{equation}
    U_0
    =
    \prod_f\prod_{\{e,e'\}\subset\partial f}\mathrm{CZ}_{e,e'},
    \qquad
    U_1(v)=\prod_{e\ni v}X_e ,
    \label{eq:triangular-total-symmetries-main}
\end{equation}
where $f$ runs over triangular faces and $v$ over vertices. Fig.~\ref{fig:triangular-total-symmetries} displays the global $0$-form generator
and an elementary closed $1$-form transformation.

\begin{figure}[t]
    \centering
    \begin{tikzpicture}[
        x=0.92cm,
        y=0.92cm,
        latticeedge/.style={
            black!58,
            line width=0.55pt,
            line cap=round
        },
        edgequbit/.style={
            circle,
            draw=black,
            fill=black,
            inner sep=1.45pt
        },
        czlink/.style={
            braidred!88!black,
            line width=0.95pt,
            line cap=round,
            line join=round
        },
        starlink/.style={
            braidblue!88!black,
            line width=1.65pt,
            line cap=round
        },
        xlabel/.style={
            font=\scriptsize\bfseries,
            text=braidblue!88!black,
            inner sep=0.5pt
        },
        paneltitle/.style={
            font=\normalsize
        }
    ]

    \def\trih{0.8660254}

    \newcommand{\drawtrisymmetrylattice}{%
        \foreach \j in {0,...,4} {
            \foreach \i in {0,...,4} {
                \draw[latticeedge]
                    ({\i+0.5*\j},{\trih*\j})
                    --
                    ({\i+1+0.5*\j},{\trih*\j});
            }
        }
        \foreach \j in {0,...,3} {
            \foreach \i in {0,...,5} {
                \draw[latticeedge]
                    ({\i+0.5*\j},{\trih*\j})
                    --
                    ({\i+0.5*(\j+1)},{\trih*(\j+1)});
            }
            \foreach \i in {1,...,5} {
                \draw[latticeedge]
                    ({\i+0.5*\j},{\trih*\j})
                    --
                    ({\i-1+0.5*(\j+1)},{\trih*(\j+1)});
            }
        }
    }

    \newcommand{\drawtriedgequbits}{%
        \foreach \j in {0,...,4} {
            \foreach \i in {0,...,4} {
                \node[edgequbit] at
                    ({\i+0.5+0.5*\j},{\trih*\j}) {};
            }
        }
        \foreach \j in {0,...,3} {
            \foreach \i in {0,...,5} {
                \node[edgequbit] at
                    ({\i+0.25+0.5*\j},{\trih*(\j+0.5)}) {};
            }
            \foreach \i in {1,...,5} {
                \node[edgequbit] at
                    ({\i-0.25+0.5*\j},{\trih*(\j+0.5)}) {};
            }
        }
    }

    \newcommand{\drawtricznnetwork}{%
        \foreach \j in {0,...,3} {
            \foreach \i in {0,...,4} {
                \draw[czlink]
                    ({\i+0.5+0.5*\j},{\trih*\j})
                    to[bend left=8]
                    ({\i+0.75+0.5*\j},{\trih*(\j+0.5)})
                    ;
                \draw[czlink]
                    ({\i+0.75+0.5*\j},{\trih*(\j+0.5)})
                    to[bend left=8]
                    ({\i+0.25+0.5*\j},{\trih*(\j+0.5)})
                    ;
                \draw[czlink]
                    ({\i+0.25+0.5*\j},{\trih*(\j+0.5)})
                    to[bend left=8]
                    ({\i+0.5+0.5*\j},{\trih*\j});
            }
            \foreach \i in {1,...,5} {
                \draw[czlink]
                    ({\i+0.25+0.5*\j},{\trih*(\j+0.5)})
                    to[bend left=8]
                    ({\i+0.5*\j},{\trih*(\j+1)})
                    ;
                \draw[czlink]
                    ({\i+0.5*\j},{\trih*(\j+1)})
                    to[bend left=8]
                    ({\i-0.25+0.5*\j},{\trih*(\j+0.5)})
                    ;
                \draw[czlink]
                    ({\i-0.25+0.5*\j},{\trih*(\j+0.5)})
                    to[bend left=8]
                    ({\i+0.25+0.5*\j},{\trih*(\j+0.5)});
            }
        }
    }

    \begin{scope}
        \node[paneltitle,anchor=west] at (0,4.08)
            {\textbf{(a)} 0-form symmetry $U_0$};
        \drawtrisymmetrylattice
        \drawtricznnetwork

        \drawtriedgequbits

        \begin{scope}[shift={(1.55,-0.52)}]
            \draw[czlink] (0,0) -- (0.82,0);
            \node[edgequbit] at (0,0) {};
            \node[edgequbit] at (0.82,0) {};
            \node[font=\scriptsize,anchor=west] at (1.02,0)
                {controlled-$Z$ gate};
        \end{scope}
    \end{scope}

    \begin{scope}[shift={(8.05,0)}]
        \node[paneltitle,anchor=west] at (0,4.08)
            {\textbf{(b)} 1-form symmetry $U_1(v)$};
        \drawtrisymmetrylattice

        \draw[starlink] (3.00,{2*\trih}) -- (4.00,{2*\trih});
        \draw[starlink] (3.00,{2*\trih}) -- (2.00,{2*\trih});
        \draw[starlink] (3.00,{2*\trih}) -- (3.50,{3*\trih});
        \draw[starlink] (3.00,{2*\trih}) -- (2.50,{3*\trih});
        \draw[starlink] (3.00,{2*\trih}) -- (2.50,{\trih});
        \draw[starlink] (3.00,{2*\trih}) -- (3.50,{\trih});

        \drawtriedgequbits

        \node[font=\scriptsize,text=braidblue!88!black,
              fill=white,inner sep=0.6pt] at (3.00,{2*\trih}) {$v$};

        \node[xlabel,above=2pt]
            at (3.50,{2*\trih}) {$X$};
        \node[xlabel,above left=2pt]
            at (3.35,{2.5*\trih}) {$X$};
        \node[xlabel,below left=2pt]
            at (2.8,{2.55*\trih}) {$X$};
        \node[xlabel,below=2pt]
            at (2.50,{2*\trih}) {$X$};
        \node[xlabel,below right=2pt]
            at (2.65,{1.5*\trih}) {$X$};
        \node[xlabel,above right=2pt]
            at (3.2,{1.45*\trih}) {$X$};
    \end{scope}

    \end{tikzpicture}
    \caption{Triangular-lattice representation of the
    $\mathbb Z_2^{(0)}\times\mathbb Z_2^{(1)}$ symmetry generators in
    Eq.~\eqref{eq:triangular-total-symmetries-main}. A qubit (black dot) lives
    on every lattice edge. \textbf{(a)} Each triangular face contributes three pairwise controlled-$Z$ gates among its edge qubits, indicated by the red links; the product over all triangular faces is the global 0-form generator $U_0$.
    \textbf{(b)} The elementary closed $1$-form transformation is
    $U_1(v)=\prod_{e\ni v}X_e$; the six blue edges mark its support.}
    \label{fig:triangular-total-symmetries}
\end{figure}

Open disk and line patches of these generators have particle--loop
Bockstein braiding $W_2=-1$. Their explicit construction is given in
Appendix~\ref{app:square-lattice-2d}. The corresponding $(3{+}1)$D model
realizes an anomalous $\ZZ_2^{(0)}\times\ZZ_2^{(2)}$ symmetry and
particle--membrane braiding; see Appendix~\ref{app:cubic-lattice-3d}.

\section{Physical implications}
\label{sec:physical-implications}

In the preceding section, we formulated Bockstein braiding in the symmetry
language by truncating closed symmetry operators to finite patches. Their
boundaries carry the corresponding excitations, and $W_N$ detects the mixed
anomaly of the underlying symmetries. Conversely, placing the excitation operator on a closed support turns it into a generalized-symmetry generator.
An $r$-dimensional excitation couples to a background field of degree $d-r$, and its closed operator generates a $\ZN$ $(d-r-1)$-form symmetry
\cite{Gaiotto:2014kfa,Kapustin:2013uxa}. We now use this identification to
derive constraints on infrared dynamics and to construct continuum
gauge-theory realizations.

\subsection{Consequences of the Bockstein anomaly}

Consider a quantum system in $d=p+q+1$ spatial dimensions with a
$\mathbb Z_N$ $p$-form symmetry and a $\mathbb Z_N$ $q$-form symmetry.
Suppose that these symmetries have the mixed Bockstein anomaly~\eqref{eq:bockstein-response-intro}.
We use the standard completeness property of a local unitary
topological order, also known as remote detectability: a topological
defect that has trivial linking with every genuine operator of
complementary dimension lies in the condensation completion of the
infrared defect theory. Such a condensation defect admits a
topological boundary condition and can therefore be cut open and
capped, up to a local normalization factor~\cite{Theo2022Classification, Shao2023HigherGauging, Cordova:2019bsd}.

\begin{theorem}
\label{thm:no-symmetric-gapped-phase}
No fully gapped phase can preserve both symmetries in the presence of the Bockstein anomaly in Eq.~\eqref{eq:bockstein-response-intro}, with $k\not\equiv0\pmod N$.
\end{theorem}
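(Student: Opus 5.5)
We argue by contradiction. Suppose the system flows to a fully gapped phase preserving both the $\ZN$ $p$-form and $\ZN$ $q$-form symmetries. In the infrared, the closed symmetry generators become topological defects of a local unitary topological order, possibly the trivial one. The plan is to show that both generators can be cut open into finite patches whose ends carry no nontrivial excitation. The Bockstein word $W_N$ built from these patches must then equal $1$. This contradicts $W_N=\exp(2\pi i k/N)\neq 1$, which the mixed anomaly \eqref{eq:bockstein-response-intro} forces. The last point rests on the earlier identification of $W_N$ with the inflow response, together with its invariance under local redefinitions of the open operators (Appendix~\ref{app:local-cancellation}).

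\emph{Step 1: reduce to the case where the symmetry defects are condensation defects.} Consider a closed generator $U_A$ on a $(d-p)$-dimensional cycle, and ask whether it links nontrivially with some genuine infrared operator of complementary dimension. If it does, that operator is charged under the $p$-form symmetry. Charged excitations are then deconfined, and we must argue that this is still incompatible with the anomaly. The cleaner route is to note that a nontrivial linking phase $e^{2\pi i \ell/N}$ of $U_A$ with a genuine defect $D$ lets us redefine the open $A$-operator. Multiplying the generator by (a fractional power of) the topological operator that realizes the same linking cancels this phase. This is the standard statement that, in the gapped symmetric phase, the symmetry acts through the topological order and can be rewritten as a condensation defect times a genuine topological operator.

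I expect this step to be the main obstacle. The correct way to phrase it is via remote detectability. Define the modified generator $\tilde U_A = U_A\, V_A^{-1}$, where $V_A$ is an infrared genuine topological operator with the same braiding as $U_A$. Then $\tilde U_A$ braids trivially with everything, so by the completeness property quoted just before the theorem it is a condensation defect. Such a defect admits a topological boundary and can be cut open, up to local normalization. Do the same for $U_B$, obtaining $\tilde U_B$.

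\emph{Step 2: evaluate $W_N$.} Form the open patches $X,Y$ from $\tilde U_A,\tilde U_B$ and compute $W_N(X,Y)$. Because the patches end on topological boundaries with no excitation left behind, the $4N$-step process acts trivially on configuration space. The junction-linking argument of Section~\ref{sec:overlap-process} then gives a phase of $1$. The modifications by the genuine operators $V_A,V_B$ contribute only ordinary-braiding data. Ordinary braiding is a phase in $d=p+q+2$, so in $d=p+q+1$ it cannot produce a Bockstein term. More concretely, by bilinearity \eqref{eq:WN-bilinearity-main} and locality, $W_N$ of a product factorizes. Moreover, $W_N$ vanishes for genuine topological operators of the infrared theory: their $N$th powers are trivial as operators, so no $N$-fold junction exists to be linked.

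\emph{Step 3: match to the UV.} By the robustness of $W_N$ under local redefinitions and under renormalization-group flow (it is a closed statistical process), its value computed in the UV lattice or field theory, $e^{2\pi i k/N}$, equals its infrared value, which is $1$. Hence $k\equiv0\pmod N$, a contradiction. The delicate part remains Step 1's claim that the $V$-corrections cannot carry a nontrivial Bockstein phase. If the argument above falls short, I would fall back on the inflow: a symmetric gapped TQFT coupled to $A,B$ would trivialize $A\cup\beta_N B$ by an invertible counterterm. However, $A\cup\beta_N B$ is nontrivial in $H^{d+2}$ of the relevant classifying space whenever $k\not\equiv0$.
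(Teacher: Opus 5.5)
There is a genuine gap in Steps 1--2, and it comes from not using the hypothesis where it does the work.

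\textbf{What the hypothesis gives.} A genuine infrared operator of complementary dimension that links nontrivially with the generator is a deconfined charged object, i.e.\ an order parameter. So ``the symmetry is preserved'' means exactly that no such operator exists. The paper uses this directly: $\mathcal D_A$ links trivially with every genuine operator, so remote detectability makes $\mathcal D_A$ itself a condensation defect. It is then cut, capped, and slid off the $N$-fold junction $\mathcal J_B$, contradicting the inflow ratio $Z_{\mathrm{link}}/Z_{\mathrm{unlink}}=e^{2\pi i k/N}$. No $V_A$ ever appears.

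\textbf{Where your argument fails.} You instead allow nontrivial linking and say one ``must argue that this is still incompatible with the anomaly.'' It is not incompatible. The charge-$N$ Higgs phase matches $A\cup\beta_N B$: it is a $\mathbb Z_N$ gauge theory with the electric symmetry broken and the magnetic one fractionalized. Your Steps 1--2 would apply verbatim there, so the claim that the $V$-corrections cannot carry a Bockstein phase must fail, and it does.
\begin{itemize}
\item In that phase $V_A$ is the $\mathbb Z_N$ flux surface, with $V_A^N=1$ as a closed operator. Yet $N$ open flux sheets end on a confined monopole that carries magnetic charge.
\item The cap of the magnetic defect is a charge-$1/N$ Wilson line. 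It braids with each sheet by $e^{2\pi i/N^2}$, so the $N$ sheets give $e^{2\pi i/N}$.
\item This phase sits in the cross term between $V_A$ and $\tilde U_B$, which your factorization never addresses.
\item ``Closed $N$th power trivial, hence no junction'' is a non sequitur. The paper's own patch has $U_B^2=1$ but $Y_J^2=Z^a_{y'+1}Z^a_{y+1}$.
\item Bilinearity concerns powers of a single operator. It says nothing about products of distinct, possibly noncommuting operators, so it does not justify the factorization.
\end{itemize}

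\textbf{Remaining issues once $V_A,V_B$ are dropped.} Step 2 still asserts rather than proves that topologically capped patches give $W_N=1$. Every closed process returns to its configuration; what must be shown is that the cap can be moved off the selected junction at no cost. That is the paper's cut-and-cap step.

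The Higgs example also shows that capping one defect is not enough. There the magnetic symmetry is unbroken and its defect caps, but the cap braids with the electric defects ending on their $N$-fold junction. So this is where preservation of the second symmetry has to be brought in.

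Arguing with $Z_{\mathrm{link}}/Z_{\mathrm{unlink}}$, as the paper does, also removes your Step~3 reliance on the microscopic word $W_N$ surviving RG flow and equalling the inflow coefficient in the infrared.

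Your fallback is circular. A symmetric gapped phase may be a nontrivial symmetry-enriched TQFT and need not trivialize an anomaly by an invertible counterterm. That it cannot do so for $A\cup\beta_N B$ is exactly what the theorem asserts.
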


\begin{proof}
    Suppose, to the contrary, that the infrared theory is fully gapped
    and preserves both symmetries. At long distances it is described
    by a local unitary TQFT. If the TQFT decomposes into several
    superselection sectors, we work in an indecomposable sector
    preserved by both symmetries.

    Let $\mathcal D_A$ be the symmetry defect Poincar\'e dual to $A$.
    Since $A$ is a $(q+1)$-form background field, $\mathcal D_A$
    generates the $\mathbb Z_N$ $q$-form symmetry. Because this
    symmetry is unbroken, the infrared TQFT contains no genuine
    $q$-dimensional topological operator carrying a nontrivial
    $\mathbb Z_N$ charge. Equivalently, every genuine topological
    operator that can link with $\mathcal D_A$ has trivial linking
    phase with it.
    By remote detectability, $\mathcal D_A$ must therefore be a
    condensation defect. Consequently, it admits a topological gapped
    boundary and may be cut open and capped without changing
    normalized correlators.\footnote{For an unbroken $r$-form
    symmetry with $r\leq1$, Proposition~1 of
    Ref.~\cite{Cordova:2019bsd} proves this cut-and-cap statement
    directly. Here we instead use the general
    remote-detectability/condensation-completion property of a local
    TQFT, which does not impose a restriction on $r$.}

    Let $\mathcal J_B$ be the $N$-fold fusion junction represented by
    $\beta_N B$. This is a $q$-dimensional junction operator in the
    sector twisted by the $B$-type symmetry defects, rather than a
    genuine operator of the untwisted infrared theory. Denote by
    $Z_{\mathrm{link}}$ the correlator in which $\mathcal D_A$ links
    once with $\mathcal J_B$, and by $Z_{\mathrm{unlink}}$ the
    otherwise identical unlinked correlator. Anomaly inflow gives
    \begin{equation}
        \frac{Z_{\mathrm{link}}}{Z_{\mathrm{unlink}}}
        =
        \exp\!\left(\frac{2\pi i k}{N}\right)
        \neq 1,
    \end{equation}
    because the two configurations differ by one unit of
    $\int A\cup\beta_N B$.

    Now cut $\mathcal D_A$ open in the same small region, disjoint
    from $\mathcal J_B$, in both correlators. In the linked
    configuration, move the capped end around $\mathcal J_B$ to
    remove the linking and then rejoin the defect. Since the cap is
    topological, this deformation does not change the normalized
    correlator. Moreover, the identical local cap factors in
    $Z_{\mathrm{link}}$ and $Z_{\mathrm{unlink}}$ cancel in their
    ratio. The two resulting configurations are identical, and hence $Z_{\mathrm{link}} = Z_{\mathrm{unlink}}$.
    This contradicts $k\not\equiv0\pmod N$. Therefore, no fully gapped
    phase can preserve both symmetries.
\end{proof}

A system with this anomaly must therefore be gapless
or spontaneously break at least one symmetry; related criteria for
symmetry-enforced gaplessness appear in Ref.~\cite{Hsin:2025ria}.
Moreover, the $q$-dimensional excitation associated with the $p$-form
symmetry carries charge $k/N$ under the $q$-form symmetry, and conversely
\cite{Hsin:2019fhf,Hsin:2025jot,Hsin:2025ido}. Thus one symmetry is
fractionalized when the other is broken. Equivalently, the anomaly obstructs
simultaneous gauging, or simultaneous condensation of the generator
excitations. Extending either symmetry from $\ZN$ to
$\mathbb Z_{N^2}$ trivializes the Bockstein class and removes the
obstruction.

\subsection{Continuum field-theory constructions}

We now describe continuum gauge theories that realize the same Bockstein
anomaly. For concreteness, take $d=3$ and $p=q=1$, so both relevant
symmetries are $\ZN$ 1-form symmetries.

\paragraph{Abelian gauge theory}

A $(3+1)$D $U(1)$ gauge theory coupled to a charge-$N$ Higgs field has an
electric $\ZN$ center 1-form symmetry and a magnetic $\ZN$ subgroup with
the $k=1$ Bockstein anomaly
\cite{Hansson_2004,Benini:2018reh,Hsin:2020nts,Hsin:2025ido}. The excitations
exhibiting this braiding statistic are loops carrying fractional charges and
fluxes. The uncondensed regime is gapless Maxwell theory.
After Higgs condensation, the
infrared theory is $\ZN$ gauge theory: the electric symmetry is broken and the magnetic symmetry is fractionalized on strings carrying flux $2\pi/N$.

\paragraph{Non-Abelian gauge theory}

Likewise, $\mathrm{SU}(N^2)/\mathbb Z_N$ gauge theory with
$N_f$ adjoint Weyl fermions has center and magnetic
$\mathbb Z_N$ 1-form symmetries with the same anomaly, originating from
the nonsplit extension
$1\to\mathbb Z_N\to SU(N^2)\to SU(N^2)/\mathbb Z_N\to1$. This anomaly is discussed in
Refs.~\cite{Benini:2018reh,Hsin:2020nts,Hsin:2025ido}. The excitations
exhibiting this braiding statistic are loops carrying fractional charges and
fluxes. The dynamics depends on the number $N_f$ of adjoint Weyl fermions and is generally unknown, except in a few cases with small $N_f$, where the
theory is believed to be strongly interacting and confining. In these cases,
the infrared dynamics matches the Bockstein anomaly.
The detailed Abelian and non-Abelian phase analyses are
given in Appendix~\ref{app:continuum-gauge-theories}.

\section{Conclusion and outlook}
\label{sec:conclusion}

We introduced Bockstein braiding as a robust mutual statistic for $\ZN$
excitations in the adjacent spatial dimension $d=p+q+1$. The staggered word
$W_N=(Y^{-1}X^{-1})^N(YX)^N$ cancels local ambiguities while retaining the
linking phase between one excitation and the $N$-fold fusion junction of the
other. Together with the fusion-junction interpretation, its quantization and bilinearity match the response $A\cup\beta_NB$.

For open symmetry operators, the statistical invariant measures a mixed generalized-symmetry anomaly.
A nontrivial value obstructs simultaneous condensation and gauging, excludes a fully symmetric gapped phase, and implies the corresponding symmetry fractionalization.
Extensions to non-invertible excitations and higher-group symmetries are natural directions for future work.

\section*{Acknowledgments}

Y.-A.C. is supported by the National Natural Science Foundation of China (Grant No.~12474491) and the Fundamental Research Funds for the Central Universities, Peking University.
P.-S.H. is supported by the Department of Mathematics, King's College London.

We thank Qingrui Wang for raising the question that motivated this work and Ruben Verresen for pointing out the one-dimensional example.
We thank the Croucher Summer School at The Chinese University of Hong Kong for hospitality and for providing a stimulating environment in which this work was initiated. Y.-A.C. also acknowledges the support of the Tsung-Dao Lee Center for Sciences and Arts (TDLCSA) in Dali, where part of this work was developed during the Generalized Symmetry Workshop.

\ifincludeappendices
\appendix

\section{Local cancellation}
\label{app:local-cancellation}

The freedom in choosing microscopic configuration states and
excitation-moving operators, together with the invariance of closed
statistical phases under local changes of these choices, has been studied in
detail in Refs.~\cite{FHH21,kobayashi2024generalized}. We do not repeat the
general arguments here. Instead, we restrict to Abelian defects for which
each fixed-configuration sector is one-dimensional. A local symmetric
unitary that preserves both the excitation configuration and the defect
superselection sector then changes only the representative of the
configuration state and acts within that sector by a phase.

This appendix shows directly how these phases cancel in the Bockstein word
$W_N$, first in the one-dimensional local model and then in its
higher-dimensional extension.

\subsection{Local cancellation in one spatial dimension}
\label{subsec:one-dimensional-cancellation}

The cancellation mechanism is already visible for two pointlike excitation
species in one spatial dimension. Choose four points ordered as in
Fig.~\ref{fig:ordinary-and-bockstein-braiding}\subref{fig:bockstein-braiding}:
\begin{equation}
    x<y'<x'<y .
    \label{eq:interval-model}
\end{equation}
Let $X$ move an $a$ particle from $x$ to $x'$, leaving the opposite charge at
$x$, and let $Y$ move a $b$ particle from $y$ to $y'$, leaving the opposite
charge at $y$. The ordering in Eq.~\eqref{eq:interval-model} is the
one-dimensional local model of the staggered overlap. Starting from the
vacuum, $(YX)^N$ returns the visible excitation configuration to the vacuum
by $\mathbb Z_N$ fusion, and $W_N$ measures its relative phase with the other
staircase $(XY)^N$.

Now decorate $X$ locally at a point $v\in[x,x']$ by
$X\mapsto X'=XO_v$. Because $O_v$ preserves the excitation configuration
and its one-dimensional superselection sector, its action on a configuration
state takes the form
\begin{equation}
    O_v|a\rangle
    =
    e^{i\varphi_v(a)}|a\rangle ,
    \label{eq:local-perturbation-phase}
\end{equation}
where $\varphi_v(a)$ depends only on the configuration near $v$.

Denote a
typical intermediate configuration by $\OverlapWord{i}{j}$, meaning that $i$
units of $a$ charge and $j$ units of $b$ charge have been transported across
the overlap, with $i,j\in\ZN$.

In the history $(YX)^N$, the perturbation contributes
\begin{equation}
    \Delta_{YX}
    =
    \sum_{i=0}^{N-1}
    \varphi_v\left(\OverlapWord{i}{i}\right).
    \label{eq:phase-shift-YX}
\end{equation}
In the history $(XY)^N$, it contributes
\begin{equation}
    \Delta_{XY}
    =
    \sum_{i=0}^{N-1}
    \varphi_v\left(\OverlapWord{i}{i+1}\right).
    \label{eq:phase-shift-XY}
\end{equation}
We claim that $\Delta_{YX}-\Delta_{XY}=0$.  The term-by-term pairing depends
on the position of $v$.  If $v\neq y'$, locality cannot see the change of
$b$ charge at $y'$, so
\begin{equation}
    \varphi_v\left(\OverlapWord{i}{i}\right)
    =
    \varphi_v\left(\OverlapWord{i}{i+1}\right).
    \label{eq:first-local-pairing}
\end{equation}
If instead $v\notin\{x,x'\}$, locality cannot see the change of $a$ charge at
the endpoints $x,x'$, so
\begin{equation}
    \varphi_v\left(\OverlapWord{i}{i}\right)
    =
    \varphi_v\left(\OverlapWord{i-1}{i}\right).
    \label{eq:second-local-pairing}
\end{equation}
After reindexing $i$, Eq.~\eqref{eq:second-local-pairing} again cancels the
two sums.  For every $v\in[x,x']$, at least one of
Eqs.~\eqref{eq:first-local-pairing} and \eqref{eq:second-local-pairing}
applies: the endpoints $x,x'$ are handled by the first pairing, while the
special point $y'$ is handled by the second.  Hence
\begin{equation}
    \langle\Omega|W_N(X,Y)|\Omega\rangle
    =
    \langle\Omega|W_N(X',Y)|\Omega\rangle .
    \label{eq:local-robustness-vacuum}
\end{equation}
The same argument applies to local perturbations of $Y$.
It also does not use any special property of the vacuum; adding an arbitrary initial background configuration to every intermediate state leaves the same local pairings.

\subsection{Higher-dimensional local cancellation}
\label{subsec:loop-loop-cancellation}

The one-dimensional cancellation is the local model in all dimensions.  Near a
generic point of $\supp(X)\cap\supp(Y)$, the two supports overlap along a line,
while the transverse directions play no role.  Thus any local perturbation sees
the same staggered sequence analyzed in
Appendix~\ref{subsec:one-dimensional-cancellation}.

For concreteness, consider two loop excitations in three spatial dimensions.
The operators $X$ and $Y$ are supported on two disks that intersect cleanly
along a line segment, shown in Fig.~\ref{fig:ordinary-and-bockstein-braiding}\subref{fig:bockstein-braiding}.  A local intermediate configuration may be denoted by
$\LinkedEllipses{i}{j}$, with the $X$ loop drawn vertically and the $Y$ loop
horizontally.  Under a local perturbation $X\mapsto XO_v$, the change in the
accumulated phase is
\begin{equation}
    \delta_v\Theta
    =
    \sum_{i=0}^{N-1}\varphi_v\left(\LinkedEllipses{i}{i}\right)
    -
    \sum_{i=0}^{N-1}\varphi_v\left(\LinkedEllipses{i}{i+1}\right).
    \label{eq:loop-local-phase-shift}
\end{equation}
Depending on the position of $v$, locality pairs terms either at fixed $i$ or
after the cyclic shift $i\mapsto i-1$.  Since the sum runs over the full
$\ZN$ orbit, both pairings give $\delta_v\Theta=0$.
Therefore,
\begin{equation}
    \langle a|W_N(X,Y)|a\rangle
    =
    \langle a|W_N(XO_v,Y)|a\rangle~,
    \label{eq:higher-dimensional-local-robustness}
\end{equation}
for any initial configuration $|a\rangle$.  The same statement holds for local
perturbations of $Y$.
Thus, the cancellation applies whenever $d=p+q+1$.\footnote{
The particle-membrane statistics for $N=2$ in $(3{+}1)$D was previously studied in
Ref.~\cite{kobayashi2024generalized}.  In Appendix~\ref{app:eq41}, we show
that the process introduced there is equivalent to the present Bockstein braiding process.
}

\section{Relation between Bockstein braiding statistics and the fusion statistics in $(1{+}1)$D}
\label{app:fusion-bockstein}

In this appendix, we relate the Bockstein braiding process to the mixed part of the fusion statistics in one spatial dimension. 
We follow the convention used in Ref.~\cite{kobayashi2024generalized}.  If $U(s)$ is a hopping operator along the segment $s$ and
$a$ is an excitation configuration, then
\begin{equation}
    U(s)|a\rangle
    =
    e^{i\theta_s(a)}
    |a+\partial s\rangle .
    \label{eq:B-theta-convention}
\end{equation}
Thus, $\theta_s(a)$ is the phase obtained by applying $U(s)$ to the configuration
$a$, and $\partial s$ is the change of configuration produced by $U(s)$.  If a word of operators is closed, its phase is obtained by summing the
corresponding $\theta$'s along the intermediate configurations; inverse
operators contribute with the opposite sign.

We also use locality identities.
A typical locality identity is a nested commutator
\begin{equation}
    [U(s_n),[U(s_{n-1}),\cdots,[U(s_2),U(s_1)]\cdots]]=1~,
\end{equation}
whenever the common intersection of the supports is empty:
\begin{equation}
    \supp(s_1)\cap\supp(s_2)\cap\cdots\cap\supp(s_n)=\emptyset .
\end{equation}
In that case the nested commutator is forced to be the identity operator.
Equivalently, after evaluating it on any configuration state and expanding
each unitary action in terms of its phase \(\theta_s(a)\), the corresponding
integer linear combination of \(\theta\)'s must vanish modulo \(2\pi\).
These vanishing combinations are the locality identities that we quotient by.
We write $P\sim_{\loc}P'$ when the two expressions differ
by such zero sums.  Equivalently, they give the same statistical phase.

Let $i<j<k$, and let $U^{(\alpha)}_{v_1 v_2}$ denote the hopping operator that
moves a particle of type $\alpha$ from $v_1$ to $v_2$.  We use the commutator
convention
\begin{equation}
    [P,Q]=P^{-1}Q^{-1}PQ .
\end{equation}
For each particle type $\alpha\in\{A,B,AB\}$, define the $(1{+}1)$D fusion
statistic $Z_3^{(\alpha)}$, following
Ref.~\cite{kobayashi2024generalized}, by
\begin{equation}
    Z_3^{(\alpha)}
    =
    \left[
        \left(U^{(\alpha)}_{ij}\right)^N,
        U^{(\alpha)}_{jk}
    \right].
    \label{eq:B-fusion-stat-def}
\end{equation}
The mixed part is
\begin{equation}
    Z_{\mathrm{mix}}
    :=
    \frac{Z_3^{(AB)}}{Z_3^{(A)}Z_3^{(B)}} .
    \label{eq:B-Zpol-def}
\end{equation}
Here, ``mixed'' means that we take the fusion statistics of the composite
$AB$ particle and divide out the pure $A$ and pure $B$ fusion statistics.

\begin{figure}[tbp]
\centering
\begin{tikzpicture}[
    x=2cm,
    y=1.25cm,
    >=Stealth,
    every node/.style={font=\small}
]
    \coordinate (i)  at (0,0);
    \coordinate (jm) at (2,0);
    \coordinate (jp) at (3.2,0);
    \coordinate (k)  at (5.2,0);

    \draw[black!75, line width=0.7pt] (i) -- (k);

    \draw[black!75, line width=1pt, dashed, <-]
    (i) -- ++(0,1.4) -- node[above=1pt] {$\qquad O_A=U^A_{ki},\ O_B=U^B_{ki}$} ++(5.2,0) -- (k);

    \foreach \p/\lab in {i/$i$, jm/$j_-$, jp/$j_+$, k/$k$} {
        \fill (\p) circle (2pt);
        \node[below=4pt] at (\p) {\lab};
    }

    \draw[red!75!black, very thick, ->]
        (i) .. controls (0.8,0.85) and (2.35,0.85) ..
        node[above] {$X=U^A_{i j_+}$} (jp);

    \draw[red!75!black, very thick, ->]
        (jp) .. controls (3.75,0.65) and (4.65,0.65) ..
        node[above] {$R=U^A_{j_+ k}$} (k);

    \draw[blue!70!black, very thick, ->]
        (i) .. controls (0.55,-0.65) and (1.45,-0.65) ..
        node[below] {$L=U^B_{i j_-}$} (jm);

    \draw[blue!70!black, very thick, ->]
        (k) .. controls (4.3,-0.95) and (2.9,-0.95) ..
        node[below] {$Y=U^B_{k j_-}$} (jm);
\end{tikzpicture}
\caption{
Resolved representative of the \(AB\)-particle hopping operators near the
fusion point on the compactified one-dimensional geometry. The fusion point
\(j\) is split into \(j_-\) and \(j_+\), and the dashed outer arc indicates the
additional edge connecting \(k\) back to \(i\). The \(A\)-type hopping operators
\(X=U^A_{i j_+}\) and \(R=U^A_{j_+ k}\) are drawn above the interval, while the
\(B\)-type hopping operators \(L=U^B_{i j_-}\) and \(Y=U^B_{k j_-}\) are drawn
below it. The orientation of \(Y\) is right-to-left, matching the Bockstein
braiding convention.
}
\label{fig:resolved-ab-hopping}
\end{figure}

We now resolve the fusion point $j$ into two nearby points
\begin{equation}
    i<j_-<j_+<k .
    \label{eq:B-resolved-fusion-point}
\end{equation}
We choose the following local representative of the $AB$ hopping operators:
\begin{equation}
    U^{(AB)}_{ij}
    :=
    U^B_{ij_-}U^A_{ij_+},
    \qquad
    U^{(AB)}_{jk}
    :=
    U^B_{j_-k}U^A_{j_+k}.
    \label{eq:B-resolved-AB-hopping}
\end{equation}
For compactness, define
\begin{equation}
    L:=U^B_{ij_-},
    \qquad
    X:=U^A_{ij_+},
    \qquad
    Y:=U^B_{kj_-},
    \qquad
    R:=U^A_{j_+k}.
    \label{eq:B-LXYR-def}
\end{equation}
Thus $Y$ is the right-to-left $B$ hopping operator, matching the Bockstein
braiding convention in the main text.  We also write
\begin{equation}
    \widetilde Y:=Y^{-1}=U^B_{j_-k}.
    \label{eq:B-Ytilde-def}
\end{equation}
Then
\begin{equation}
    U^{(AB)}_{ij} = LX,
    \qquad
    U^{(AB)}_{jk} = \widetilde Y R .
    \label{eq:B-AB-LX-YtildeR}
\end{equation}
The resolved geometry is a circle: besides the displayed segments between $i,j_-,j_+$, and $k$, there is an outside edge from
$k$ back to $i$.  Let
\begin{equation}
    O_A:=U^A_{ki},
    \qquad
    O_B:=U^B_{ki}.
    \label{eq:B-outside-operators}
\end{equation}
Set
\begin{equation}
    l:=\partial L,\quad
    x:=\partial X,\quad
    y:=\partial Y,\quad
    \widetilde y:=\partial\widetilde Y=-y,\quad
    r:=\partial R,\quad
    o_A:=\partial O_A,\quad
    o_B:=\partial O_B .
    \label{eq:B-boundary-symbols}
\end{equation}
The local charge changes are summarized in Table~\ref{tab:B-fusion-boundaries}.
The outside edges close the circle, giving
\begin{equation}
    x+r+o_A=0,
    \qquad
    l+\widetilde y+o_B=0.
    \label{eq:B-circle-relations}
\end{equation}

\begin{table}[t]
    \centering
    \begin{tabular}{c|cccc}
        \toprule
        boundary & $i$ & $j_-$ & $j_+$ & $k$ \\
        \midrule
        $l=\partial U^B_{ij_-}$
            & $-B$ & $+B$ & $0$ & $0$ \\
        $\widetilde y=\partial U^B_{j_-k}$
            & $0$ & $-B$ & $0$ & $+B$ \\
        $y=\partial U^B_{kj_-}=-\widetilde y$
            & $0$ & $+B$ & $0$ & $-B$ \\
        $o_B=\partial U^B_{ki}$
            & $+B$ & $0$ & $0$ & $-B$ \\
        \midrule
        $x=\partial U^A_{ij_+}$
            & $-A$ & $0$ & $+A$ & $0$ \\
        $r=\partial U^A_{j_+k}$
            & $0$ & $0$ & $-A$ & $+A$ \\
        $o_A=\partial U^A_{ki}$
            & $+A$ & $0$ & $0$ & $-A$ \\
        \bottomrule
    \end{tabular}
    \caption{
        Boundary conventions for the resolved circle.  A plus sign means that
        the corresponding hopping operator creates that particle at the vertex,
        and a minus sign means that it creates the opposite charge.
    }
    \label{tab:B-fusion-boundaries}
\end{table}

If $P$ is a closed word of hopping operators and $a$ is an
initial configuration, we write $(P,a)$ for the additive phase accumulated by applying $P$ to $|a\rangle$.  Thus,
\[
    \langle a|U(P)|a\rangle
    =
    \exp\bigl(i(P,a)\bigr).
\]
Equivalently, \((P,a)\) is obtained by summing the \(\theta\)-phases of all
operator steps along the process, with a minus sign for inverse operators.
All such phases are understood modulo \(2\pi\).  In particular,
\[
    (Z_3^{(A)},0)
\]
means the additive phase of the fusion-statistics process \(Z_3^{(A)}\)
evaluated in the vacuum (zero) initial configuration.

We first expand the fusion statistics.  Since
\[
    Z_3^{(A)}=[X^N,R]=(X^N)^{-1}R^{-1}X^NR ,
\]
we evaluate the word from right to left.  Starting from the vacuum
configuration \(0\), the operator \(R\) first shifts the configuration by
\(r=\partial R\).  Then the \(N\) applications of \(X\) contribute
\[
    \sum_{m=0}^{N-1}\theta_X(r+mx).
\]
Afterwards \(R^{-1}\) returns the \(R\)-endpoint, and the inverse word
\((X^N)^{-1}\) contributes
\[
    -\sum_{m=0}^{N-1}\theta_X(mx).
\]
The phases from \(R\) and \(R^{-1}\) cancel because the process is closed.
Therefore
\begin{equation}
    (Z_3^{(A)},0)
    =
    \sum_{m=0}^{N-1}
    \left[
        \theta_X(r+mx)-\theta_X(mx)
    \right].
    \label{eq:B-ZA-theta-expanded}
\end{equation}
Similarly, since \(Z_3^{(B)}=[L^N,\widetilde Y]\), we obtain
\begin{equation}
    (Z_3^{(B)},0)
    =
    \sum_{m=0}^{N-1}
    \left[
        \theta_L(\widetilde y+ml)-\theta_L(ml)
    \right].
    \label{eq:B-ZB-theta-expanded}
\end{equation}
For the composite particle,
\begin{equation}
    Z_3^{(AB)}
    =
    \left[(LX)^N,\widetilde YR\right].
\end{equation}
One copy of $LX$, applied to a configuration $a$, contributes
$\theta_X(a)+\theta_L(a+x)$, because $X$ acts first and then $L$.  The
operator $\widetilde YR$ shifts the configuration by $\widetilde y+r$.
Therefore
\begin{equation}
\begin{aligned}
    (Z_3^{(AB)},0)
    =
    \sum_{m=0}^{N-1}
    \Big[
        &\theta_X(\widetilde y+r+m(l+x))
        -\theta_X(m(l+x)) 
        +\theta_L(\widetilde y+r+m(l+x)+x)
        -\theta_L(m(l+x)+x)
    \Big].
\end{aligned}
\label{eq:B-ZAB-theta-expanded}
\end{equation}
Subtracting Eqs.~\eqref{eq:B-ZA-theta-expanded} and
\eqref{eq:B-ZB-theta-expanded} from Eq.~\eqref{eq:B-ZAB-theta-expanded}, the
mixed expression is
\begin{equation}
\begin{aligned}
    &(Z_3^{(AB)},0)-(Z_3^{(A)},0)-(Z_3^{(B)},0) 
    \\&=
    \sum_{m=0}^{N-1}
    \Big[
        \theta_X(\widetilde y+r+m(l+x))
        -\theta_X(m(l+x))
        -\theta_X(r+mx)
        +\theta_X(mx) \\
        & \qquad +\theta_L(\widetilde y+r+m(l+x)+x)
        -\theta_L(m(l+x)+x)
        -\theta_L(\widetilde y+ml)
        +\theta_L(ml)
    \Big].
\end{aligned}
\label{eq:B-epol-theta-expanded}
\end{equation}
To simplify this expression, it is useful to introduce the commutator density
\begin{equation}
    K(P,Q;a)
    :=
    \theta_P(a)+\theta_Q(a+\partial P)
    -\theta_P(a+\partial Q)-\theta_Q(a).
    \label{eq:B-K-density-def}
\end{equation}
This is the expression for the phase of the commutator $[Q,P]$ in the initial configuration $a$.
For example, if $\supp(P)\cap\supp(Q)=\emptyset$, then locality gives $K(P,Q;a)=0$ modulo $2\pi$ for every $a$.

We will also use the following nested locality identity.  If
\begin{equation}
    \supp(P)\cap\supp(Q)\cap\supp(O)=\emptyset,
\end{equation}
then the nested commutator $[O,[Q,P]]$ is the identity.  In terms of
$\theta$'s, this says
\begin{equation}
    K(P,Q;a+\partial O)-K(P,Q;a) = 0 \pmod{2\pi}.
    \label{eq:B-nested-locality-shift}
\end{equation}
This is the basic rule that lets us shift the background configuration of a
commutator density.

Using $K$, the fusion statistics are
\begin{align}
    (Z_3^{(A)},0)
    &=
    \sum_{m=0}^{N-1}K(R,X;mx), \label{eq:B-ZA-K-form}\\
    (Z_3^{(B)},0)
    &=
    \sum_{m=0}^{N-1}K(\widetilde Y,L;ml), \label{eq:B-ZB-K-form}\\
    (Z_3^{(AB)},0)
    &=
    \sum_{m=0}^{N-1}K(\widetilde YR,LX;m(l+x)). \label{eq:B-ZAB-K-form}
\end{align}
The product in the last line expands algebraically as
\begin{equation}
\begin{aligned}
    K(\widetilde YR,LX;a)
    =
    &K(R,X;a)
    +K(R,L;a+x)
    +K(\widetilde Y,X;a+r)
    +K(\widetilde Y,L;a+r+x).
\end{aligned}
\label{eq:B-K-product-expansion}
\end{equation}
This identity follows by expanding both sides in $\theta$'s, using that
$LX$ means $X$ acts first and then $L$, while $\widetilde YR$ means $R$ acts
first and then $\widetilde Y$.

Substituting $a=m(l+x)$ in Eq.~\eqref{eq:B-K-product-expansion}, we compare
with Eqs.~\eqref{eq:B-ZA-K-form} and \eqref{eq:B-ZB-K-form}.  The first term
is the pure $A$ fusion statistic:
\begin{equation}
    \sum_{m=0}^{N-1}K(R,X;m(l+x))
    \sim_{\loc}
    \sum_{m=0}^{N-1}K(R,X;mx)
    =
    (Z_3^{(A)},0).
    \label{eq:B-A-only-removal}
\end{equation}
Here we used Eq.~\eqref{eq:B-nested-locality-shift}: the support of $L$ has
empty triple intersection with $\supp(R)\cap\supp(X)$, so shifting by
multiples of $l$ is a locality identity.

Similarly, the fourth term is the pure $B$ fusion statistic.  The supports of
$\widetilde Y$ and $L$ meet at $j_-$, while $R$ and $O_A$ avoid this point.
Thus nested locality allows shifts by $r$ and $o_A$.  Using $x+r+o_A=0$,
\begin{equation}
    m(l+x)+r+x
    =
    ml-mr-(m+1)o_A ,
\end{equation}
so
\begin{equation}
    \sum_{m=0}^{N-1}K(\widetilde Y,L;m(l+x)+r+x)
    \sim_{\loc}
    \sum_{m=0}^{N-1}K(\widetilde Y,L;ml)
    =
    (Z_3^{(B)},0).
    \label{eq:B-B-only-removal}
\end{equation}
The second term is itself a locality identity, because
$\supp(R)\cap\supp(L)=\emptyset$:
\begin{equation}
    K(R,L;a) = 0 \pmod{2\pi},
    \qquad
    \text{for every }a .
    \label{eq:B-RL-locality}
\end{equation}
After subtracting the pure $A$ and $B$ pieces, the only surviving mixed term is
\begin{equation}
    (Z_3^{(AB)},0) - (Z_3^{(A)},0)- (Z_3^{(B)},0)
    \sim_{\loc}
    \sum_{m=0}^{N-1}
    K(\widetilde Y,X;m(l+x)+r).
    \label{eq:B-surviving-mixed-K}
\end{equation}
We now put this in the standard Bockstein staircase form.  The outside edges
$O_A$ and $O_B$ have empty triple intersection with
$\supp(X)\cap\supp(\widetilde Y)$, so we may shift by $o_A$ and $o_B$.
Using $o_A=-x-r$ and $o_B=-l-\widetilde y$,
\begin{equation}
\begin{aligned}
    K(\widetilde Y,X;m(l+x)+r)
    &\sim_{\loc}
    K(\widetilde Y,X;m(l+x)+r+mo_B+o_A) 
    =
    K(\widetilde Y,X;(m-1)x-m\widetilde y).
\end{aligned}
\label{eq:B-mixed-shift}
\end{equation}
Since $\widetilde y=-y$, this gives
\begin{equation}
    (Z_3^{(AB)},0) - (Z_3^{(A)},0)- (Z_3^{(B)},0)
    \sim_{\loc}
    \sum_{m=0}^{N-1}
    K(\widetilde Y,X;(m-1)x+my).
    \label{eq:B-mixed-diagonal}
\end{equation}
Finally, convert from $\widetilde Y=Y^{-1}$ to $Y$.  Since $Y$ shifts by $y$,
we have
\begin{equation}
    \theta_{\widetilde Y}(a)
    =
    -\theta_Y(a-y).
\end{equation}
Set
\begin{equation}
    b_m:=(m-1)(x+y).
\end{equation}
Then, $(m-1)x+my=b_m+y$.  Expanding
Eq.~\eqref{eq:B-mixed-diagonal} gives
\begin{equation}
\begin{aligned}
    (Z_3^{(AB)},0) - (Z_3^{(A)},0)- (Z_3^{(B)},0)
    \sim_{\loc}
    \sum_{m=0}^{N-1}
    \Big[
        \theta_X(b_m)
        +\theta_Y(b_m+x)
        -\theta_Y(b_m)
        -\theta_X(b_m+y)
    \Big].
\end{aligned}
\label{eq:B-mixed-as-W}
\end{equation}
After the cyclic reindexing $n=m-1$, the right-hand side is exactly the
expression for
\begin{equation}
    W_N(X,Y)=(Y^{-1}X^{-1})^N(YX)^N.
\end{equation}
Indeed, the staircase $(YX)^N$ contributes
\[
    \sum_{n=0}^{N-1}
    \left[
        \theta_X(n(x+y))
        +
        \theta_Y(n(x+y)+x)
    \right],
\]
while the inverse staircase contributes
\[
    -\sum_{n=0}^{N-1}
    \left[
        \theta_Y(n(x+y))
        +
        \theta_X(n(x+y)+y)
    \right].
\]
Therefore
\begin{equation}
    (Z_3^{(AB)},0) - (Z_3^{(A)},0)- (Z_3^{(B)},0)
    \sim_{\loc}
    (W_N(X,Y),0).
    \label{eq:B-epol-equals-W}
\end{equation}
Exponentiating gives
\begin{equation}
    \frac{Z_3^{(AB)}}{Z_3^{(A)}Z_3^{(B)}}
    =
    W_N(X,Y)
    =
    W_N\!\left(U^A_{ij_+},U^B_{kj_-}\right)
    .
    \label{eq:B-fusion-polarization-main}
\end{equation}
Equivalently, if one uses the left-to-right operator
\[
    \widetilde Y=U^B_{j_-k}=Y^{-1}
\]
as the second argument of the Bockstein process, then the same identity is
\begin{equation}
    \frac{Z_3^{(AB)}}{Z_3^{(A)}Z_3^{(B)}}
    =
    W_N\!\left(U^A_{ij_+},U^B_{j_-k}\right)^{-1}.
\end{equation}
Thus the apparent inverse is only an orientation convention for the
\(B\)-hopping operator.  With the main-text convention
\(Y=U^B_{kj_-}\), the mixed part of the fusion statistics is exactly the
Bockstein braiding statistic.

\section{Linearity of Bockstein braiding in the fusion label}
\label{app:WN-linearity}

In this appendix, we prove that Bockstein braiding is linear in the fusion
label of either excitation. It suffices to prove linearity in the first
argument:
\begin{equation}
    W_N(X^\alpha,Y)
    =
    W_N(X,Y)^\alpha,
    \qquad
    \alpha\in\ZN.
    \label{eq:C-WN-linearity-goal}
\end{equation}
The proof for the second argument follows by exchanging the two excitation
types. The case $\alpha=0$ is immediate. Henceforth, we choose the integer
representative $\alpha\in\{1,\ldots,N-1\}$.

Although we formulate the proof using the resolved one-dimensional geometry
of Fig.~\ref{fig:resolved-ab-hopping}, the reduction is local and applies in
every dimension $d=p+q+1$. Indeed, the $(p+1)$- and $(q+1)$-dimensional
supports of $X$ and $Y$ intersect cleanly along a one-dimensional locus. A
sufficiently small tubular neighborhood of this locus factorizes locally into
the overlap direction and transverse disks. The transverse directions are not relevant: after suppressing them, the restricted supports have precisely
the resolved-circle incidence pattern used below. Moreover, the proof depends
only on the $\ZN$ fusion labels and on locality relations asserting that
certain multiple intersections of supports are empty. These properties are
unchanged upon adjoining or removing the transverse spectator directions.
Consequently, the resolved-circle proof establishes
Eq.~\eqref{eq:C-WN-linearity-goal} for arbitrary $p$ and $q$.

We therefore work in this local normal form. Choose ordered points $i<j<k$
on the resolved circle, with $j_+$ and $j_-$ denoting the two point-split
copies of the overlap point, and define
\begin{equation}
    X=U^A_{ij_+},
    \quad
    Y=U^B_{kj_-},
    \quad
    R=U^A_{j_+k},
    \quad
    O_A=U^A_{ki},
    \quad
    L=U^B_{ij_-},
    \quad
    O_B=U^B_{ki}.
    \label{eq:C-A-operators}
\end{equation}
Set
\begin{equation}
    x:=\partial X,\qquad
    y:=\partial Y,\qquad
    r:=\partial R,\qquad
    o_A:=\partial O_A,\qquad
    l:=\partial L,\qquad
    o_B:=\partial O_B .
    \label{eq:C-boundaries}
\end{equation}
The circle closes as
\begin{equation}
    x+r+o_A=0,
    \quad
    l-y+o_B=0.
    \label{eq:C-circle-relations}
\end{equation}
For two operators $P,Q$, Eq.~\eqref{eq:B-K-density-def} defines
\begin{equation}
    K(P,Q;a)
    :=
    \theta_P(a)+\theta_Q(a+\partial P)
    -\theta_Q(a)-\theta_P(a+\partial Q).
    \label{eq:C-K-def}
\end{equation}
Therefore,
\begin{equation}
    (W_N(P,Q),0)
    =
    \sum_{m=0}^{N-1}
    K(P,Q;m(\partial P+\partial Q)).
    \label{eq:C-WN-K-form}
\end{equation}
For a configuration shift $a'$, write
\begin{equation}
    (T_{a'} K)(P,Q;a):=K(P,Q;a+a').
\end{equation}
We first prove the key locality identity used below:
\begin{equation}
    (T_x-1)(T_y-1)K(X,Y;a) = 0 \pmod{2\pi}.
    \label{eq:C-mixed-second-difference}
\end{equation}
\begin{proof}
The outside operators $O_A$ and $O_B$ have empty triple intersection with
$\supp(X)\cap\supp(Y)$.  Therefore nested locality gives
\begin{equation}
    (T_{o_A}-1)K(X,Y;a)= 0 \pmod{2\pi},
    \quad
    (T_{o_B}-1)K(X,Y;a)= 0 \pmod{2\pi} .
    \label{eq:C-outside-shifts}
\end{equation}
The same is true with $-o_A$ and $-o_B$, using the inverse operators.
Next, the following four
supports have empty total intersection:
\begin{equation}
    \supp(X)\cap\supp(Y)\cap\supp(R)\cap\supp(L)=\emptyset .
    \label{eq:C-fourfold-empty}
\end{equation}
Thus the nested commutator involving these four operators,
$[L,[R^{-1},[Y,X]]]$, is a locality identity.
In terms of the $K$ expression, this gives
\begin{equation}
    (T_l-1)(T_{-r}-1)K(X,Y;a)= 0 \pmod{2\pi} .
    \label{eq:C-LR-second-difference}
\end{equation}
Using Eq.~\eqref{eq:C-circle-relations},
\begin{equation}
    T_x-1
    =
    T_{-r-o_A}-1
    =
    T_{-r}(T_{-o_A}-1)+(T_{-r}-1),
    \label{eq:C-Tx-decomposition}
\end{equation}
and
\begin{equation}
    T_y-1
    =
    T_{l+o_B}-1
    =
    T_l(T_{o_B}-1)+(T_l-1).
    \label{eq:C-Ty-decomposition}
\end{equation}
Multiplying these two identities and applying them to $K(X,Y;a)$, all terms
containing $T_{-o_A}-1$ or $T_{o_B}-1$ become $0$ by
Eq.~\eqref{eq:C-outside-shifts}.  The only remaining term is
\[
    (T_{-r}-1)(T_l-1)K(X,Y;a),
\]
which is also $0$ by Eq.~\eqref{eq:C-LR-second-difference}.  This proves
Eq.~\eqref{eq:C-mixed-second-difference}.
\end{proof}

We now expand $W_N(X^\alpha,Y)$.  Let
\begin{equation}
    X_\alpha:=X^\alpha,
    \qquad
    \partial X_\alpha=\alpha x .
\end{equation}
We represent $X_\alpha$ by applying $\alpha$ copies of $X$ successively:
\begin{equation}
    \theta_{X_\alpha}(a)
    =
    \sum_{\lambda=0}^{\alpha-1}
    \theta_X(a+\lambda x).
    \label{eq:C-theta-Xalpha}
\end{equation}
Then
\begin{equation}
\begin{aligned}
    K(X_\alpha,Y;a)
    &=
    \theta_{X_\alpha}(a)
    +\theta_Y(a+\alpha x)
    -\theta_Y(a)
    -\theta_{X_\alpha}(a+y) \\
    &=
    \sum_{\lambda=0}^{\alpha-1}
    \Big[
        \theta_X(a+\lambda x)
        +\theta_Y(a+(\lambda+1)x)
        -\theta_Y(a+\lambda x)
        -\theta_X(a+y+\lambda x)
    \Big] \\
    &=
    \sum_{\lambda=0}^{\alpha-1}
    K(X,Y;a+\lambda x).
\end{aligned}
\label{eq:C-K-Xalpha-expansion}
\end{equation}
Therefore
\begin{equation}
\begin{aligned}
    (W_N(X^\alpha,Y),0)
    &=
    \sum_{m=0}^{N-1}
    K(X_\alpha,Y;m(\alpha x+y)) 
    =
    \sum_{m=0}^{N-1}
    \sum_{\lambda=0}^{\alpha-1}
    K(X,Y;m(\alpha x+y)+\lambda x).
\end{aligned}
\label{eq:C-WN-Xalpha-expanded}
\end{equation}
On the other hand,
\begin{equation}
    \alpha(W_N(X,Y),0)
    =
    \sum_{m=0}^{N-1}
    \sum_{\lambda=0}^{\alpha-1}
    K(X,Y;m(x+y)).
    \label{eq:C-alpha-WN-expanded}
\end{equation}
Thus
\begin{equation}
\begin{aligned}
    \Delta_\alpha
    &:=
    (W_N(X^\alpha,Y),0)-\alpha(W_N(X,Y),0) \\
    &=
    \sum_{m=0}^{N-1}
    \sum_{\lambda=0}^{\alpha-1}
    \left[
        K(X,Y;m(\alpha x+y)+\lambda x)
        -
        K(X,Y;m(x+y))
    \right].
\end{aligned}
\label{eq:C-Delta-alpha}
\end{equation}
It remains to show that $\Delta_\alpha$ is an integer combination of mixed
second differences.  Introduce formal translation variables
\begin{equation}
    \mathsf A:=T_x,
    \qquad
    \mathsf B:=T_y .
\end{equation}
Since $Nx=Ny=0$, we work in the group ring
\[
    \ZZ[\mathsf A,\mathsf B]/(\mathsf A^N-1,\mathsf B^N-1).
\]
Eq.~\eqref{eq:C-Delta-alpha} becomes
\begin{equation}
    \Delta_\alpha
    =
    F_{N,\alpha}(\mathsf A,\mathsf B)K(X,Y;0),
    \label{eq:C-Delta-polynomial-form}
\end{equation}
where
\begin{equation}
    F_{N,\alpha}(\mathsf A,\mathsf B)
    =
    \left(\sum_{m=0}^{N-1}(\mathsf A^\alpha\mathsf B)^m\right)
    \left(\sum_{\lambda=0}^{\alpha-1}\mathsf A^\lambda\right)
    -
    \alpha\sum_{m=0}^{N-1}(\mathsf A\mathsf B)^m .
    \label{eq:C-F-polynomial}
\end{equation}
We now prove that $F_{N,\alpha}$ is divisible by
$(\mathsf A-1)(\mathsf B-1)$.
It is enough to check that
\begin{equation}
    F_{N,\alpha}(1,\mathsf B)=0
    \quad\text{and}\quad
    F_{N,\alpha}(\mathsf A,1)=0~,
    \label{eq:F-specializations-zero}
\end{equation}
as identities in
\[
    \ZZ[\mathsf B]/(\mathsf B^N-1)
    \quad\text{and}\quad
    \ZZ[\mathsf A]/(\mathsf A^N-1),
\]
respectively.  Indeed, write
\begin{equation}
    F_{N,\alpha}(\mathsf A,\mathsf B)
    =
    \sum_{p,q\in\ZN}
    f_{p,q}\mathsf A^p\mathsf B^q .
\end{equation}
The first identity in Eq.~\eqref{eq:F-specializations-zero} says
\begin{equation}
    \sum_{p\in\ZN} f_{p,q}=0
    \qquad
    \text{for every } q\in\ZN ,
    \label{eq:F-column-sum-zero}
\end{equation}
and the second says
\begin{equation}
    \sum_{q\in\ZN} f_{p,q}=0
    \qquad
    \text{for every } p\in\ZN .
    \label{eq:F-row-sum-zero}
\end{equation}
These two conditions imply divisibility by
$(\mathsf A-1)(\mathsf B-1)$ on the finite torus.  Explicitly, define
\begin{equation}
    Q_{N,\alpha}(\mathsf A,\mathsf B)
    =
    \sum_{p,q\in\ZN}
    q_{p,q}\mathsf A^p\mathsf B^q,
    \qquad
    q_{p,q}
    :=
    \sum_{u=0}^{p}
    \sum_{v=0}^{q}
    f_{u,v},
    \qquad
    0\leq p,q\leq N-1 .
    \label{eq:Q-explicit-primitive}
\end{equation}
The zero row and column sums in Eqs.~\eqref{eq:F-column-sum-zero} and
\eqref{eq:F-row-sum-zero} make this definition compatible with the periodic
identifications $p\sim p+N$ and $q\sim q+N$.  A direct finite-difference
calculation then gives
\begin{equation}
    F_{N,\alpha}(\mathsf A,\mathsf B)
    =
    (\mathsf A-1)(\mathsf B-1)Q_{N,\alpha}(\mathsf A,\mathsf B)
\end{equation}
in the group ring
$\ZZ[\mathsf A,\mathsf B]/(\mathsf A^N-1,\mathsf B^N-1)$.
It remains to verify Eq.~\eqref{eq:F-specializations-zero}.  First,
\begin{equation}
\begin{aligned}
    F_{N,\alpha}(1,\mathsf B)
    &=
    \left(\sum_{m=0}^{N-1}\mathsf B^m\right)
    \left(\sum_{\lambda=0}^{\alpha-1}1\right)
    -
    \alpha\sum_{m=0}^{N-1}\mathsf B^m  
    =
    \alpha\sum_{m=0}^{N-1}\mathsf B^m
    -
    \alpha\sum_{m=0}^{N-1}\mathsf B^m
    =
    0 .
\end{aligned}
\end{equation}
Second,
\begin{equation}
    F_{N,\alpha}(\mathsf A,1)
    =
    \left(\sum_{m=0}^{N-1}\mathsf A^{\alpha m}\right)
    \left(\sum_{\lambda=0}^{\alpha-1}\mathsf A^\lambda\right)
    -
    \alpha\sum_{m=0}^{N-1}\mathsf A^m .
\end{equation}
We claim that the first product equals
$\alpha\sum_{p=0}^{N-1}\mathsf A^p$ in
$\ZZ[\mathsf A]/(\mathsf A^N-1)$.  Equivalently, for each $p\in\ZN$, the
number of pairs $(m,\lambda)$ satisfying
\begin{equation}
    \alpha m+\lambda\equiv p\pmod N,
    \qquad
    0\leq m\leq N-1,
    \qquad
    0\leq \lambda\leq \alpha-1,
\end{equation}
is exactly $\alpha$.
Let $g=\gcd(\alpha,N)$.  For fixed $\lambda$, the congruence is equivalent to
\[
    \alpha m\equiv p-\lambda\pmod N .
\]
This linear congruence has a solution only if $g\mid(p-\lambda)$,
equivalently only if
\[
    \lambda\equiv p\pmod g .
\]
Conversely, if $\lambda\equiv p\pmod g$, then $p-\lambda=g b$ for some
integer $b$.  Writing
\[
    \alpha=g\alpha',
    \qquad
    N=gN',
    \qquad
    \gcd(\alpha',N')=1,
\]
the congruence becomes
\[
    \alpha' m\equiv b\pmod {N'} .
\]
Since $\alpha'$ is invertible modulo $N'$, this reduced congruence has a
unique solution modulo $N'$.  Lifting this solution to a solution modulo $N$
gives exactly $g$ solutions,
\[
    m=m_0+tN',
    \qquad
    t=0,1,\ldots,g-1 .
\]
Therefore, for fixed $\lambda$, the congruence has $g$ solutions in $m$ if
$\lambda\equiv p\pmod g$, and has no solutions otherwise.

Among $\lambda=0,1,\ldots,\alpha-1$, exactly $\alpha/g$ values obey
$\lambda\equiv p\pmod g$, so the total number of solution pairs
$(m,\lambda)$ is
\[
    g\cdot\frac{\alpha}{g}=\alpha .
\]
Therefore
\begin{equation}
    \left(\sum_{m=0}^{N-1}\mathsf A^{\alpha m}\right)
    \left(\sum_{\lambda=0}^{\alpha-1}\mathsf A^\lambda\right)
    =
    \alpha\sum_{p=0}^{N-1}\mathsf A^p ,
\end{equation}
and so
\begin{equation}
    F_{N,\alpha}(\mathsf A,1)=0 .
\end{equation}
Thus $F_{N,\alpha}$ is divisible by
$(\mathsf A-1)(\mathsf B-1)$.

Consequently, we get
\begin{equation}
    \Delta_\alpha
    =
    Q_{N,\alpha}(T_x,T_y)(T_x-1)(T_y-1)K(X,Y;0).
\end{equation}
By Eq.~\eqref{eq:C-mixed-second-difference}, each translated copy of
$(T_x-1)(T_y-1)K(X,Y;0)$ is $0$, so
\begin{equation}
    \Delta_\alpha = 0 \pmod{2\pi}.
\end{equation}
Therefore
\begin{equation}
    (W_N(X^\alpha,Y),0)
    \sim_{\loc}
    \alpha(W_N(X,Y),0).
\end{equation}
Exponentiating the two sides gives
\begin{equation}
    W_N(X^\alpha,Y)=W_N(X,Y)^\alpha .
    \label{eq:C-WN-linearity-final}
\end{equation}
The proof for the second argument is the same after exchanging the two
species.  Hence
\begin{equation}
    W_N(X,Y^\beta)=W_N(X,Y)^\beta,
    \qquad
    \beta\in\ZN .
\end{equation}


\section{Continuum and lattice realizations of Bockstein braiding in the
$(2{+}1)$D toric code}
\label{app:toric-code-realization}

This appendix presents continuum and lattice descriptions of the same
toric-code example. We first derive the Bockstein response of the
$(2{+}1)$D $\mathbb Z_2$ gauge theory and then realize the corresponding
phase $W_2=-1$ microscopically by condensing the $m^2$ anyon of a
$\mathbb Z_4$ toric code. The lattice construction extends directly to
arbitrary $N$.

\subsection{Continuum BF-theory description}

The $\mathbb Z_2$ gauge theory has the BF action
\[
    S_{\mathrm{BF}}
    =
    \frac{2}{2\pi}\int_{\mathcal M_3} a\wedge db,
\]
where $a$ and $b$ are $U(1)$ 1-form gauge fields whose holonomies are
normalized to take values $0,\pi$ modulo $2\pi$. In the convention of
Fig.~\ref{fig:ordinary-and-bockstein-braiding}, the continuum
representatives of the $A$-type particle operator and the $B$-type loop
operator are, respectively,
\[
    \mathcal O_A(e)
    =
    \exp\!\left(i\int_e b\right),
    \qquad
    \mathcal O_B(f)
    =
    \exp\!\left(\pi i\int_f\frac{da}{2\pi}\right).
\]
Equivalently, after rescaling the fields so that their holonomies take
values in $\{0,1\}$ and denoting the rescaled fields by
$\widetilde a,\widetilde b$, these operators become
\begin{equation}
    \mathcal O_A(e)
    =
    (-1)^{\int_e\widetilde b},
    \qquad
    \mathcal O_B(f)
    =
    (-1)^{\int_f d\widetilde a/2}.
\end{equation}
The lattice construction below realizes this pair as $U_e^A$ and $U_f^B$,
respectively. By Stokes' theorem, the $B$-type operator may also be viewed
as a Wilson loop of the parent $\mathbb Z_4$ gauge theory on $\partial f$
before the even magnetic fluxes are condensed.

Let $A$ be the $\mathbb Z_2$ 2-form background for the $A$-type particle
and $B$ the $\mathbb Z_2$ 1-form background for the $B$-type loop.
Coupling the two operators to these backgrounds adds
\[
    \Delta S
    =
    \int_{\mathcal M_3}\frac{da}{2\pi}\wedge B
    -
    \frac{2}{2\pi}\int_{\mathcal M_3}b\wedge A .
\]
The equation of motion for $b$ imposes the twisted flux quantization
\[
    da=A\pmod{2\pi}.
\]
The coupling to $B$ is naturally expressed using a four-dimensional
extension $\mathcal M_4$ with
$\partial\mathcal M_4=\mathcal M_3$:
\begin{equation}
    \int_{\mathcal M_3}\frac{da}{2\pi}\wedge B
    =
    \int_{\mathcal M_4}\frac{da}{2\pi}\wedge dB
    =
    \int_{\mathcal M_4}\frac{A}{2\pi}\wedge dB .
\end{equation}
Writing $\overline A=A/\pi$ and $\overline B=B/\pi$ for the corresponding
$\mathbb Z_2$ cocycles, the Euclidean inflow action is
\[
    S_{\mathrm{inflow}}
    =
    \pi i\int_{\mathcal M_4}
    \overline A\cup\beta_2\overline B .
\]
This is precisely the Bockstein response in
Eq.~\eqref{eq:bockstein-response-intro}. The continuum theory therefore
predicts the nontrivial phase $W_2=-1$ derived microscopically below.

\subsection{Condensed-lattice realization for
\texorpdfstring{$N=2$}{N=2}}

Consider the $(2{+}1)$D $\mathbb Z_4$ toric
code~\cite{Kitaev2002Topologicalquantummemory} on an oriented lattice,
with one qudit on each oriented edge $\ell$. The generalized Pauli
operators satisfy
\begin{equation}
    X_\ell^4=Z_\ell^4=1,
    \qquad
    Z_\ell X_\ell=iX_\ell Z_\ell .
\end{equation}
We condense the $m^2$ anyon by imposing $X_\ell^2=1$ on every edge. A
commuting Hamiltonian for the condensed model is generated by the vertex
terms $A_v$, the edge terms $X_\ell^2$, and the plaquette terms $B_p^2$,
where
\begin{equation}
    A_v
    =
    \prod_{\ell\ni v}X_\ell^{\eta_{\ell v}},
    \qquad
    B_p
    =
    \prod_{\ell\in\partial p}Z_\ell^{\epsilon_{\ell p}},
    \qquad
    \eta_{\ell v},\epsilon_{\ell p}=\pm1 .
\end{equation}
Here $\eta_{\ell v}=+1$ when $\ell$ points away from $v$ and
$\eta_{\ell v}=-1$ when it points toward $v$, while
$\epsilon_{\ell p}$ is the incidence sign of $\ell$ in the oriented
boundary $\partial p$. The condensed model is equivalent to the standard
$(2{+}1)$D $\mathbb Z_2$ toric code.

We now consider the particle--loop geometry in
Fig.~\ref{fig:ordinary-and-bockstein-braiding}\subref{fig:bockstein-braiding}. For a dual path $e^\vee$, define the
$A$-type particle operator
\begin{equation}
    U_e^A
    :=
    \prod_{\ell\in e^\vee}X_\ell .
\end{equation}
This is the usual hopping operator for the $m$ anyon. Since $(U_e^A)^2$
is a product of the condensed edge terms $X_\ell^2$, it acts as the
identity in the condensed Hilbert space. The $A$-type particle therefore
obeys $\mathbb Z_2$ fusion.

For a face $f$, define the $B$-type loop operator
\begin{equation}
    U_f^B
    :=
    \prod_{\ell\in\partial f}Z_\ell^{\epsilon_{\ell f}},
    \qquad
    \epsilon_{\ell f}=\pm1,
\end{equation}
where $\epsilon_{\ell f}$ is the incidence sign of $\ell$ in
$\partial f$. This operator creates a loop excitation on $\partial f$.
Because $(U_f^B)^2=B_f^2$ is a Hamiltonian term, the loop also obeys
$\mathbb Z_2$ fusion.

Suppose that $e^\vee$ crosses $\partial f$ once, at the edge $\ell_0$.
All other edge factors commute and cancel in the Bockstein word. With the
Fig.~\ref{fig:ordinary-and-bockstein-braiding} convention
\[
    X=U_e^A,
    \qquad
    Y=U_f^B,
\]
the process is
\begin{equation}
    W_2(U_e^A,U_f^B)
    =
    \left((U_f^B)^{-1}(U_e^A)^{-1}\right)^2
    \left(U_f^B U_e^A\right)^2 .
\end{equation}
The crossed edge gives
\begin{equation}
    \begin{aligned}
    W_2(U_e^A,U_f^B)
    &=
    \left(
        Z_{\ell_0}^{-\epsilon_{\ell_0 f}}X_{\ell_0}^{-1}
    \right)^2
    \left(
        Z_{\ell_0}^{\epsilon_{\ell_0 f}}X_{\ell_0}
    \right)^2 
    =
    i^{2\epsilon_{\ell_0 f}}
    =
    -1 .
    \end{aligned}
\end{equation}
Thus the condensed $\mathbb Z_4$ toric code realizes nontrivial
particle--loop Bockstein braiding for $N=2$.

\subsection{Generalization to arbitrary \texorpdfstring{$N$}{N}}

The construction extends directly to arbitrary $N$. Let
\[
    \omega
    :=
    \exp\!\left(\frac{2\pi i}{N^2}\right),
\]
so that the generalized Pauli operators of the $\mathbb Z_{N^2}$ toric
code obey
\[
    Z_\ell X_\ell=\omega X_\ell Z_\ell .
\]
Condensing the $m^N$ anyon imposes $X_\ell^N=1$. Together with the vertex
terms $A_v$ and plaquette terms $B_p^N$, this produces the
$\mathbb Z_N$ toric code.

For an oriented dual path $e^\vee$ and an oriented face $f$, define
\begin{equation}
    U_e^A
    :=
    \prod_{\ell\in e^\vee}X_\ell^{s_{\ell e}},
    \qquad
    U_f^B
    :=
    \prod_{\ell\in\partial f}Z_\ell^{\epsilon_{\ell f}},
    \qquad
    s_{\ell e}=\pm1,
\end{equation}
where $s_{\ell e}$ records the relative orientation of the dual path and
the crossed edge. Then $(U_e^A)^N$ is a product of condensed edge terms,
while $(U_f^B)^N=B_f^N$. Both excitations therefore obey
$\mathbb Z_N$ fusion.

If the two supports cross once at $\ell_0$, define their oriented
intersection number by
\[
    I
    :=
    \epsilon_{\ell_0 f}s_{\ell_0e}
    =
    \pm1 .
\]
We choose the primal and dual orientations so that a positive unit
$A\cup\beta_N B$ intersection has $I=+1$. The only nontrivial
commutation relation is
\begin{equation}
    U_f^B U_e^A
    =
    \omega^I U_e^A U_f^B .
\end{equation}
It follows that
\begin{align}
    W_N(U_e^A,U_f^B)
    &=
    \left((U_e^A U_f^B)^N\right)^{-1}
    \left(U_f^B U_e^A\right)^N 
    =
    \omega^{NI}
    =
    \exp\!\left(\frac{2\pi i I}{N}\right).
\end{align}
Choosing $I=+1$ gives the primitive phase
\begin{equation}
    W_N(U_e^A,U_f^B)
    =
    \exp\!\left(\frac{2\pi i}{N}\right).
\end{equation}
Reversing the orientation of either support gives the complex-conjugate
primitive phase.


\section{Patch-operator diagnosis for the $\prod X$ and $\prod \mathrm{CZ}$ anomaly}
\label{app:patch-ZCZ-X-anomaly}

This appendix gives the detailed construction used in
Sec.~\ref{subsec:one-dimensional-patch-example}. We first derive the
one-dimensional anomalous symmetry action from the bulk cocycle.  We then
explain the relation between the simple $X$-$\mathrm{CZ}$ gauge and the
$X$-$Z\mathrm{CZ}$ gauge used for the patch computation.  Finally, we
construct finite symmetry patches and evaluate their Bockstein process.

\subsection{One-dimensional symmetry action from the bulk cocycle}

Put two $\ZZ_2$ variables $a_v,b_v$ on each vertex of a triangulated spatial
surface $M$.  We write $a,b\in C^0(M,\ZZ_2)$ and set $A=da$, $B=db$.  The
fixed-point bulk wavefunction for the cocycle $A\cup\beta_2 B$ is
\begin{equation}
    |\Psi\rangle
    \propto
    \sum_{a,b}
    (-1)^{\int_M a\cup\beta_2(db)}
    |a,b\rangle .
    \label{eq:E-bulk-wavefunction}
\end{equation}
For an ordered triangle $[ijk]$ with $i<j<k$,
\begin{equation}
    \bigl(\beta_2(db)\bigr)_{ijk}
    =
    (b_i+b_j)(b_j+b_k)
    \quad (\mathrm{mod}\;2) .
    \label{eq:E-beta-db-triangle}
\end{equation}
Thus the sign in Eq.~\eqref{eq:E-bulk-wavefunction} counts corners of the
$b$-domain-wall configuration, weighted by the $a$ variable at the first
vertex of the triangle.

On a one-dimensional edge, choose the primitive
\begin{equation}
    \lambda_b(j,j+1)=b_j(1-b_{j+1}),
    \qquad
    d\lambda_b=\beta_2(db).
\end{equation}
Under the global $a$ flip, the wavefunction changes by
\begin{equation}
    (-1)^{\sum_j b_j(1-b_{j+1})}
    =
    \prod_j Z^b_j\,\mathrm{CZ}^b_{j,j+1}.
\end{equation}
Therefore a natural one-dimensional representative of the two symmetries is
\begin{equation}
    U_A
    =
    \left(\prod_j X^a_j\right)
    \left(\prod_j Z^b_j\,\mathrm{CZ}^b_{j,j+1}\right),
    \qquad
    U_B=\prod_j X^b_j .
    \label{eq:E-ZCZ-symmetry}
\end{equation}
This is the $X$-$Z\mathrm{CZ}$ gauge.

For comparison, on an even closed chain the $X$-$Z\mathrm{CZ}$ gauge is
mapped to the more familiar $X$-$\mathrm{CZ}$ gauge by the finite-depth
circuit
\begin{equation}
    R=
    \prod_{j\;\mathrm{even}}
    \mathrm{CZ}_{a_j,b_j}\,
    \mathrm{CZ}_{a_j,b_{j+1}} .
    \label{eq:E-gauge-circuit}
\end{equation}
This circuit preserves $U_B$:
\begin{equation}
    R U_B R^{-1}=U_B .
\end{equation}
It maps Eq.~\eqref{eq:E-ZCZ-symmetry} to the $X$-$\mathrm{CZ}$ gauge:
\begin{equation}
    R U_A R^{-1}
    =
    U_A^{\mathrm{CZ}}
    :=
    \left(\prod_j X^a_j\right)
    \left(\prod_j \mathrm{CZ}^b_{j,j+1}\right).
    \label{eq:E-CZ-symmetry}
\end{equation}
Conversely, conjugating $U_A^{\mathrm{CZ}}$ by the same circuit returns
$U_A$.  Since the two gauges are related by a finite-depth change of basis,
they realize the same anomaly.  We use the $X$-$Z\mathrm{CZ}$ gauge below
because its finite patch operators have length-independent endpoint
decorations.

\subsection{Patch symmetry operators}

Fig.~\ref{fig:one-dimensional-patch-operators} summarizes the two finite
operators used below.
In the $X$-$Z\mathrm{CZ}$ gauge, write
\begin{equation}
    U_A=K_aF_b,
    \qquad
    K_a=\prod_j X^a_j,
    \qquad
    F_b=\prod_j Z^b_j\,\mathrm{CZ}^b_{j,j+1},
    \qquad
    U_B=\prod_j X^b_j .
\end{equation}
Equivalently, if $S^2=Z$, define the link operator
\begin{equation}
    W^b_j
    =
    S^b_j\,\mathrm{CZ}^b_{j,j+1}\,S^b_{j+1}.
    \label{eq:E-W-link}
\end{equation}
Then
\begin{equation}
    F_b=\prod_j W^b_j .
\end{equation}
Let $I=[x,x']$.  The $A$-type patch operator is
\begin{equation}
    X_I
    =
    \left(\prod_{j=x+1}^{x'}X^a_j\right)
    \left(\prod_{r=x}^{x'-1}
    S^b_r\,\mathrm{CZ}^b_{r,r+1}\,S^b_{r+1}
    \right).
    \label{eq:E-X-I-patch}
\end{equation}
In the middle of $I$, this is the restriction of the local symmetry density in
$U_A$; only the endpoints know that the interval is open.

Next, let $J=[y',y]$. We use the same convention as in the main text and
define
\begin{equation}
    Y_J
    =
    \mathrm{CZ}_{a_{y'+1},b_{y'}}\,
    \mathrm{CZ}_{a_{y'+1},b_{y'+1}}
    \left(\prod_{j=y'+1}^{y}X^b_j\right)
    \mathrm{CZ}_{a_{y+1},b_y}\,
    \mathrm{CZ}_{a_{y+1},b_{y+1}} .
    \label{eq:E-Y-J-patch-main-convention}
\end{equation}
The ordering in Eq.~\eqref{eq:E-Y-J-patch-main-convention} is part of the
definition.  We use the convention that the rightmost factor acts first.  Thus
the two controlled-$Z$ gates at the right endpoint act before the $X^b$ string,
while the two controlled-$Z$ gates at the left endpoint act after it.  In the
middle of $J$, the operator is simply the restriction of the onsite symmetry
density $X^b_j$; the controlled-$Z$ gates are endpoint decorations.

For the algebraic checks it is useful to normal-order the endpoint decorations
to the right of the $X^b$ string.  Since
$\mathrm{CZ}_{a_{y'+1},b_{y'+1}}$ does not commute with $X^b_{y'+1}$, this
produces one extra factor of $Z^a_{y'+1}$.  Thus
\begin{equation}
    Y_J=P_JE_J,
    \qquad
    P_J=\prod_{j=y'+1}^{y}X^b_j,
    \label{eq:E-Y-J-normal-ordered}
\end{equation}
where
\begin{equation}
    E_J
    =
    Z^a_{y'+1}\,
    \mathrm{CZ}_{a_{y'+1},b_{y'}}\,
    \mathrm{CZ}_{a_{y'+1},b_{y'+1}}\,
    \mathrm{CZ}_{a_{y+1},b_y}\,
    \mathrm{CZ}_{a_{y+1},b_{y+1}} .
    \label{eq:E-E-J}
\end{equation}
Eqs.~\eqref{eq:E-Y-J-patch-main-convention}--\eqref{eq:E-E-J}
give equivalent ordered and normal-ordered expressions for the same operator.
The endpoint placement is chosen so that adjacent patches glue without an
additional junction operator:
\begin{equation}
    X_{[u,v]}X_{[v,w]}=X_{[u,w]},
    \qquad
    Y_{[u,v]}Y_{[v,w]}=Y_{[u,w]}.
    \label{eq:E-patch-gluing}
\end{equation}

\subsection{Commutation with the global symmetries}

We now check that the patch operators commute with the full symmetry.  The
operator $X_I$ commutes with $U_A$ manifestly.  It also commutes with $U_B$,
because each link factor is invariant under flipping the two adjacent $b$
qubits:
\begin{equation}
    X^b_rX^b_{r+1}
    \left(S^b_r\,\mathrm{CZ}^b_{r,r+1}\,S^b_{r+1}\right)
    X^b_rX^b_{r+1}
    =
    S^b_r\,\mathrm{CZ}^b_{r,r+1}\,S^b_{r+1}.
    \label{eq:E-link-invariant}
\end{equation}
Therefore
\begin{equation}
    [X_I,U_A]=[X_I,U_B]=1 .
    \label{eq:E-XI-commutes}
\end{equation}

For $Y_J$, define
\begin{equation}
    R_J=Z^b_{y'}Z^b_{y'+1}Z^b_yZ^b_{y+1}.
    \label{eq:E-R-J}
\end{equation}
Conjugating $P_J$ by the $b$-part of $U_A$ gives
\begin{equation}
    F_bP_JF_b^{-1}
    =
    -R_JP_J .
    \label{eq:E-PJ-conj}
\end{equation}
The minus sign is the contribution from the onsite $\prod_j Z^b_j$ factor in
$F_b$, together with the controlled-$Z$ chain.  On the other hand, conjugating
the endpoint decoration by the onsite $a$ symmetry gives
\begin{equation}
    K_aE_JK_a^{-1}
    =
    -R_JE_J .
    \label{eq:E-EJ-conj}
\end{equation}
The two factors cancel in the product $Y_J=P_JE_J$.  More explicitly, the two
copies of $R_J$ can be passed through $P_J$ without a sign because $R_J$
contains two $Z^b$ operators on sites flipped by $P_J$, namely $y'+1$ and
$y$.
Hence
\begin{equation}
    [Y_J,U_A]=1 .
\end{equation}
The operator $Y_J$ also commutes with $U_B$: under conjugation by
$\prod_i X^b_i$, the four endpoint controlled-$Z$ gates in $E_J$ contribute
two factors of $Z^a_{y'+1}$ and two factors of $Z^a_{y+1}$.  Thus
\begin{equation}
    [Y_J,U_A]=[Y_J,U_B]=1 .
    \label{eq:E-YJ-commutes}
\end{equation}

The squares of the two patches are
\begin{equation}
    X_I^2=Z^b_xZ^b_{x'},
    \qquad
    Y_J^2=Z^a_{y'+1}Z^a_{y+1} .
    \label{eq:E-patch-fusion}
\end{equation}
Thus the twofold fusion of either patch creates endpoint charges of the other
symmetry.

\subsection{Evaluation of the Bockstein word}

Choose staggered intervals
\begin{equation}
    I=[x,x'],
    \qquad
    J=[y',y],
    \qquad
    x<y'<x'<y .
    \label{eq:E-staggered-intervals}
\end{equation}
We evaluate
\begin{equation}
    W_2(X_I,Y_J)=
    \left(Y_J^{-1}X_I^{-1}\right)^2
    \left(Y_JX_I\right)^2 .
    \label{eq:E-W2-def}
\end{equation}

Let $\alpha$ be the characteristic function of the sites $x<j\le x'$ acted
on by $X_I$, and let $\beta$ be the characteristic function of the sites
$y'<j\le y$ acted on by $P_J$:
\begin{equation}
    \alpha_j=
    \begin{cases}
        1, & x<j\le x',\\
        0, & \mathrm{otherwise},
    \end{cases}
    \qquad
    \beta_j=
    \begin{cases}
        1, & y'<j\le y,\\
        0, & \mathrm{otherwise} .
    \end{cases}
    \label{eq:E-characteristic-functions}
\end{equation}
On a computational basis state $|a,b\rangle$, the two patches act as
\begin{equation}
    X_I|a,b\rangle
    =
    i^{f_I(b)}
    |a+\alpha,b\rangle,
    \qquad
    Y_J|a,b\rangle
    =
    (-1)^{h_J(a,b)}
    |a,b+\beta\rangle,
    \label{eq:E-patch-actions}
\end{equation}
where
\begin{equation}
    f_I(b)
    =
    2\sum_{r=x}^{x'-1}b_rb_{r+1}
    +
    \sum_{r=x}^{x'-1}(b_r+b_{r+1})
    \quad (\mathrm{mod}\;4),
    \label{eq:E-fI}
\end{equation}
and
\begin{equation}
    h_J(a,b)
    =
    a_{y'+1}
    +a_{y'+1}(b_{y'}+b_{y'+1})
    +a_{y+1}(b_y+b_{y+1})
    \quad (\mathrm{mod}\;2).
    \label{eq:E-hJ}
\end{equation}
The phase $f_I$ comes from the product of the link operators $W^b_r$ in
$X_I$, while $h_J$ comes from the endpoint decoration $E_J$.  The term
$a_{y'+1}$ in $h_J$ is precisely the normal-ordering factor produced by
moving the left endpoint controlled-$Z$ gate through $X^b_{y'+1}$.

In the two staircases in Eq.~\eqref{eq:E-W2-def}, the $f_I$ phases cancel.
The remaining sign is the mixed second difference of $h_J$:
\begin{equation}
    W_2(X_I,Y_J)|a,b\rangle
    =
    (-1)^\Omega|a,b\rangle,
\end{equation}
where
\begin{equation}
\begin{aligned}
    \Omega
    =
    &h_J(a+\alpha,b)
    +h_J(a,b+\beta)
    +h_J(a+\alpha,b+\beta)
    +h_J(a,b)
    \quad (\mathrm{mod}\;2).
\end{aligned}
    \label{eq:E-Omega-def}
\end{equation}
Using Eq.~\eqref{eq:E-hJ}, this reduces to
\begin{equation}
    \Omega
    =
    \alpha_{y'+1}(\beta_{y'}+\beta_{y'+1})
    +
    \alpha_{y+1}(\beta_y+\beta_{y+1})
    \quad (\mathrm{mod}\;2).
    \label{eq:E-Omega-reduced}
\end{equation}
For the staggered ordering $x<y'<x'<y$,
\begin{equation}
    \alpha_{y'+1}=1,
    \qquad
    \beta_{y'}=0,
    \qquad
    \beta_{y'+1}=1,
    \qquad
    \alpha_{y+1}=0 .
\end{equation}
Therefore $\Omega=1$, and
\begin{equation}
    W_2(X_I,Y_J)=-1 .
    \label{eq:E-W2-minus-one}
\end{equation}
This nontrivial Bockstein statistic is the patch-operator diagnosis of the
mixed anomaly between $U_B=\prod_jX_j^b$ and $U_A^{\mathrm{CZ}}=(\prod_jX_j^a)(\prod_j\mathrm{CZ}_{j,j+1}^b)$.

\section{Patch symmetry operators for an anomalous $\ZZ_2^{(0)}\times\ZZ_2^{(1)}$ symmetry in $(2{+}1)$D}
\label{app:square-lattice-2d}

This appendix gives a square-lattice realization of particle-loop Bockstein
braiding in $(2{+}1)$ dimensions.
The model has an anomalous $\ZZ_2^{(0)}\times\ZZ_2^{(1)}$ symmetry obtained from the $(3{+}1)$D bulk cocycle
\begin{equation}
    \frac12 A_1\cup\bigl(B_2\cup_1 B_2\bigr).
    \label{eq:square-bulk-cocycle}
\end{equation}
We construct the two total symmetry operators, their finite disk and line
patches, and evaluate the Bockstein word directly.  The endpoint conventions
are chosen so that adjacent patches multiply exactly into the patch on the
union.

Eq.~\eqref{eq:triangular-total-symmetries-main}
gives a concise
triangular-lattice representative with one qubit on each edge.  For the
symmetry-preserving patch construction in this appendix, we instead need both
species on the square lattice, in direct analogy with
Appendix~\ref{app:patch-ZCZ-X-anomaly}: an $a$ qubit is placed on every face
and a $b$ qubit on every edge.  The face qubits provide the onsite factor
$K_a=\prod_fX_f^a$ and the endpoint decorations needed to open and glue the
1-form symmetry operators.  This auxiliary onsite sector does not change the
anomaly class; the non-onsite part of the 0-form symmetry and the 1-form
symmetry are carried by the $b$ qubits.  The square/cubical representative
used below contains $\prod_eZ_e^b$ in addition to the controlled-$Z$ circuit.
For a general branched triangulation, the standard simplicial representative
has a further branching-dependent Stiefel--Whitney chain factor.  We derive
that factor at the end of this appendix and show that, on an oriented surface,
it can be removed by a finite-depth change of basis using the face $a$ qubits.

\subsection{Boundary symmetry and a local square-lattice representative}

Let $M$ be an oriented 3-dimensional spatial manifold with boundary
$\Sigma=\partial M$.  Locally, write the bulk fields as
$A_1=\dd a$ and $B_2=\dd b$, where
$a\in C^0(M,\ZZ_2)$ and $b\in C^1(M,\ZZ_2)$.  A fixed-point wavefunction for
the bulk cocycle~\eqref{eq:square-bulk-cocycle} is
\begin{equation}
    |\Psi\rangle
    \propto
    \sum_{a,b}
    (-1)^{\int_M a\cup (\dd b \cup_1 \dd b)}
    |a,b\rangle .
\end{equation}
To obtain the boundary action, we choose a cochain representative of the
descent.  This choice is not unique at the cochain level.  On a triangulation
with a total vertex order $\prec$, the standard simplicial representative is
\begin{equation}
    \omega_{2,\prec}^{\Delta}(b)
    :=
    b\cup b+b\cup_1\dd b,
    \label{eq:square-omega2}
\end{equation}
and it satisfies the descent equation
\begin{equation}
    \dd\omega_{2,\prec}^{\Delta}(b)
    =
    \dd b\cup_1\dd b.
    \label{eq:square-descent}
\end{equation}
Equivalently, if $\widetilde b$ denotes the canonical integer lift of $b$ and
$\widetilde{\dd b}$ the canonical lift of the $\ZZ_2$-valued cochain
$\dd b$, then~\cite{CH21}
\begin{equation}
    \omega_{2,\prec}^{\Delta}(b)
    =
    \frac{\dd\widetilde b-\widetilde{\dd b}}{2}
    \quad (\mathrm{mod}\;2).
    \label{eq:square-integer-lift-identity}
\end{equation}
The numerator is even on every face, so the quotient is an integer cochain.

For any chosen representative $\omega_2$ of this descent, the boundary
$\ZZ_2$ 0-form symmetry acts as
\begin{equation}
    U_A\ket{a,b}
    =
    (-1)^{\int_\Sigma\omega_2(b)}\ket{a+1,b}.
    \label{eq:square-UA-cochain}
\end{equation}
The boundary also carries the $\ZZ_2$ 1-form symmetry
\begin{equation}
    U_B(\lambda)\ket{a,b}
    =
    \ket{a,b+\lambda},
    \qquad
    \dd\lambda=0.
    \label{eq:square-UB-cochain}
\end{equation}
Here $\lambda$ is a closed one-cochain; its support is Poincar\'e dual to a
closed line on the dual lattice.

We now specialize to a coherently oriented square cellulation of $\Sigma$.
Place one $a$ qubit at the center of each face $f$ and one $b$ qubit on each
edge $e$.  For a computational-basis configuration $b$, the $\ZZ_2$ flux
through $f$ is
\begin{equation}
    c_f:=(\dd b)(f)=\sum_{e\subset\partial f}b_e
    \quad (\mathrm{mod}\;2),
    \qquad
    \mathcal B_f:=\prod_{e\subset\partial f}Z_e^b=(-1)^{c_f}.
    \label{eq:square-flux}
\end{equation}
Let $S=\operatorname{diag}(1,i)$.  Associate to each face the diagonal circuit
\begin{equation}
    \mathcal S_f
    :=
    \left(\prod_{e\subset\partial f}S_e^b\right)
    \left(\prod_{\{e,e'\}\subset\partial f}\mathrm{CZ}_{e,e'}^b\right),
    \label{eq:square-face-circuit}
\end{equation}
where the second product is over the six unordered pairs of boundary edges.
To see its action, label the four boundary-edge bits by
$b_1,\ldots,b_4$.  If $\widetilde c_f\in\{0,1\}$ denotes the canonical
integer lift of $c_f$, then
\begin{equation}
    i^{\widetilde c_f}
    =
    i^{b_1\oplus b_2\oplus b_3\oplus b_4}
    =
    i^{\sum_jb_j}(-1)^{\sum_{j<k}b_jb_k},
\end{equation}
which is precisely the phase produced by the four $S$ gates and six
controlled-$Z$ gates in Eq.~\eqref{eq:square-face-circuit}.  Therefore
\begin{equation}
    \mathcal S_f\ket{b}=i^{\widetilde c_f}\ket{b},
    \qquad
    \mathcal S_f^2=\mathcal B_f.
    \label{eq:square-face-circuit-action}
\end{equation}

We next identify this circuit phase with a cubical descent representative.
Let
\begin{equation}
    n_f:=(\dd\widetilde b)(f)\in\ZZ.
\end{equation}
Since $n_f$ reduces modulo two to $c_f$, the difference
$n_f-\widetilde c_f$ is even.  We therefore define
\begin{equation}
    \omega_{2,\square}(b)(f)
    :=
    \frac{n_f-\widetilde c_f}{2}
    \quad (\mathrm{mod}\;2).
    \label{eq:square-cubical-omega}
\end{equation}
This is the cubical version of the integer-lift prescription in
Eq.~\eqref{eq:square-integer-lift-identity}~\cite{CH21}.
It is a separate cochain-level representative from the ordered-simplicial density
$\omega_{2,\prec}^{\Delta}$~\cite{Chen2023Highercup}; their exact circuit relation is derived in
Sec.~\ref{subsec:square-vs-triangular-gauge}.  For the square-lattice model,
we use $\omega_{2,\square}$ in the boundary action
Eq.~\eqref{eq:square-UA-cochain}.

By construction, each face obeys
\begin{equation}
    i^{\widetilde c_f}
    =
    (-1)^{\omega_{2,\square}(b)(f)}i^{n_f}.
\end{equation}
Because the faces are coherently oriented and $\Sigma$ is closed,
\begin{equation}
    \sum_f n_f
    =
    \int_\Sigma\dd\widetilde b
    =0.
\end{equation}
Multiplying the face identities, therefore, gives
\begin{equation}
    F_b:=\prod_f\mathcal S_f,
    \qquad
    F_b\ket b
    =
    (-1)^{\int_\Sigma\omega_{2,\square}(b)}\ket b.
    \label{eq:square-face-product-cocycle}
\end{equation}

It is also useful to record the gate content of $F_b$.  Every edge belongs to
two square faces, so the two $S_e^b$ factors on that edge combine to $Z_e^b$:
\begin{equation}
    F_b
    =
    \left(\prod_e Z_e^b\right)
    \left[
    \prod_f
    \prod_{\{e,e'\}\subset\partial f}
    \mathrm{CZ}_{e,e'}^b
    \right].
    \label{eq:square-Fb-gate-decomposition}
\end{equation}

The two total symmetry operators are now
\begin{equation}
    U_A=K_aF_b,
    \qquad
    K_a:=\prod_fX_f^a,
    \label{eq:square-total-UA}
\end{equation}
\begin{equation}
    U_B(\lambda)=\prod_e\bigl(X_e^b\bigr)^{\lambda(e)},
    \qquad
    \dd\lambda=0.
    \label{eq:square-total-UB}
\end{equation}
Because $\dd\lambda=0$, the transformation $b\mapsto b+\lambda$ leaves each
face flux $c_f$, and hence each $\mathcal S_f$, unchanged.  It follows that
$[U_A,U_B(\lambda)]=1$.  Finally, using
$\mathcal S_f^2=\mathcal B_f$,
\begin{equation}
    U_A^2
    =
    F_b^2
    =
    \prod_f\mathcal B_f
    =1
\end{equation}
because every edge occurs in the boundaries of exactly two faces.
The local face circuit and the two global symmetry operators are depicted in
Fig.~\ref{fig:square-total-symmetry}.

\begin{figure}[t]
    \centering

    \begin{minipage}[t]{0.56\linewidth}
    \centering
    \begin{tikzpicture}[
        x=2.15cm,
        y=2.15cm,
        lattice/.style={
            black!55,
            line width=0.65pt
        },
        facefill/.style={
            fill=braidred!9,
            draw=braidred!70!black,
            line width=0.9pt
        },
        facecircuit/.style={
            braidred!78!black,
            line width=0.9pt,
            line cap=round
        },
        asite/.style={
            circle,
            draw=braidred!80!black,
            fill=braidred,
            inner sep=2.15pt
        },
        bsite/.style={
            circle,
            draw=black,
            fill=black,
            inner sep=1.65pt
        },
        xalabel/.style={
            font=\scriptsize\bfseries,
            text=braidred!85!black,
            inner sep=0.7pt
        },
        slabel/.style={
            font=\scriptsize\bfseries,
            text=braidred!85!black,
            inner sep=0.4pt
        },
        czlabel/.style={
            font=\tiny\bfseries,
            text=braidred!85!black,
            inner sep=0.45pt
        },
        paneltitle/.style={
            font=\bfseries
        },
        formula/.style={
            font=\small,
            inner sep=1pt
        }
    ]

        \node[paneltitle] at (1.5,2.48)
            {(a) 0-form symmetry $U_A$};

        \foreach \i in {0,...,3} {
            \draw[lattice] (\i,0) -- (\i,2);
        }
        \foreach \j in {0,...,2} {
            \draw[lattice] (0,\j) -- (3,\j);
        }

        \fill[facefill] (1,1) rectangle (2,2);

        %

        \draw[facecircuit]
            (1.5,2)
            --
            node[czlabel,sloped,above=2pt,pos=0.5] {$CZ$}
            (1,1.5);

        \draw[facecircuit]
            (1.5,2)
            --
            node[czlabel,sloped,above=2pt,pos=0.5] {$CZ$}
            (2,1.5);

        \draw[facecircuit]
            (1,1.5)
            --
            node[czlabel,sloped,below=2pt,pos=0.5] {$CZ$}
            (1.5,1);

        \draw[facecircuit]
            (2,1.5)
            --
            node[czlabel,sloped,below=2pt,pos=0.5] {$CZ$}
            (1.5,1);

        \draw[facecircuit]
            (1.5,2)
            --
            node[czlabel,left=1pt,pos=0.3] {$CZ$}
            (1.5,1);

        \draw[facecircuit]
            (1,1.5)
            --
            node[czlabel,above=2pt,pos=0.7] {$CZ$}
            (2,1.5);

        \node[slabel] at (1.5,2.105) {$S$};
        \node[slabel] at (1.5,0.895) {$S$};
        \node[slabel] at (0.895,1.5) {$S$};
        \node[slabel] at (2.105,1.5) {$S$};

        \foreach \i in {0,...,3} {
            \foreach \j in {0,...,1} {
                \node[bsite] at ({\i},{\j+0.5}) {};
            }
        }

        \foreach \i in {0,...,2} {
            \foreach \j in {0,...,2} {
                \node[bsite] at ({\i+0.5},{\j}) {};
            }
        }

        \foreach \x/\y in {
            0.5/0.5,
            1.5/0.5,
            2.5/0.5,
            0.5/1.5,
            2.5/1.5
        } {
            \node[asite] at (\x,\y) {};
        }

        \foreach \x/\y in {
            0.5/0.5,
            1.5/0.5,
            2.5/0.5,
            0.5/1.5,
            2.5/1.5
        } {
            \node[xalabel] at (\x,{\y+0.20}) {$X^a$};
        }

        \node[formula] at (1.5,-0.33)
            {$U_A=\prod_f X_f^a\,\mathcal S_f$};

    \end{tikzpicture}
    \end{minipage}%
    \hfill%
    \begin{minipage}[t]{0.40\linewidth}
    \centering
    \begin{tikzpicture}[
        x=2.15cm,
        y=2.15cm,
        lattice/.style={
            black!55,
            line width=0.65pt
        },
        dualloop/.style={
            braidblue!85!black,
            densely dashed,
            line width=1.15pt,
            line cap=round
        },
        xbsupport/.style={
            braidblue!90!black,
            line width=2.15pt,
            line cap=round
        },
        asite/.style={
            circle,
            draw=braidred!80!black,
            fill=braidred,
            inner sep=2.15pt
        },
        bsite/.style={
            circle,
            draw=black,
            fill=black,
            inner sep=1.65pt
        },
        xblabel/.style={
            font=\scriptsize\bfseries,
            text=braidblue!90!black,
            inner sep=0.6pt
        },
        paneltitle/.style={
            font=\bfseries
        },
        formula/.style={
            font=\small,
            inner sep=1pt
        }
    ]

        \node[paneltitle] at (1,2.48)
            {(b) 1-form symmetry $U_B(\lambda_v)$};

        \foreach \i in {0,...,2} {
            \draw[lattice] (\i,0) -- (\i,2);
        }
        \foreach \j in {0,...,2} {
            \draw[lattice] (0,\j) -- (2,\j);
        }

        \draw[dualloop]
            (0.5,0.5)
            --
            (1.5,0.5)
            --
            (1.5,1.5)
            --
            (0.5,1.5)
            --
            cycle;


        \draw[xbsupport] (0,1) -- (1,1);
        \draw[xbsupport] (1,1) -- (2,1);
        \draw[xbsupport] (1,0) -- (1,1);
        \draw[xbsupport] (1,1) -- (1,2);

        \foreach \i in {0,...,2} {
            \foreach \j in {0,...,1} {
                \node[bsite] at ({\i},{\j+0.5}) {};
            }
        }

        \foreach \i in {0,...,1} {
            \foreach \j in {0,...,2} {
                \node[bsite] at ({\i+0.5},{\j}) {};
            }
        }

        \foreach \x/\y in {
            0.5/0.5,
            1.5/0.5,
            0.5/1.5,
            1.5/1.5
        } {
            \node[asite] at (\x,\y) {};
        }

        \node[xblabel,anchor=east]
            at (0.45,1.18) {$X^b$};

        \node[xblabel,anchor=west]
            at (1.6,1.18) {$X^b$};

        \node[xblabel,anchor=north]
            at (1.18,0.4) {$X^b$};

        \node[xblabel,anchor=south]
            at (1.18,1.55) {$X^b$};

        \node[font=\scriptsize] at (1.12,1.12)
            {$v$};

        \node[formula] at (1,-0.33)
            {$U_B(\lambda_v)=\prod_{e\ni v}X_e^b$};


    \end{tikzpicture}
    \end{minipage}
    \caption{
        Square-lattice realization of the anomalous
        $\mathbb Z_2^{(0)}\times\mathbb Z_2^{(1)}$ symmetry.
        An $a$ qubit resides at the center of each face and is represented by a red
        dot, while a $b$ qubit resides on each direct-lattice edge and is represented
        by a black dot.
        \textbf{(a)} The 0-form generator
        $U_A=\prod_f X_f^a\,\mathcal S_f$ applies $X_f^a$ to every face qubit and the
        diagonal circuit
        $\mathcal S_f=
        \bigl(\prod_{e\subset\partial f}S_e^b\bigr)
        \bigl(\prod_{\{e,e'\}\subset\partial f}\mathrm{CZ}_{e,e'}^b\bigr)$ to the four edge qubits surrounding each face.
        The highlighted face displays its four $S$ gates and the six controlled-$Z$
        gates associated with the unordered pairs of distinct boundary edges.
        The central $a$ qubit of the highlighted face is omitted from the drawing for
        visual clarity, although the corresponding $X_f^a$ factor remains part of
        $U_A$.
        On a closed square cellulation, every edge belongs to two faces, and hence
        $\prod_f\mathcal S_f=
        \bigl(\prod_e Z_e^b\bigr)
        \bigl(\prod_f\prod_{\{e,e'\}\subset\partial f}
        \mathrm{CZ}_{e,e'}^b\bigr)$.
        Its relation to the ordered-simplicial representative on an arbitrary
        branched triangulation is given in
        Sec.~\ref{subsec:square-vs-triangular-gauge}.
        \textbf{(b)} The smallest closed dual-lattice loop
        $\lambda_v^\vee=\partial v^\vee$ encircles a single direct-lattice vertex
        $v$. It crosses exactly the four direct-lattice edges incident on $v$, so the
        corresponding 1-form symmetry generator is
        $U_B(\lambda_v)=\prod_{e\ni v}X_e^b$.
        The dashed blue square denotes the dual-lattice loop, while the four thick
        blue segments denote the crossed direct-lattice edges supporting the
        $X_e^b$ operators.
    }
    \label{fig:square-total-symmetry}
\end{figure}

\providecolor{braidred}{RGB}{180,25,25}
\providecolor{braidblue}{RGB}{0,105,180}
\providecolor{braidgreen}{RGB}{0,145,85}

\begin{figure}[t]
    \centering
    \begin{tikzpicture}[
        x=1.45cm,
        y=1.45cm,
        lattice/.style={
            black!55,
            line width=0.7pt
        },
        patchred/.style={
            draw=braidred!75!black,
            fill=braidred!10,
            line width=1.0pt
        },
        xbsupport/.style={
            braidblue!90!black,
            line width=2.4pt,
            line cap=round
        },
        czgreen/.style={
            braidgreen!85!black,
            line width=1.15pt,
            line cap=round
        },
        asite/.style={
            circle,
            draw=braidred!80!black,
            fill=braidred,
            inner sep=2.1pt
        },
        bsite/.style={
            circle,
            draw=black,
            fill=black,
            inner sep=1.8pt
        },
        xalabel/.style={
            font=\normalsize\bfseries,
            text=braidred!85!black
        },
        xblabel/.style={
            font=\small\bfseries,
            text=braidblue!90!black
        },
        czgreenlabel/.style={
            font=\scriptsize\bfseries,
            text=braidgreen!85!black,
            inner sep=0.3pt
        },
        cstar/.style={
            font=\normalsize\bfseries,
            text=braidgreen!85!black
        },
        jlabel/.style={
            font=\normalsize\itshape,
            text=braidblue!90!black
        },
        paneltitle/.style={
            font=\bfseries
        },
        formula/.style={
            font=\small,
            inner sep=1pt
        }
    ]

    \begin{scope}[shift={(0,0)}]
        \node[paneltitle,anchor=west]
        at (0.15,3.95)
        {(a) Disk patch $X_I$};

        \foreach \i in {0,...,3} {
            \draw[lattice] (\i,0) -- (\i,3);
        }
        \foreach \j in {0,...,3} {
            \draw[lattice] (0,\j) -- (3,\j);
        }

        \fill[patchred] (1,1) rectangle (2,2);
        \draw[
            braidred!75!black,
            line width=1.0pt
        ]
        (1,1) rectangle (2,2);

        \node[
            text=braidred!85!black,
            font=\normalsize\itshape
        ]
        at (2.27,1.80)
        {$I$};

        \node[xalabel] at (1.5,1.72) {$X^a$};

        \node[
            text=braidred!85!black,
            font=\normalsize
        ]
        at (1.7,1.30)
        {$\mathcal S_f$};

        \foreach \i in {0,...,3} {
            \foreach \j in {0,...,2} {
                \node[bsite] at (\i,\j+0.5) {};
            }
        }

        \foreach \i in {0,...,2} {
            \foreach \j in {0,...,3} {
                \node[bsite] at (\i+0.5,\j) {};
            }
        }

        \foreach \i in {0,...,2} {
            \foreach \j in {0,...,2} {
                \node[asite] at (\i+0.5,\j+0.5) {};
            }
        }

        \node[formula]
        at (1.5,-0.48)
        {$X_I=\prod_{f\subset I}X_f^a\,\mathcal S_f$};
    \end{scope}

    \begin{scope}[shift={(5.35,0)}]
        \node[paneltitle,anchor=west]
        at (0.0,3.95)
        {(b) Line patch $Y_J$};

        \foreach \i in {0,...,3} {
            \draw[lattice] (\i,0) -- (\i,3);
        }
        \foreach \j in {0,...,3} {
            \draw[lattice] (0,\j) -- (0+3,\j);
        }

        \fill[cyan!35, opacity=0.5] (0.4,1.4) rectangle (2.6,1.6);

        \node[jlabel] at (1.5,1.75) {$J$};

        \draw[xbsupport] (1,1) -- (1,2);
        \draw[xbsupport] (2,1) -- (2,2);

        \node[xblabel] at (1.0,0.82) {$X^b$};
        \node[xblabel] at (2.0,0.82) {$X^b$};

        \draw[czgreen] (0.5,1.5) -- (0.5,2);
        \draw[czgreen] (0.5,1.5) -- (0.5,1);
        \draw[czgreen] (0.5,1.5) -- (0,1.5);
        \draw[czgreen] (0.5,1.5) -- (1,1.5);

        \node[czgreenlabel] at (0.33,1.75) {$\mathrm{CZ}$};
        \node[czgreenlabel] at (0.66,1.25) {$\mathrm{CZ}$};
        \node[czgreenlabel] at (0.25,1.35) {$\mathrm{CZ}$};
        \node[czgreenlabel] at (0.75,1.65) {$\mathrm{CZ}$};

        \node[cstar] at (0.5,2.22) {$C_{f_L}$};

        \draw[czgreen] (2.5,1.5) -- (2.5,2);
        \draw[czgreen] (2.5,1.5) -- (2.5,1);
        \draw[czgreen] (2.5,1.5) -- (2,1.5);
        \draw[czgreen] (2.5,1.5) -- (3,1.5);

        \node[czgreenlabel] at (2.33,1.75) {$\mathrm{CZ}$};
        \node[czgreenlabel] at (2.66,1.25) {$\mathrm{CZ}$};
        \node[czgreenlabel] at (2.25,1.35) {$\mathrm{CZ}$};
        \node[czgreenlabel] at (2.75,1.65) {$\mathrm{CZ}$};

        \node[cstar] at (2.5,2.22) {$C_{f_R}$};


        \foreach \i in {0,...,3} {
            \foreach \j in {0,...,2} {
                \node[bsite] at (\i,\j+0.5) {};
            }
        }

        \foreach \i in {0,...,2} {
            \foreach \j in {0,...,3} {
                \node[bsite] at (\i+0.5,\j) {};
            }
        }

        \foreach \i in {0,...,2} {
            \foreach \j in {0,...,2} {
                \node[asite] at (\i+0.5,\j+0.5) {};
            }
        }

        \node[formula]
        at (1.5,-0.47)
        {$C_f:=\prod_{e\subset\partial f}\mathrm{CZ}_{a_f,b_e}$};

        \node[formula]
        at (1.5,-0.82)
        {$P_J:=X_{e_1}^bX_{e_2}^b,\qquad
          Y_J=C_{f_L}P_JC_{f_R}$};
    \end{scope}

    \end{tikzpicture}

    \caption{
    Finite square-lattice patch operators.
    Red dots denote the $a$ qubits at face centers, and black dots denote the $b$ qubits on direct-lattice edges.
    \textbf{(a)} The disk patch $X_I$ is the restriction of the 0-form
    symmetry density to a single face $I$.
    \textbf{(b)} The line patch $Y_J$ connects the endpoint faces $f_L$ and $f_R$.
    The two thick blue direct-lattice edges support the two $X_e^b$ factors in $P_J$.
    At each endpoint face, the four green straight links show explicitly the four controlled-$Z$ gates in $C_f=\prod_{e\subset\partial f}\mathrm{CZ}_{a_f,b_e}$,
    coupling the central $a_f$ qubit to the four boundary-edge $b_e$ qubits.
    }
    \label{fig:square-patch-operators}
\end{figure}

\subsection{Disk and line patch operators}

Let $I$ be a disk-like union of faces.  The 0-form patch is the literal
restriction of the local density in Eq.~\eqref{eq:square-total-UA}:
\begin{equation}
    X_I:=\prod_{f\subset I}X_f^a\mathcal S_f.
    \label{eq:square-XI}
\end{equation}
The factors on different faces commute.  Consequently, for two face sets
with disjoint interiors,
\begin{equation}
    X_{I_1}X_{I_2}=X_{I_1\cup I_2}.
    \label{eq:square-X-gluing}
\end{equation}

The line patch requires endpoint decorations.  For every face, define the
star of controlled-$Z$ gates
\begin{equation}
    C_f:=\prod_{e\subset\partial f}\mathrm{CZ}_{a_f,b_e}.
    \label{eq:square-Cf}
\end{equation}
Let an oriented dual path $J$ pass through the sequence of faces
\begin{equation}
    f_0\xrightarrow{e_1}f_1\xrightarrow{e_2}\cdots
    \xrightarrow{e_n}f_n,
\end{equation}
where $e_r$ is the primal edge crossed between $f_{r-1}$ and $f_r$.  The
patch on one elementary dual edge is
\begin{equation}
    Y_{f_{r-1}\to f_r}:=C_{f_{r-1}}X_{e_r}^bC_{f_r},
\end{equation}
and the ordered product along $J$ is
\begin{equation}
    Y_J:={}
    Y_{f_0\to f_1}Y_{f_1\to f_2}\cdots Y_{f_{n-1}\to f_n} 
    =C_{f_L}\left(\prod_{r=1}^nX_{e_r}^b\right)C_{f_R}
    \label{eq:square-YJ}
\end{equation}
where $f_L=f_0$ and $f_R=f_n$.  The rightmost factor acts first.  The two
copies of $C_f$ at every internal face are adjacent and cancel.  The same
cancellation gives exact gluing: if $J_1$ ends where $J_2$ begins, then
\begin{equation}
    Y_{J_1}Y_{J_2}=Y_{J_1\cup J_2}.
    \label{eq:square-Y-gluing}
\end{equation}
The two patch symmetry operators are shown in Fig.~\ref{fig:square-patch-operators}.

\subsection{Symmetry and fusion}

The disk patch commutes with both total symmetries.  Its commutation with
$U_A$ is manifest, while $\dd\lambda=0$ implies that $U_B(\lambda)$ leaves
$c_f$ and hence $\mathcal S_f$ invariant.  Therefore
\begin{equation}
    [X_I,U_A]=[X_I,U_B(\lambda)]=1.
    \label{eq:square-X-commutation}
\end{equation}

We next check the line patch.  Write
\begin{equation}
    P_J:=\prod_{r=1}^nX_{e_r}^b,
    \qquad
    c_L:=c_{f_L},
    \qquad
    c_R:=c_{f_R}.
\end{equation}
The open string toggles $c_L$ and $c_R$ and leaves every interior face flux
unchanged.  Since
\begin{equation}
    \frac{i^{1-c}}{i^c}=i(-1)^c,
\end{equation}
the two endpoint faces give
\begin{equation}
    F_bP_JF_b^{-1}
    =
    -\mathcal B_{f_L}\mathcal B_{f_R}P_J.
    \label{eq:square-F-conjugation}
\end{equation}
On the other hand,
\begin{equation}
    K_aC_fK_a^{-1}=\mathcal B_fC_f.
    \label{eq:square-K-C-conjugation}
\end{equation}
Because $P_J$ intersects each endpoint face in one edge, moving
$\mathcal B_{f_R}$ through $P_J$ produces one minus sign.  Thus
\begin{equation}
    K_aY_JK_a^{-1}
    =
    -\mathcal B_{f_L}\mathcal B_{f_R}Y_J.
    \label{eq:square-K-Y-conjugation}
\end{equation}
Eqs.~\eqref{eq:square-F-conjugation} and
\eqref{eq:square-K-Y-conjugation} cancel in the product $U_A=K_aF_b$, so
the line patch commutes with $U_A$.  For the 1-form symmetry, conjugation by
$U_B(\lambda)$ gives
\begin{equation}
    U_B(\lambda)C_fU_B(\lambda)^{-1}
    =
    (Z_f^a)^{(\dd\lambda)(f)}C_f
    =C_f,
\end{equation}
and the $X_e^b$ string commutes with $U_B(\lambda)$.  Therefore
\begin{equation}
    [Y_J,U_A]=[Y_J,U_B(\lambda)]=1.
    \label{eq:square-Y-commutation}
\end{equation}
The twofold fusion of either patch leaves a defect of the other symmetry.  For
the disk patch,
\begin{equation}
\begin{aligned}
    X_I^2
    &=
    \prod_{f\subset I}\mathcal B_f
    =
    \prod_{e\subset\partial I}Z_e^b.
\end{aligned}
    \label{eq:square-X-square}
\end{equation}
For the line patch, $P_J$ crosses the boundary of each endpoint face once,
so
\begin{equation}
    P_JC_{f_L}C_{f_R}P_J
    =
    Z_{f_L}^aZ_{f_R}^aC_{f_L}C_{f_R}.
\end{equation}
After this conjugation, the endpoint stars cancel.  Hence
\begin{equation}
    Y_J^2=Z_{f_L}^aZ_{f_R}^a.
    \label{eq:square-Y-square}
\end{equation}
Eqs.~\eqref{eq:square-X-square} and \eqref{eq:square-Y-square} are the
square-lattice analogs of the one-dimensional endpoint fusion relations.

\subsection{Evaluation of the Bockstein word}

Let $\alpha_I(f)$ be the characteristic function of the disk $I$, and let
$\lambda_J$ be the one-cochain supported on the primal edges crossed by the
dual path $J$.  On a computational-basis state,
\begin{equation}
    X_I\ket{a,b}
    =
    i^{\phi_I(b)}\ket{a+\alpha_I,b},
    \qquad
    \phi_I(b):=\sum_{f\subset I}\widetilde c_f
    \quad (\mathrm{mod}\;4).
    \label{eq:square-X-action}
\end{equation}
It is useful to normal-order the line patch.  Since the string crosses the
left endpoint face once,
\begin{equation}
    Y_J
    =
    P_JE_J,
    \qquad
    E_J:=Z_{f_L}^aC_{f_L}C_{f_R}.
    \label{eq:square-Y-normal-order}
\end{equation}
Therefore
\begin{equation}
    Y_J\ket{a,b}
    =
    (-1)^{h_J(a,b)}\ket{a,b+\lambda_J},
    \label{eq:square-Y-action}
\end{equation}
where
\begin{equation}
    h_J(a,b)
    =
    a_{f_L}(1+c_L)+a_{f_R}c_R
    \quad (\mathrm{mod}\;2).
    \label{eq:square-hJ}
\end{equation}

The Bockstein word is
\begin{equation}
    W_2(X_I,Y_J)
    =
    (Y_J^{-1}X_I^{-1})^2(Y_JX_I)^2.
    \label{eq:square-W2-def}
\end{equation}
The phase $i^{\phi_I(b)}$ cancels between the two alternating staircases.
The remaining sign is the mixed second difference of $h_J$:
\begin{equation}
    W_2(X_I,Y_J)\ket{a,b}
    =
    (-1)^\Omega\ket{a,b},
\end{equation}
with
\begin{equation}
\begin{aligned}
    \Omega
    ={}&
    h_J(a+\alpha_I,b)
    +h_J(a,b+\lambda_J)+h_J(a+\alpha_I,b+\lambda_J)
    +h_J(a,b)
    \quad (\mathrm{mod}\;2).
\end{aligned}
    \label{eq:square-Omega-def}
\end{equation}
The open line toggles both endpoint fluxes, so Eq.~\eqref{eq:square-hJ}
gives
\begin{equation}
    \Omega
    =
    \alpha_I(f_L)+\alpha_I(f_R)
    \quad (\mathrm{mod}\;2).
    \label{eq:square-Omega-result}
\end{equation}
The individual definitions of $X_I$ and $Y_J$ are illustrated in
Fig.~\ref{fig:square-patch-operators}.  To evaluate the word, place the disk
$I$ and line $J$ in a staggered relative geometry: choose the left endpoint
face $f_L$ in the interior of $I$ and the right endpoint face $f_R$ outside
$I$.  Therefore,
\begin{equation}
    \alpha_I(f_L)=1,
    \qquad
    \alpha_I(f_R)=0.
\end{equation}
Consequently,
\begin{equation}
    W_2(X_I,Y_J)=-1.
    \label{eq:square-W2-minus-one}
\end{equation}
Thus, the mixed anomaly of the $\ZZ_2$ 0-form and $\ZZ_2$ 1-form symmetries is detected microscopically by particle-loop Bockstein braiding in
$(2{+}1)$ dimensions.

\subsection{Arbitrary triangulations and relation to the triangular-lattice representative}
\label{subsec:square-vs-triangular-gauge}

We first distinguish the algebraic face circuit from the standard
ordered-simplicial representative of the cochain
$\omega_2(b)=b\cup b+b\cup_1\dd b$.  The distinction is invisible in the
square-lattice calculation above but is important on a general
triangulation.

For a face $f$ of a regular cellulation $\mathcal C$ of a closed
two-manifold, define
\begin{equation}
    \mathcal S_f
    :=
    \left(\prod_{e\subset\partial f}S_e^b\right)
    \left(
    \prod_{\{e,e'\}\subset\partial f}
    \mathrm{CZ}_{e,e'}^b
    \right),
    \qquad
    C_f
    :=
    \prod_{e\subset\partial f}
    \mathrm{CZ}_{a_f,b_e},
    \label{eq:cellulation-local-factors}
\end{equation}
where the second product in $\mathcal S_f$ is over unordered pairs of
distinct boundary edges.  The disk density $X_f^a\mathcal S_f$ and the
endpoint decoration $C_f$ obey the same local symmetry, fusion, and gluing
algebra as on the square lattice.  For a mod-two edge chain
$\gamma=\sum_e\gamma_e e$, write
\begin{equation}
    Z_b(\gamma):=\prod_e\bigl(Z_e^b\bigr)^{\gamma_e},
    \qquad
    E_{\mathcal C}:=\sum_{e\in\mathcal C_1}e,
    \qquad
    C_b^{\mathcal C}
    :=
    \prod_f
    \prod_{\{e,e'\}\subset\partial f}
    \mathrm{CZ}_{e,e'}^b .
    \label{eq:cellulation-chain-notation}
\end{equation}
Every edge of such a regular cellulation belongs to two faces.  Hence
$S_e^2=Z_e^b$ gives the exact algebraic identity
\begin{equation}
\begin{aligned}
    F_b^{\mathcal C}
    &:={}
    \prod_f\mathcal S_f
    =Z_b(E_{\mathcal C})C_b^{\mathcal C}
    =\left(\prod_e Z_e^b\right)C_b^{\mathcal C},\\
    U_{A,\mathcal C}
    &:={}K_aF_b^{\mathcal C}.
\end{aligned}
    \label{eq:square-Fb-CZ-decomposition}
\end{equation}
This formula defines a useful face-local representative on any such cellulation.  It should not, by itself, be identified face by face with the
standard simplicial cup-product formula.

\paragraph{The ordered-simplicial formula.}
Let $K$ be an arbitrary finite triangulation of the closed surface $\Sigma$,
and fix a total order $\prec$ of its vertices.  Write each ordered triangle as
$f=[012]$, and let $F_b^K$ and $C_b^K$ denote the operators in
Eqs.~\eqref{eq:square-Fb-CZ-decomposition} and
\eqref{eq:cellulation-chain-notation}, respectively, specialized to this
triangulation.
With the standard simplicial cup-$1$ convention,
\begin{equation}
\begin{aligned}
    \omega_2(b)(012)
    &=
    (b\cup b)(012)+(b\cup_1\dd b)(012)\\
    &=
    b_{01}b_{12}+b_{02}(\dd b)_{012}
    =
    b_{01}b_{12}+b_{01}b_{02}+b_{02}b_{12}+b_{02}.
\end{aligned}
    \label{eq:branched-omega2-face}
\end{equation}
Thus each triangle contributes the three pairwise controlled-$Z$ gates and a
single $Z^b$ on its long edge $[02]$.
Define the mod-two long-edge chain
\begin{equation}
    L_\prec
    :=
    \sum_{f=[012]\in K_2}[02]_f,
    \qquad
    E_K:=\sum_{e\in K_1}e ,
    \label{eq:branched-long-edge-chain}
\end{equation}
where repeated occurrences of an edge cancel.  Multiplication of
Eq.~\eqref{eq:branched-omega2-face} over all triangles gives
\begin{equation}
    \mathcal U_{\omega_2}\ket b
    :=
    (-1)^{\langle\omega_2(b),[K]\rangle}\ket b,
    \qquad
    \mathcal U_{\omega_2}
    =
    Z_b(L_\prec)C_b^K .
    \label{eq:branched-omega2-circuit}
\end{equation}
This is the exact operator identity for an arbitrary ordered triangulation;
no condition $\dd b=0$ has been imposed.

\paragraph{The Goldstein--Turner chain.}
We recall the definition of a regular face used by Goldstein and Turner
\cite{GoldsteinTurner1976}.  Let
$s=[u_0\cdots u_p]$ be an ordered $p$-face of an ordered simplex $t$.
Let $B_{-1}$ be the vertices of $t$ below $u_0$, let $B_j$ be the vertices
strictly between $u_j$ and $u_{j+1}$ for $0\leq j<p$, and let $B_p$ be the
vertices above $u_p$.  The face $s$ is \emph{regular in $t$} when
$B_j=\varnothing$ for every odd $j$ with $-1\leq j\leq p$.
Denote by $\alpha_p(t)$ the mod-two sum of all
regular $p$-faces of $t$.  Every edge is regular in itself, whereas the unique
regular edge of an ordered triangle $[012]$ is its long edge $[02]$.
The Goldstein--Turner formula therefore specializes in two dimensions to
\begin{equation}
    \mathfrak w_1^{\mathrm{GT}}(K,\prec)
    :=
    \sum_{\substack{t\in K\\ \dim t\geq1}}\alpha_1(t)
    =
    E_K+L_\prec .
    \label{eq:goldstein-turner-chain}
\end{equation}
It is a cycle, and its homology class is the first Stiefel--Whitney homology
class,
\begin{equation}
    \bigl[\mathfrak w_1^{\mathrm{GT}}(K,\prec)\bigr]
    =
    w^1(T\Sigma)\cap[\Sigma]
    =
    \operatorname{PD}\bigl(w^1(T\Sigma)\bigr).
    \label{eq:goldstein-turner-class}
\end{equation}
Combining Eqs.~\eqref{eq:square-Fb-CZ-decomposition},
\eqref{eq:branched-omega2-circuit}, and
\eqref{eq:goldstein-turner-chain} yields the general-triangulation formula
\begin{equation}
    \mathcal U_{\omega_2}
    =
    \left[
    \prod_{e\in K_1}
    \bigl(Z_e^b\bigr)^{\mathfrak w_1^{\mathrm{GT}}(e)}
    \right]
    \left(\prod_{e\in K_1}Z_e^b\right)
    \left[
    \prod_{f\in K_2}
    \prod_{\{e,e'\}\subset\partial f}
    \mathrm{CZ}_{e,e'}^b
    \right].
    \label{eq:goldstein-turner-corrected-circuit}
\end{equation}
Here, $\mathfrak w_1^{\mathrm{GT}}(e)$ means the coefficient of the edge $e$ in
the particular chain in Eq.~\eqref{eq:goldstein-turner-chain}; it is not the evaluation of an arbitrary cocycle representative of $w^1$.

\paragraph{Removing the Stiefel--Whitney factor with the face qubits.}
The physical surface $\Sigma=\partial M$ in this appendix is closed and
oriented.  Fix its orientation and let $D_\prec\in C_2(K,\ZZ_2)$ be the sum
of the triangles whose ordered orientation disagrees with it.  The
orientation-wall description of the Goldstein--Turner chain gives
\begin{equation}
    \partial D_\prec
    =
    \mathfrak w_1^{\mathrm{GT}}(K,\prec).
    \label{eq:goldstein-turner-boundary}
\end{equation}
Indeed, on an edge shared by two triangles, the coefficient of
$E_K+L_\prec$ equals the parity of the two adjacent orientation disagreements,
which is exactly the coefficient of $\partial D_\prec$.
Using the face $a$ qubits, define the finite-depth circuit
\begin{equation}
    R_\prec
    :=
    \prod_{f\in D_\prec}C_f
    =
    \prod_{f\in D_\prec}
    \prod_{e\subset\partial f}\mathrm{CZ}_{a_f,b_e}.
    \label{eq:goldstein-turner-removal-circuit}
\end{equation}
The corresponding bipartite incidence graph has maximum degree three, so
these commuting controlled-$Z$ gates can be scheduled in three layers.
Because
$C_fX_f^aC_f^{-1}=X_f^a\mathcal B_f$, this circuit satisfies
\begin{equation}
    R_\prec K_aR_\prec^{-1}
    =
    K_a Z_b(\partial D_\prec)
    =
    K_a Z_b\!\left(\mathfrak w_1^{\mathrm{GT}}\right).
    \label{eq:goldstein-turner-removal-Ka}
\end{equation}
It also preserves every closed 1-form symmetry transformation:
\begin{equation}
    R_\prec U_B(\lambda)R_\prec^{-1}
    =
    U_B(\lambda)
    \prod_{f\in D_\prec}
    \bigl(Z_f^a\bigr)^{(\dd\lambda)(f)}
    =
    U_B(\lambda).
    \label{eq:goldstein-turner-removal-UB}
\end{equation}
Since $R_\prec$ commutes with all diagonal $b$-qubit gates,
Eq.~\eqref{eq:goldstein-turner-corrected-circuit} implies the exact
off-shell equivalence
\begin{equation}
    R_\prec
    \bigl(K_a\mathcal U_{\omega_2}\bigr)
    R_\prec^{-1}
    =
    K_aF_b^K .
    \label{eq:goldstein-turner-removal-UA}
\end{equation}
Thus, the Stiefel--Whitney chain factor does not affect the anomaly or the Bockstein statistic in the oriented setting considered here: after
adjoining the face $a$ qubits, it is removed by a finite-depth,
$U_B$-preserving change of basis.  Without the $a$ qubits it is not literally
the identity for an arbitrary off-shell $b$ configuration.  On a
nonorientable surface the chain in
Eq.~\eqref{eq:goldstein-turner-chain} need not be a boundary, and the same
removal while keeping the bare $U_B$ fixed need not exist.

\paragraph{The controlled-$Z$-only representative.}
There is a separate simplification when the faces of $\mathcal C$ admit a
two-coloring.  Let
$\mathcal F_\bullet$ denote one of the two face colors, so that every edge
borders exactly one face in $\mathcal F_\bullet$, and define
\begin{equation}
    R_{\mathcal C}
    :=
    \prod_{f\in\mathcal F_\bullet}
    \prod_{e\subset\partial f}
    \mathrm{CZ}_{a_f,b_e}.
    \label{eq:square-gauge-circuit}
\end{equation}
This is a finite-depth circuit whose depth is bounded by the maximal number
of edges surrounding a face. In particular, four controlled-$Z$ layers
suffice on the square lattice and three suffice on the triangular lattice.
Since every edge occurs exactly once in the product over
$\mathcal F_\bullet$,
\begin{equation}
    R_{\mathcal C}K_aR_{\mathcal C}^{-1}
    =
    K_a\prod_e Z_e^b .
    \label{eq:square-gauge-Ka}
\end{equation}
The same circuit preserves every closed 1-form transformation:
\begin{equation}
    R_{\mathcal C}U_B(\lambda)R_{\mathcal C}^{-1}
    =
    U_B(\lambda)
    \prod_{f\in\mathcal F_\bullet}
    \bigl(Z_f^a\bigr)^{(\dd\lambda)(f)}
    =
    U_B(\lambda),
    \label{eq:square-gauge-UB}
\end{equation}
where the last equality follows from $\dd\lambda=0$.  Since
$R_{\mathcal C}$ also commutes with the diagonal operator
$F_b^{\mathcal C}$,
Eqs.~\eqref{eq:square-Fb-CZ-decomposition}
and~\eqref{eq:square-gauge-Ka} imply
\begin{equation}
    R_{\mathcal C}U_{A,\mathcal C}R_{\mathcal C}^{-1}
    =
    U_{A,\mathcal C}^{\mathrm{CZ}}
    :=
    K_aC_b^{\mathcal C}.
    \label{eq:square-CZ-gauge}
\end{equation}

For the regular triangular lattice, one may take
$\mathcal F_\bullet$ to be the set of upward-pointing triangles. Then
\begin{equation}
    C_b^\triangle
    =
    \prod_f
    \prod_{\{e,e'\}\subset\partial f}
    \mathrm{CZ}_{e,e'}^b
    =
    U_0,
\end{equation}
where $U_0$ is the 0-form generator in
Eq.~\eqref{eq:triangular-total-symmetries-main}.
Likewise, the elementary closed
1-form transformation surrounding a vertex $v$ is
\begin{equation}
    U_B(\lambda_v)
    =
    \prod_{e\ni v}X_e^b
    =
    U_1(v).
\end{equation}
Thus, after adjoining the anomaly-trivial onsite $a$-qubit sector and applying
the finite-depth circuit $R_{\mathcal C}$, the all-edge face circuit becomes
the stabilized controlled-$Z$-only representative of
Eq.~\eqref{eq:triangular-total-symmetries-main}.
Starting from the standard
simplicial representative on an arbitrary ordered triangulation of the
oriented surface, one first applies $R_\prec$ to remove the
Goldstein--Turner factor and then $R_{\mathcal C}$ when a face two-coloring is
available.

We retain the square/cubical gauge in the patch calculation above because its
face-local density $X_f^a\mathcal S_f$ makes symmetry and exact gluing
manifest.  Operators in any of the equivalent gauges are obtained by
conjugating the complete patch operators by the corresponding finite-depth
circuit; individual $S$ or $Z$ factors should not simply be deleted.

\section{Patch symmetry operators for an anomalous
$\ZZ_2^{(0)}\times\ZZ_2^{(2)}$ symmetry in $(3{+}1)$D}
\label{app:cubic-lattice-3d}

This appendix generalizes the $(2{+}1)$D construction of
Appendix~\ref{app:square-lattice-2d} to three spatial dimensions.  We work
from the outset with an arbitrary regular cellulation, and then compare its
local circuit with the standard cup-product formula on an arbitrary ordered
triangulation.  The anomalous $\ZZ_2^{(0)}\times\ZZ_2^{(2)}$ symmetry descends
from the $(4{+}1)$D 5-cocycle
\begin{equation}
    \frac12 A_1\cup\bigl(B_3\cup_2B_3\bigr).
    \label{eq:cubic-bulk-cocycle}
\end{equation}
For a cocycle $B_3$, the second factor represents
$\operatorname{Sq}^1B_3=\beta_2B_3$.

\subsection{Boundary descent and a local cellulation representative}

Let $M_4$ be an oriented four-dimensional spatial manifold with closed
oriented boundary $\Sigma_3=\partial M_4$.  Locally write
\begin{equation}
    A_1=\dd a,
    \qquad
    B_3=\dd b,
    \qquad
    a\in C^0(M_4,\ZZ_2),
    \qquad
    b\in C^2(M_4,\ZZ_2).
\end{equation}
The corresponding fixed-point bulk wavefunction is
\begin{equation}
    |\Psi\rangle
    \propto
    \sum_{a,b}
    (-1)^{
        \int_{M_4}
        a\cup\bigl(\dd b\cup_2\dd b\bigr)
    }
    |a,b\rangle .
    \label{eq:cubic-bulk-wavefunction}
\end{equation}
For $\ZZ_2$-valued cochains, the higher-cup differential identity is
\begin{equation}
    \dd(x\cup_i y)
    =
    \dd x\cup_i y
    +x\cup_i\dd y
    +x\cup_{i-1}y
    +y\cup_{i-1}x .
    \label{eq:cubic-cup-i-identity}
\end{equation}
On an ordered triangulation, define the standard simplicial descent
representative
\begin{equation}
    \omega_3(b)
    :=
    b\cup_1b+b\cup_2\dd b .
    \label{eq:cubic-omega3}
\end{equation}
Indeed,
\begin{align}
    \dd(b\cup_1b)
    &=
    \dd b\cup_1b+b\cup_1\dd b,\\
    \dd(b\cup_2\dd b)
    &=
    \dd b\cup_2\dd b
    +b\cup_1\dd b
    +\dd b\cup_1b .
\end{align}
The mixed terms cancel, and therefore
\begin{equation}
    \dd\omega_3(b)
    =
    \dd b\cup_2\dd b .
    \label{eq:cubic-descent}
\end{equation}
The two boundary symmetries consequently act as
\begin{equation}
    U_A|a,b\rangle
    =
    (-1)^{
        \int_{\Sigma_3}
        \left(b\cup_1b+b\cup_2\dd b\right)
    }
    |a+1,b\rangle
    \label{eq:cubic-UA-cochain}
\end{equation}
and
\begin{equation}
    U_B(\lambda)|a,b\rangle
    =
    |a,b+\lambda\rangle,
    \qquad
    \lambda\in C^2(\Sigma_3,\ZZ_2),
    \qquad
    \dd\lambda=0 .
    \label{eq:cubic-UB-cochain}
\end{equation}
The support of $\lambda$ is Poincar\'e dual to a closed line in
$\Sigma_3$, as appropriate for a $2$-form symmetry.

We now give a local representative on an arbitrary finite regular
cellulation $\mathcal C$ of $\Sigma_3$.  Place one $a$ qubit in every
3-cell $c\in\mathcal C_3$ and one $b$ qubit on every
two-face $f\in\mathcal C_2$.  Define
\begin{equation}
    \chi_c
    :=
    (\dd b)(c)
    =
    \sum_{f\subset\partial c}b_f
    \quad (\mathrm{mod}\;2),
    \qquad
    \mathcal B_c
    :=
    \prod_{f\subset\partial c}Z_f^b
    =
    (-1)^{\chi_c}.
    \label{eq:cubic-flux}
\end{equation}
Let $S=\operatorname{diag}(1,i)$ and set
\begin{equation}
    \mathcal S_c
    :=
    \left(
        \prod_{f\subset\partial c}S_f^b
    \right)
    \left(
        \prod_{\{f,f'\}\subset\partial c}
        \mathrm{CZ}_{f,f'}^b
    \right),
    \label{eq:cubic-cell-circuit}
\end{equation}
where the second product is over all unordered pairs of distinct boundary
faces.  If $\widetilde\chi_c\in\{0,1\}$ is the canonical integer lift of
$\chi_c$, then
\begin{equation}
    \mathcal S_c|b\rangle
    =
    i^{\widetilde\chi_c}|b\rangle,
    \qquad
    \mathcal S_c^2=\mathcal B_c .
    \label{eq:cubic-cell-circuit-action}
\end{equation}
For a cell with boundary-face bits $b_1,\ldots,b_q$, this follows from
\begin{equation}
    i^{b_1\oplus\cdots\oplus b_q}
    =
    i^{\sum_rb_r}
    (-1)^{\sum_{r<s}b_rb_s}.
\end{equation}
Thus only one-qubit $S$ gates and two-qubit controlled-$Z$ gates are
required, independently of the shape of the cell.

To identify the represented cochain, orient the 3-cells coherently and
let $\widetilde b$ be the canonical integer lift of $b$.  Define
\begin{equation}
    n_c:=(\dd\widetilde b)(c)\in\ZZ,
    \qquad
    \omega_{3,\mathcal C}^{\mathrm{lift}}(b)(c)
    :=
    \frac{n_c-\widetilde\chi_c}{2}
    \quad (\mathrm{mod}\;2).
    \label{eq:cubic-cubical-omega}
\end{equation}
The numerator is even.  This integer-lift expression is a cellular
representative of the same Bockstein descent as
$b\cup_1b+b\cup_2\dd b$~\cite{CH21}.  Cell by cell,
\begin{equation}
    i^{\widetilde\chi_c}
    =
    (-1)^{\omega_{3,\mathcal C}^{\mathrm{lift}}(b)(c)}
    i^{n_c}.
\end{equation}
Since the oriented contributions of every internal face cancel,
$\sum_cn_c=\int_{\Sigma_3}\dd\widetilde b=0$, and hence
\begin{equation}
    \prod_c\mathcal S_c|b\rangle
    =
    (-1)^{
        \int_{\Sigma_3}
        \omega_{3,\mathcal C}^{\mathrm{lift}}(b)
    }|b\rangle .
    \label{eq:cubic-cell-product}
\end{equation}

It is useful to record the corresponding operator identity.  For a mod-two
two-chain $\Gamma=\sum_f\Gamma_f f$, write
\begin{equation}
    Z_b(\Gamma)
    :=
    \prod_f\bigl(Z_f^b\bigr)^{\Gamma_f},
    \qquad
    \mathsf F_{\mathcal C}
    :=
    \sum_{f\in\mathcal C_2}f,
    \qquad
    C_b^{\mathcal C}
    :=
    \prod_{c\in\mathcal C_3}
    \prod_{\{f,f'\}\subset\partial c}
    \mathrm{CZ}_{f,f'}^b .
\end{equation}
Every face belongs to two 3-cells.  Therefore $S_f^2=Z_f^b$ gives
\begin{equation}
    F_b^{\mathcal C}
    :=
    \prod_{c\in\mathcal C_3}\mathcal S_c
    =
    Z_b(\mathsf F_{\mathcal C})C_b^{\mathcal C}
    =
    \left(\prod_{f\in\mathcal C_2}Z_f^b\right)
    C_b^{\mathcal C}.
    \label{eq:cubic-cellulation-decomposition}
\end{equation}
The total symmetry generators are
\begin{equation}
    U_{A,\mathcal C}
    =
    K_aF_b^{\mathcal C},
    \qquad
    K_a
    :=
    \prod_{c\in\mathcal C_3}X_c^a,
    \label{eq:cubic-total-UA}
\end{equation}
and
\begin{equation}
    U_B(\lambda)
    =
    \prod_{f\in\mathcal C_2}
    \bigl(X_f^b\bigr)^{\lambda(f)},
    \qquad
    \dd\lambda=0 .
    \label{eq:cubic-total-UB}
\end{equation}
A closed transformation leaves every $\chi_c$ invariant, so
$[U_{A,\mathcal C},U_B(\lambda)]=1$.  Moreover,
\begin{equation}
    U_{A,\mathcal C}^2
    =
    \prod_c\mathcal B_c
    =
    1
\end{equation}
on a closed cellulation.  An elementary closed $2$-form transformation
surrounding a primal edge $e$ is
\begin{equation}
    U_2(e)
    =
    \prod_{f\supset e}X_f^b ,
    \label{eq:cubic-elementary-U2}
\end{equation}
whose support is the smallest dual-lattice loop linking $e$.

\subsection{Arbitrary ordered triangulations and the
Goldstein--Turner factor}

The cellular circuit above should not be identified face by face with the
standard ordered-simplicial density without keeping track of the ordering.
Let $K$ be an arbitrary finite triangulation of $\Sigma_3$, equipped with a
total order $\prec$ on its vertices.  On an ordered tetrahedron
$t=[0123]$, write
\begin{equation}
    x_0=b_{123},
    \qquad
    x_1=b_{023},
    \qquad
    x_2=b_{013},
    \qquad
    x_3=b_{012}.
\end{equation}
The integer coboundary and the canonical lift of the mod-two coboundary are
\begin{equation}
    n_t
    =
    x_0-x_1+x_2-x_3,
    \qquad
    \widetilde\chi_t
    =
    x_0\oplus x_1\oplus x_2\oplus x_3.
\end{equation}
For the standard simplicial cup-$i$ convention,
\begin{equation}
\begin{aligned}
    (b\cup_1b)(0123)
    &=x_1x_3+x_0x_2,\\
    (b\cup_2\dd b)(0123)
    &=(x_1+x_3)(x_0+x_1+x_2+x_3).
\end{aligned}
    \label{eq:cubic-cup-i-tetrahedron}
\end{equation}
Equivalently, the same density has the integer-lift expansion
\begin{equation}
\begin{aligned}
    \omega_{3,\prec}^{\triangle}(b)(0123)
    &:=
    \bigl(b\cup_1b+b\cup_2\dd b\bigr)(0123)\\
    &=
    \frac{n_t-\widetilde\chi_t}{2}\\
    &=
    \sum_{0\leq r<s\leq3}x_rx_s+x_1+x_3
    \quad (\mathrm{mod}\;2)\\
    &=
    \sum_{\{f,f'\}\subset\partial t}b_fb_{f'}
    +b_{023}+b_{012}
    \quad (\mathrm{mod}\;2).
\end{aligned}
    \label{eq:cubic-branched-omega3}
\end{equation}
Thus every tetrahedron contributes the six pairwise controlled-$Z$ gates
among its four boundary-face qubits, together with one $Z^b$ on each of the
two faces $[012]$ and $[023]$.

Define the $\mathbb Z_2$-valued two-chains
\begin{equation}
    L_\prec
    :=
    \sum_{t=[0123]\in K_3}
    \bigl([012]_t+[023]_t\bigr),
    \qquad
    \mathsf F_K
    :=
    \sum_{f\in K_2}f ,
    \label{eq:cubic-distinguished-face-chain}
\end{equation}
where repeated faces cancel.  Multiplying
Eq.~\eqref{eq:cubic-branched-omega3} over all tetrahedra gives the exact
off-shell identity
\begin{equation}
    \mathcal U_{\omega_3}|b\rangle
    :=
    (-1)^{
        \langle\omega_{3,\prec}^{\triangle}(b),[K]\rangle
    }|b\rangle,
    \qquad
    \mathcal U_{\omega_3}
    =
    Z_b(L_\prec)C_b^K .
    \label{eq:cubic-branched-omega3-circuit}
\end{equation}

We now identify the linear chain using the regular-face construction of
Goldstein and Turner~\cite{GoldsteinTurner1976}, recalled in
Appendix~\ref{subsec:square-vs-triangular-gauge}.  Every triangle is regular
in itself, while the regular two-faces of $[0123]$ are precisely $[012]$
and $[023]$.  Consequently,
\begin{equation}
    \mathfrak w_1^{\mathrm{GT}}(K,\prec)
    :=
    \sum_{\substack{t\in K\\ \dim t\geq2}}\alpha_2(t)
    =
    \mathsf F_K+L_\prec .
    \label{eq:cubic-goldstein-turner-chain}
\end{equation}
This is a two-cycle, and its homology class is
\begin{equation}
    \bigl[\mathfrak w_1^{\mathrm{GT}}(K,\prec)\bigr]
    =
    w^1(T\Sigma_3)\cap[\Sigma_3]
    =
    \operatorname{PD}\bigl(w^1(T\Sigma_3)\bigr).
    \label{eq:cubic-goldstein-turner-class}
\end{equation}
It follows that the standard simplicial descent is
\begin{equation}
\begin{aligned}
    \mathcal U_{\omega_3}
    &=
    Z_b\!\left(\mathfrak w_1^{\mathrm{GT}}\right)
    Z_b(\mathsf F_K)C_b^K\\
    &=
    \left[
        \prod_{f\in K_2}
        \bigl(Z_f^b\bigr)^{\mathfrak w_1^{\mathrm{GT}}(f)}
    \right]
    \left(\prod_{f\in K_2}Z_f^b\right)
    \left[
        \prod_{t\in K_3}
        \prod_{\{f,f'\}\subset\partial t}
        \mathrm{CZ}_{f,f'}^b
    \right].
\end{aligned}
    \label{eq:cubic-goldstein-turner-corrected-circuit}
\end{equation}
Here, $\mathfrak w_1^{\mathrm{GT}}(f)$ is the coefficient of $f$ in the
particular chain in Eq.~\eqref{eq:cubic-goldstein-turner-chain}; it is not
the evaluation of an arbitrary cocycle representative.

\subsection{Removing the Stiefel--Whitney factor}

Because $\Sigma_3=\partial M_4$ is oriented, the cycle in
Eq.~\eqref{eq:cubic-goldstein-turner-chain} is a boundary.  More explicitly,
fix the orientation of $\Sigma_3$ and let
$D_\prec\in C_3(K,\ZZ_2)$ be the sum of the tetrahedra whose ordered
orientation disagrees with it.  The orientation-wall description of the
Goldstein--Turner chain gives
\begin{equation}
    \partial D_\prec
    =
    \mathfrak w_1^{\mathrm{GT}}(K,\prec).
    \label{eq:cubic-goldstein-turner-boundary}
\end{equation}
For every tetrahedron, define
\begin{equation}
    C_t^{ab}
    :=
    \prod_{f\subset\partial t}
    \mathrm{CZ}_{a_t,b_f},
    \qquad
    R_\prec
    :=
    \prod_{t\in D_\prec}C_t^{ab}.
    \label{eq:cubic-goldstein-turner-removal-circuit}
\end{equation}
The tetrahedron--face incidence graph has maximum degree four, so the gates
in $R_\prec$ can be scheduled in four layers.  Conjugating the onsite
$a$-qubit flip gives
\begin{equation}
    R_\prec K_aR_\prec^{-1}
    =
    K_aZ_b(\partial D_\prec)
    =
    K_aZ_b\!\left(\mathfrak w_1^{\mathrm{GT}}\right).
    \label{eq:cubic-goldstein-turner-removal-Ka}
\end{equation}
The same circuit preserves every closed $2$-form transformation:
\begin{equation}
    R_\prec U_B(\lambda)R_\prec^{-1}
    =
    U_B(\lambda)
    \prod_{t\in D_\prec}
    \bigl(Z_t^a\bigr)^{(\dd\lambda)(t)}
    =
    U_B(\lambda).
    \label{eq:cubic-goldstein-turner-removal-UB}
\end{equation}
Since $R_\prec$ commutes with all diagonal $b$-qubit gates,
Eq.~\eqref{eq:cubic-goldstein-turner-corrected-circuit} implies
\begin{equation}
    R_\prec
    \bigl(K_a\mathcal U_{\omega_3}\bigr)
    R_\prec^{-1}
    =
    K_aF_b^K .
    \label{eq:cubic-goldstein-turner-removal-UA}
\end{equation}
Thus the order-dependent Stiefel--Whitney factor does not affect the anomaly
or the Bockstein statistic in the oriented setting considered here: after
including the $a$ qubits, it is removed by a finite-depth,
$U_B$-preserving change of basis.  It should nevertheless be retained in
the literal off-shell circuit on an arbitrary ordered triangulation.  On a
nonorientable three-manifold the same cycle need not be a boundary.

Below we use the face-local cellulation gauge, in which gluing is manifest.
On an ordered triangulation the corresponding patches are obtained by the
common conjugation $X_I\mapsto R_\prec^{-1}X_IR_\prec$ and
$Y_J\mapsto R_\prec^{-1}Y_JR_\prec$.  The Bockstein word is conjugated as a
whole, so its scalar value is unchanged.

\subsection{Volume and line patches on an arbitrary cellulation}

We return to a regular cellulation $\mathcal C$.  Let $I$ be a union of
3-cells and define the finite $0$-form patch by restricting the local
symmetry density:
\begin{equation}
    X_I
    :=
    \prod_{c\subset I}X_c^a\mathcal S_c .
    \label{eq:cubic-XI}
\end{equation}
The factors commute, so these patches glue exactly under unions with
disjoint interiors.  Their twofold fusion leaves the $2$-form symmetry
operator on the boundary:
\begin{equation}
    X_I^2
    =
    \prod_{c\subset I}\mathcal B_c
    =
    \prod_{f\subset\partial I}Z_f^b .
    \label{eq:cubic-X-square}
\end{equation}

For the $2$-form patch, let an oriented dual path pass through
\begin{equation}
    J:\quad
    c_0\xrightarrow{f_1}c_1
    \xrightarrow{f_2}\cdots
    \xrightarrow{f_n}c_n ,
\end{equation}
where $f_r$ is the primal face crossed between $c_{r-1}$ and $c_r$.  Define
\begin{equation}
    P_J
    :=
    \prod_{r=1}^nX_{f_r}^b,
    \qquad
    C_c^{ab}
    :=
    \prod_{f\subset\partial c}
    \mathrm{CZ}_{a_c,b_f}.
\end{equation}
On one elementary dual edge, set
\begin{equation}
    Y_{c_{r-1}\to c_r}
    :=
    C_{c_{r-1}}^{ab}X_{f_r}^bC_{c_r}^{ab}.
\end{equation}
The ordered product along the path is
\begin{equation}
    Y_J
    =
    Y_{c_0\to c_1}\cdots Y_{c_{n-1}\to c_n}
    =
    C_{c_L}^{ab}P_JC_{c_R}^{ab},
    \qquad
    c_L=c_0,
    \quad
    c_R=c_n .
    \label{eq:cubic-YJ}
\end{equation}
The two endpoint stars at every internal cell are adjacent and cancel.
This also gives exact gluing of adjacent line patches.

\subsection{Symmetry and fusion of the patches}

The volume patch commutes with both global symmetries:
\begin{equation}
    [X_I,U_{A,\mathcal C}]
    =
    [X_I,U_B(\lambda)]
    =
    1 .
\end{equation}
For the line patch, the open string toggles $\chi_{c_L}$ and
$\chi_{c_R}$, while every internal cell is crossed twice.  Since
$i^{1-\chi}/i^\chi=i(-1)^\chi$, the two endpoints give
\begin{equation}
    F_b^{\mathcal C}P_J
    \bigl(F_b^{\mathcal C}\bigr)^{-1}
    =
    -\mathcal B_{c_L}\mathcal B_{c_R}P_J .
    \label{eq:cubic-F-conjugation}
\end{equation}
On the other hand,
\begin{equation}
    K_aC_c^{ab}K_a^{-1}
    =
    \mathcal B_cC_c^{ab}.
\end{equation}
Moving $\mathcal B_{c_R}$ through $P_J$ produces one further minus sign,
because $J$ crosses one face of $c_R$.  Therefore
\begin{equation}
    K_aY_JK_a^{-1}
    =
    -\mathcal B_{c_L}\mathcal B_{c_R}Y_J,
\end{equation}
and the two endpoint factors cancel those in
Eq.~\eqref{eq:cubic-F-conjugation}.  Moreover,
\begin{equation}
    U_B(\lambda)C_c^{ab}U_B(\lambda)^{-1}
    =
    \bigl(Z_c^a\bigr)^{(\dd\lambda)(c)}C_c^{ab}
    =
    C_c^{ab}.
\end{equation}
It follows that
\begin{equation}
    [Y_J,U_{A,\mathcal C}]
    =
    [Y_J,U_B(\lambda)]
    =
    1 .
\end{equation}
Finally, $P_J$ intersects the boundary of each endpoint cell once, so the
endpoint stars cancel after conjugation and
\begin{equation}
    Y_J^2
    =
    Z_{c_L}^aZ_{c_R}^a .
    \label{eq:cubic-Y-square}
\end{equation}
Equations~\eqref{eq:cubic-X-square} and \eqref{eq:cubic-Y-square} show that
the twofold fusion of either patch leaves a defect of the other symmetry.

\subsection{Evaluation of the Bockstein word}

Let $\alpha_I(c)$ be the characteristic function of $I$, and let
$\lambda_J$ be the two-cochain supported on the crossed faces
$f_1,\ldots,f_n$.  On a computational-basis state,
\begin{equation}
    X_I|a,b\rangle
    =
    i^{\phi_I(b)}|a+\alpha_I,b\rangle,
    \qquad
    \phi_I(b)
    :=
    \sum_{c\subset I}\widetilde\chi_c
    \quad (\mathrm{mod}\;4).
    \label{eq:cubic-X-action}
\end{equation}
Normal-ordering the line patch gives
\begin{equation}
    Y_J
    =
    P_JE_J,
    \qquad
    E_J
    :=
    Z_{c_L}^aC_{c_L}^{ab}C_{c_R}^{ab}.
    \label{eq:cubic-Y-normal-order}
\end{equation}
Consequently,
\begin{equation}
    Y_J|a,b\rangle
    =
    (-1)^{h_J(a,b)}|a,b+\lambda_J\rangle,
\end{equation}
where
\begin{equation}
    h_J(a,b)
    =
    a_{c_L}(1+\chi_{c_L})
    +a_{c_R}\chi_{c_R}
    \quad (\mathrm{mod}\;2).
    \label{eq:cubic-hJ}
\end{equation}

The Bockstein word is
\begin{equation}
    W_2(X_I,Y_J)
    =
    (Y_J^{-1}X_I^{-1})^2(Y_JX_I)^2 .
\end{equation}
The phase $i^{\phi_I(b)}$ cancels between the two alternating staircases.
The remaining sign is the mixed second difference of $h_J$:
\begin{equation}
    W_2(X_I,Y_J)|a,b\rangle
    =
    (-1)^\Omega|a,b\rangle,
\end{equation}
with
\begin{equation}
\begin{aligned}
    \Omega
    ={}&
    h_J(a+\alpha_I,b)
    +h_J(a,b+\lambda_J)
    +h_J(a+\alpha_I,b+\lambda_J)
    +h_J(a,b)
    \quad (\mathrm{mod}\;2).
\end{aligned}
\end{equation}
The open line toggles both endpoint fluxes, so
\begin{equation}
    \Omega
    =
    \alpha_I(c_L)+\alpha_I(c_R)
    \quad (\mathrm{mod}\;2).
\end{equation}
Choose the staggered geometry in which $c_L$ lies inside $I$ and $c_R$
lies outside.  Then $\Omega=1$, and hence
\begin{equation}
    W_2(X_I,Y_J)
    =
    -1 .
    \label{eq:cubic-W2-minus-one}
\end{equation}
This is the particle--membrane Bockstein braiding invariant in three
spatial dimensions: $X_I$ creates a membrane on $\partial I$, while the
open $2$-form patch $Y_J$ creates particles at its endpoints.

\subsection{Cubic-lattice specialization and a controlled-$Z$-only gauge}

For the standard cubic cellulation, each 3-cell has six boundary faces.
Thus $\mathcal S_c$ contains six $S$ gates and fifteen controlled-$Z$ gates,
and all formulas above reduce to the cubic-lattice expressions.  There is
one further simplification because the cubes admit a checkerboard
two-coloring.  Let $\mathcal C_\bullet$ denote one cube color and define
\begin{equation}
    R_\square
    :=
    \prod_{c\in\mathcal C_\bullet}C_c^{ab}.
\end{equation}
The cube--face incidence graph has maximum degree six, so six controlled-$Z$ layers suffice.
Every face belongs to exactly one cube in $\mathcal C_\bullet$, so
\begin{equation}
    R_\square K_aR_\square^{-1}
    =
    K_a\prod_fZ_f^b,
    \qquad
    R_\square U_B(\lambda)R_\square^{-1}
    =
    U_B(\lambda).
\end{equation}
Since $R_\square$ commutes with the diagonal $b$-qubit circuit,
Eq.~\eqref{eq:cubic-cellulation-decomposition} becomes
\begin{equation}
    R_\square U_{A,\mathcal C}R_\square^{-1}
    =
    K_aC_b^{\mathcal C}.
\end{equation}
Thus, the cubic lattice admits a controlled-$Z$-only representative after a
finite-depth, $U_B$-preserving change of basis.

\section{Continuum gauge-theory constructions}
\label{app:continuum-gauge-theories}

We now describe continuum gauge theories that realize Bockstein braiding statistics.
For concreteness, we focus on $d=3$ spatial dimensions and
$p=q=1$, so the relevant symmetries are two $\mathbb Z_N$ 1-form symmetries.

\subsection{Abelian gauge theory}

Consider a $U(1)$ gauge theory in $(3{+}1)$D coupled to a Higgs scalar
of charge $N$.  For $N=2$, this is an effective description of an s-wave
superconductor \cite{Hansson_2004}.  The theory has a $\mathbb Z_N$ electric
center 1-form symmetry and a $U(1)$ magnetic 1-form symmetry.  We focus on
a discrete subgroup $\mathbb Z_N\subset U(1)$ of the magnetic symmetry.  These
two $\mathbb Z_N$ 1-form symmetries have the mixed Bockstein anomaly with
$k=1$ \cite{Benini:2018reh,Hsin:2020nts,Hsin:2025ido}.

The excitations that realize the Bockstein braiding are fractional electric
and magnetic loop excitations.  They should not be viewed as genuine electric
charges or monopole particles; rather, they are loop operators associated with
fractional charge and fractional magnetic flux.
The anomaly implies that the theory cannot flow to a gapped phase preserving both $\mathbb Z_N$ 1-form symmetries.

There are two familiar regimes.
\begin{itemize}
    \item If the Higgs field does not condense, the theory flows to gapless
    Maxwell theory.  The electric center 1-form symmetry is enlarged to
    $U(1)\supset\mathbb Z_N$, and the $\mathbb Z_N\times\mathbb Z_N$ subgroup
    has the same mixed anomaly.  Both 1-form symmetries are spontaneously
    broken because electric charges and monopoles are deconfined.  The
    Bockstein braiding descends from the statistical interaction between
    electric and magnetic sources \cite{Gaiotto:2014kfa}.  For example, if a
    dyon of electric and magnetic charges $(q,m)$ is placed at the origin and
    another dyon $(q',m')$ is moved along a closed path subtending solid angle
    $\Omega$, the Berry phase is
    \[
        \exp\left(\frac{i}{2}(q'm-m'q)\Omega\right).
    \]

    \item If the Higgs field condenses, the theory flows to $\mathbb Z_N$ gauge
    theory.  Electric charges are deconfined, so the electric center 1-form
    symmetry is spontaneously broken.  Monopoles are confined by the Higgs
    condensate, so the magnetic 1-form symmetry remains unbroken.  However,
    the magnetic symmetry is fractionalized: the Abrikosov--Nielsen--Olesen
    string carries flux $2\pi/N$ in the resulting $\mathbb Z_N$ gauge theory
    \cite{Hsin:2019fhf,Hsin:2025jot,Hsin:2025ido}.
\end{itemize}

\subsection{Non-Abelian gauge theory}

Next, consider $SU(N^2)/\mathbb Z_N$ gauge theory in $(3{+}1)$D
coupled to $N_f$ adjoint Weyl fermions.  The dynamics is not known for all
$N$ and $N_f$, but the symmetry structure is universal.  The theory has a
$\mathbb Z_N$ center 1-form symmetry and a $\mathbb Z_N$ magnetic 1-form
symmetry, and these two symmetries have the mixed Bockstein anomaly with
$k=1$ \cite{Benini:2018reh,Hsin:2020nts,Hsin:2025ido}.  This anomaly follows
from the nonsplit extension
\begin{equation}
    1\longrightarrow \mathbb Z_N
    \longrightarrow SU(N^2)
    \longrightarrow SU(N^2)/\mathbb Z_N
    \longrightarrow 1 .
\end{equation}

The generator of the center 1-form symmetry is a fractional monopole surface
operator carrying holonomy in the center of the gauge group.  It braids with Wilson lines.
The generator of the magnetic 1-form symmetry is a topological symmetry
surface operator measuring the $\mathbb Z_N$-valued discrete 't Hooft flux,
namely the discrete part of the Goddard--Nuyts--Olive (GNO) magnetic flux.
Equivalently, this is the obstruction class in
$H^2(M,\mathbb Z_N)$ to lifting the $SU(N^2)/\mathbb Z_N$ gauge bundle
to an $SU(N^2)$ bundle.

The corresponding Bockstein braiding excitations are fractional monopole loops and
fractional charge loops.  Again, these are not genuine monopole or charge
particles; they are loop excitations.  The anomaly forbids the theory from flowing to a gapped phase preserving these symmetries.
Consider several special cases for which the dynamics is known:
\begin{itemize}
    \item For $N_f=0$, the theory is pure Yang-Mills.  The Wilson line is
    confined, and the low-energy theory is a $\mathbb Z_N$ 2-form gauge
    theory, equivalently the dual of $\mathbb Z_N$ gauge theory
    \cite{Gaiotto:2014kfa}.  The 't Hooft line becomes the Wilson line of the
    emergent $\mathbb Z_N$ gauge theory, so the magnetic 1-form symmetry is
    spontaneously broken.  The center 1-form symmetry is unbroken because
    Wilson lines are confined, but it fractionalizes on the $\mathbb Z_N$ flux
    of the low-energy gauge theory \cite{Hsin:2025ido}.

    \item For $N_f=1$, the related $\mathrm{SU}(N^2)$ theory is
    $\mathcal N=1$ super--Yang--Mills and has $N^2$
    vacua~\cite{Witten:1982df}. These vacua differ by stacking SPT phases
    for the $\mathbb Z_{N^2}$ center 1-form
    symmetry~\cite{Gaiotto:2014kfa},
    \begin{equation}
        2\pi i\frac{\nu}{2N^2}
        \int_{\mathcal M_4}\mathcal P(B),
        \qquad
        \nu=0,1,\ldots,N^2-1,
    \end{equation}
    where $\mathcal M_4$ is a four-dimensional spin spacetime,
    $B\in H^2(\mathcal M_4,\mathbb Z_{N^2})$ is the background field for
    the center 1-form symmetry, and
    $\mathcal P(B)=B\cup B+B\cup_1 dB$ denotes its Pontryagin square.
    The $SU(N^2)/\mathbb Z_N$ theory is obtained by gauging the
    $\mathbb Z_N\subset\mathbb Z_{N^2}$ subgroup of the center 1-form
    symmetry.  On this subgroup, one can write $B=Nb$ with $b$ a
    $\mathbb Z_N$ 2-form gauge field, and the SPT terms above become trivial
    as functions of $b$.  Thus, in each vacuum, gauging produces
    $\mathbb Z_N$ gauge theory, the low-energy description of the
    three-dimensional $\mathbb Z_N$ toric code.  The anomaly is matched as
    follows: the $\mathbb Z_N$ charge spontaneously breaks one
    $\mathbb Z_N$ 1-form symmetry, while the $\mathbb Z_N$ flux carries
    fractional charge under the other \cite{Hsin:2019fhf,Hsin:2025jot,Hsin:2025ido}.
\end{itemize}

\section{Equivalence with the previous particle-membrane invariant}
\label{app:eq41}

Ref.~\cite{kobayashi2024generalized} introduced a mutual statistical process
in $(3{+}1)$D between a $\ZZ_2$ particle and a membrane, as shown in
Fig.~\ref{fig:particle_membrane}. The process is defined as
\begin{equation}
    P
    =
    U_{t_2}^{-1}[U_e,U_{t_1}]U_{t_2}
    U_e^{-1}[U_e,U_{t_1}]U_e .
\end{equation}

We now show that this process is equivalent to the Bockstein braiding process
\begin{equation}
    W_2(U_{t_1},U_e)
    =
    \left(U_e^{-1}U_{t_1}^{-1}\right)^2
    \left(U_eU_{t_1}\right)^2 .
\end{equation}

First, consider the outer region, namely the complement of $t_1\cup t_2$. Let
$U_{\mathrm{out}}$ denote the corresponding volume operator. Since the support
of $U_{\mathrm{out}}$ is disjoint from that of $[U_e,U_{t_1}]$, we may insert
$U_{\mathrm{out}}^{-1}U_{\mathrm{out}}$ without changing the statistical
process:
\begin{equation}
    P
    \sim
    U_{t_2}^{-1}U_{\mathrm{out}}^{-1}
    [U_e,U_{t_1}]
    U_{\mathrm{out}}U_{t_2}
    U_e^{-1}[U_e,U_{t_1}]U_e .
\end{equation}
Here $\sim$ denotes equivalence of statistical processes.

The combined operator $U_{\mathrm{out}}U_{t_2}U_{t_1}$ preserves the
excitation configuration. Since the corresponding Abelian configuration
sector is one-dimensional, its restriction to that sector is scalar and
therefore commutes with $[U_e,U_{t_1}]$ when evaluating the statistical
process. Therefore,
\begin{align}
    P
    &\sim
    U_{t_2}^{-1}U_{\mathrm{out}}^{-1}
    [U_e,U_{t_1}]
    U_{\mathrm{out}}U_{t_2}
    U_e^{-1}[U_e,U_{t_1}]U_e \nonumber\\
    &=
    U_{t_1}
    \left(U_{\mathrm{out}}U_{t_2}U_{t_1}\right)^{-1}
    [U_e,U_{t_1}]
    \left(U_{\mathrm{out}}U_{t_2}U_{t_1}\right)
    U_{t_1}^{-1}
    U_e^{-1}[U_e,U_{t_1}]U_e \nonumber\\
    &\sim
    U_{t_1}[U_e,U_{t_1}]U_{t_1}^{-1}
    U_e^{-1}[U_e,U_{t_1}]U_e .
\end{align}
By the initial-state independence theorem, we may further conjugate the process by $U_e$, obtaining
\begin{equation}
    P
    \sim
    U_eU_{t_1}[U_e,U_{t_1}]U_{t_1}^{-1}
    U_e^{-1}[U_e,U_{t_1}] .
\end{equation}
Equivalently, the same process may be written as
\begin{equation}
\begin{aligned}
    P&\sim
    U_e^{-1}U_{t_1}^{-1}[U_e,U_{t_1}]U_{t_1}
    U_e[U_e,U_{t_1}] \\
    &=
    U_e^{-1}U_{t_1}^{-1}
    \left(U_e^{-1}U_{t_1}^{-1}U_eU_{t_1}\right)
    U_{t_1}U_e
    \left(U_e^{-1}U_{t_1}^{-1}U_eU_{t_1}\right) \\
    &=
    U_e^{-1}U_{t_1}^{-1}U_e^{-1}U_{t_1}^{-1}
    U_eU_{t_1}U_eU_{t_1} \\
    &=
    \left(U_e^{-1}U_{t_1}^{-1}\right)^2
    \left(U_eU_{t_1}\right)^2 .
\end{aligned}
\end{equation}
This is precisely the Bockstein braiding process $W_2(U_{t_1},U_e)$.

\begin{figure}
    \centering
    \includegraphics[width=0.3\linewidth]{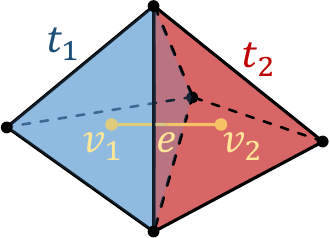}
    \caption{Geometry of particle-membrane statistics (adapted from
    Ref.~\cite{kobayashi2024generalized}).}
    \label{fig:particle_membrane}
\end{figure}

\bibliographystyle{utphys}
\bibliography{bibliography}

@article{Hsin:2025ria,
    author = "Hsin, Po-Shen and Kobayashi, Ryohei and Zhang, Carolyn",
    title = "{Anomalies of coset non-invertible symmetries}",
    eprint = "2503.00105",
    archivePrefix = "arXiv",
    primaryClass = "cond-mat.str-el",
    doi = "10.21468/SciPostPhys.20.1.006",
    journal = "SciPost Phys.",
    volume = "20",
    number = "1",
    pages = "006",
    year = "2026"
}

@article{Hsin:2019fhf,
    author = "Hsin, Po-Shen and Turzillo, Alex",
    title = "{Symmetry-enriched quantum spin liquids in (3 + 1)$d$}",
    eprint = "1904.11550",
    archivePrefix = "arXiv",
    primaryClass = "cond-mat.str-el",
    reportNumber = "CALT-TH-2019-014",
    doi = "10.1007/JHEP09(2020)022",
    journal = "JHEP",
    volume = "09",
    pages = "022",
    year = "2020"
}

@article{Hsin:2025jot,
    author = "Hsin, Po-Shen and Kobayashi, Ryohei",
    title = "{Generalized Hall conductivities in local commuting projector models: Generalized symmetries and protected surface modes}",
    eprint = "2505.20384",
    archivePrefix = "arXiv",
    primaryClass = "cond-mat.str-el",
    doi = "10.1103/94bv-37h1",
    journal = "Phys. Rev. B",
    volume = "113",
    number = "7",
    pages = "075106",
    year = "2026"
}

@article{Hsin:2025ido,
    author = "Hsin, Po-Shen",
    title = "{Generalized Symmetries Phase Transitions with Local Quantum Fields}",
    eprint = "2506.07688",
    archivePrefix = "arXiv",
    primaryClass = "cond-mat.str-el",
    month = "6",
    year = "2025"
}

@article{Else2014Classifying,
  title = {Classifying symmetry-protected topological phases through the anomalous action of the symmetry on the edge},
  author = {Else, Dominic V. and Nayak, Chetan},
  journal = {Phys. Rev. B},
  volume = {90},
  issue = {23},
  pages = {235137},
  numpages = {19},
  year = {2014},
  month = {Dec},
  publisher = {American Physical Society},
  doi = {10.1103/PhysRevB.90.235137},
  url = {https://link.aps.org/doi/10.1103/PhysRevB.90.235137}
}

@article{Else2020LSM,
  title = {Topological theory of {L}ieb-{S}chultz-{M}attis theorems in quantum spin systems},
  author = {Else, Dominic V. and Thorngren, Ryan},
  journal = {Phys. Rev. B},
  volume = {101},
  issue = {22},
  pages = {224437},
  numpages = {35},
  year = {2020},
  month = {Jun},
  publisher = {American Physical Society},
  doi = {10.1103/PhysRevB.101.224437},
  url = {https://link.aps.org/doi/10.1103/PhysRevB.101.224437}
}

@article{Kapustin:2013uxa,
    author = "Kapustin, Anton and Thorngren, Ryan",
    title = "{Higher Symmetry and Gapped Phases of Gauge Theories}",
    eprint = "1309.4721",
    archivePrefix = "arXiv",
    primaryClass = "hep-th",
    doi = "10.1007/978-3-319-59939-7_5",
    journal = "Prog. Math.",
    volume = "324",
    pages = "177--202",
    year = "2017"
}

@Article{CH21,
	title={{Exactly solvable lattice Hamiltonians and gravitational anomalies}},
	author={Yu-An Chen and Po-Shen Hsin},
	journal={SciPost Phys.},
	volume={14},
	pages={089},
	year={2023},
	publisher={SciPost},
	doi={10.21468/SciPostPhys.14.5.089},
	url={https://scipost.org/10.21468/SciPostPhys.14.5.089},
}

@article{Gaiotto:2014kfa,
    author = "Gaiotto, Davide and Kapustin, Anton and Seiberg, Nathan and Willett, Brian",
    title = "{Generalized Global Symmetries}",
    eprint = "1412.5148",
    archivePrefix = "arXiv",
    primaryClass = "hep-th",
    doi = "10.1007/JHEP02(2015)172",
    journal = "JHEP",
    volume = "02",
    pages = "172",
    year = "2015"
}

@article{Chen2023Highercup,
    author = {Chen, Yu-An and Tata, Sri},
    title = {Higher cup products on hypercubic lattices: Application to lattice models of topological phases},
    journal = {Journal of Mathematical Physics},
    volume = {64},
    number = {9},
    pages = {091902},
    year = {2023},
    month = {09},
    issn = {0022-2488},
    doi = {10.1063/5.0095189},
    url = {https://doi.org/10.1063/5.0095189},
}

@article{Hansson_2004,
   title={Superconductors are topologically ordered},
   volume={313},
   ISSN={0003-4916},
   url={http://dx.doi.org/10.1016/j.aop.2004.05.006},
   DOI={10.1016/j.aop.2004.05.006},
   number={2},
   journal={Annals of Physics},
   publisher={Elsevier BV},
   author={Hansson, T.H. and Oganesyan, Vadim and Sondhi, S.L.},
   year={2004},
   month=Oct, pages={497–538} }

@article{Cordova:2019bsd,
    author = "C{\'o}rdova, Clay and Ohmori, Kantaro",
    title = "{Anomaly Obstructions to Symmetry Preserving Gapped Phases}",
    eprint = "1910.04962",
    archivePrefix = "arXiv",
    primaryClass = "hep-th",
    month = "10",
    year = "2019"
}

@article{Witten:1982df,
    author = "Witten, Edward",
    title = "{Constraints on Supersymmetry Breaking}",
    reportNumber = "PRINT-82-0163 (PRINCETON)",
    doi = "10.1016/0550-3213(82)90071-2",
    journal = "Nucl. Phys. B",
    volume = "202",
    pages = "253",
    year = "1982"
}

@article{Hsin:2020nts,
    author = "Hsin, Po-Shen and Lam, Ho Tat",
    title = "{Discrete theta angles, symmetries and anomalies}",
    eprint = "2007.05915",
    archivePrefix = "arXiv",
    primaryClass = "hep-th",
    doi = "10.21468/SciPostPhys.10.2.032",
    journal = "SciPost Phys.",
    volume = "10",
    number = "2",
    pages = "032",
    year = "2021"
}

@book{hatcher2002algebraic,
  author    = {Hatcher, Allen},
  title     = {Algebraic Topology},
  publisher = {Cambridge University Press},
  address   = {Cambridge},
  year      = {2002},
  isbn      = {978-0-521-79540-1},
  url       = {https://pi.math.cornell.edu/~hatcher/AT/AT.pdf}
}

@article{Benini:2018reh,
      author         = "Benini, Francesco and C{\'o}rdova, Clay and Hsin, Po-Shen",
      title          = "{On 2-Group Global Symmetries and their Anomalies}",
      journal        = "JHEP",
      volume         = "03",
      year           = "2019",
      pages          = "118",
      doi            = "10.1007/JHEP03(2019)118",
      eprint         = "1803.09336",
      archivePrefix  = "arXiv",
      primaryClass   = "hep-th",
      reportNumber   = "SISSA 10/2018/FISI, SISSA-10-2018-FISI",
      SLACcitation   = "%%CITATION = ARXIV:1803.09336;%%"
}

@article{Kitaev2003Fault,
   title={Fault-tolerant quantum computation by anyons},
   volume={303},
   ISSN={0003-4916},
   url={http://dx.doi.org/10.1016/S0003-4916(02)00018-0},
   DOI={10.1016/s0003-4916(02)00018-0},
   number={1},
   journal={Annals of Physics},
   publisher={Elsevier BV},
   author={Kitaev, A.Yu.},
   year={2003},
   month={Jan},
   pages={2–30}
}

@article{Else2017Cheshire,
  title = {Cheshire charge in (3+1)-dimensional topological phases},
  author = {Else, Dominic V. and Nayak, Chetan},
  journal = {Phys. Rev. B},
  volume = {96},
  issue = {4},
  pages = {045136},
  numpages = {17},
  year = {2017},
  month = {Jul},
  publisher = {American Physical Society},
  doi = {10.1103/PhysRevB.96.045136},
  url = {https://link.aps.org/doi/10.1103/PhysRevB.96.045136}
}

@article{FHH21,
  title = {Gravitational anomaly of $(3+1)$-dimensional $\mathbb{Z}_2$ toric code with fermionic charges and fermionic loop self-statistics},
  author = {Fidkowski, Lukasz and Haah, Jeongwan and Hastings, Matthew B.},
  journal = {Phys. Rev. B},
  volume = {106},
  issue = {16},
  pages = {165135},
  numpages = {33},
  year = {2022},
  month = {Oct},
  publisher = {American Physical Society},
  doi = {10.1103/PhysRevB.106.165135},
  url = {https://link.aps.org/doi/10.1103/PhysRevB.106.165135}
}

@article{CarolynZhangSPTEntangler,
  title = {Topological invariants for symmetry-protected topological phase entanglers},
  author = {Zhang, Carolyn},
  journal = {Phys. Rev. B},
  volume = {107},
  issue = {23},
  pages = {235104},
  numpages = {23},
  year = {2023},
  month = {Jun},
  publisher = {American Physical Society},
  doi = {10.1103/PhysRevB.107.235104},
  url = {https://link.aps.org/doi/10.1103/PhysRevB.107.235104}
}

@article{freedman2003topological,
  author  = {Freedman, Michael H. and Kitaev, Alexei and Larsen, Michael J. and Wang, Zhenghan},
  title   = {Topological quantum computation},
  journal = {Bulletin of the American Mathematical Society},
  volume  = {40},
  number  = {1},
  pages   = {31--38},
  year    = {2003},
  doi     = {10.1090/S0273-0979-02-00964-3}
}

@article{Nayak2008NonAbelian,
  title = {Non-{A}belian anyons and topological quantum computation},
  author = {Nayak, Chetan and Simon, Steven H. and Stern, Ady and Freedman, Michael and Das Sarma, Sankar},
  journal = {Rev. Mod. Phys.},
  volume = {80},
  issue = {3},
  pages = {1083--1159},
  numpages = {0},
  year = {2008},
  month = {Sep},
  publisher = {American Physical Society},
  doi = {10.1103/RevModPhys.80.1083},
  url = {https://link.aps.org/doi/10.1103/RevModPhys.80.1083}
}

@article{Wang2014braiding,
  title = {Braiding Statistics of Loop Excitations in Three Dimensions},
  author = {Wang, Chenjie and Levin, Michael},
  journal = {Phys. Rev. Lett.},
  volume = {113},
  issue = {8},
  pages = {080403},
  numpages = {5},
  year = {2014},
  month = {Aug},
  publisher = {American Physical Society},
  doi = {10.1103/PhysRevLett.113.080403},
  url = {https://link.aps.org/doi/10.1103/PhysRevLett.113.080403}
}

@article{Jiang2014Generalized,
  title = {Generalized Modular Transformations in $(3+1)\mathrm{D}$ Topologically Ordered Phases and Triple Linking Invariant of Loop Braiding},
  author = {Jiang, Shenghan and Mesaros, Andrej and Ran, Ying},
  journal = {Phys. Rev. X},
  volume = {4},
  issue = {3},
  pages = {031048},
  numpages = {16},
  year = {2014},
  month = {Sep},
  publisher = {American Physical Society},
  doi = {10.1103/PhysRevX.4.031048},
  url = {https://link.aps.org/doi/10.1103/PhysRevX.4.031048}
}

@article{kobayashi2024generalized,
  title = {Generalized Statistics on Lattices},
  author = {Kobayashi, Ryohei and Li, Yuyang and Xue, Hanyu and Hsin, Po-Shen and Chen, Yu-An},
  journal = {Phys. Rev. X},
  volume = {16},
  issue = {1},
  pages = {011010},
  numpages = {51},
  year = {2026},
  month = {Jan},
  publisher = {American Physical Society},
  doi = {10.1103/6k88-w52n},
  url = {https://link.aps.org/doi/10.1103/6k88-w52n}
}

@article{feng2025anyonic,
  title = {Anyonic Membranes and {P}ontryagin Statistics},
  author = {Feng, Yitao and Xue, Hanyu and Li, Yuyang and Cheng, Meng and Kobayashi, Ryohei and Hsin, Po-Shen and Chen, Yu-An},
  journal = {Phys. Rev. Lett.},
  volume = {136},
  issue = {8},
  pages = {086601},
  numpages = {7},
  year = {2026},
  month = {Feb},
  publisher = {American Physical Society},
  doi = {10.1103/4jww-6b6t},
  url = {https://link.aps.org/doi/10.1103/4jww-6b6t}
}

@article{feng2025higherformanomalieslattices,
  title = {Higher-Form Anomalies on Lattices},
  author = {Feng, Yitao and Kobayashi, Ryohei and Chen, Yu-An and Ryu, Shinsei},
  journal = {Phys. Rev. Lett.},
  volume = {136},
  issue = {4},
  pages = {046504},
  numpages = {7},
  year = {2026},
  month = {Jan},
  publisher = {American Physical Society},
  doi = {10.1103/2jz1-m1lb},
  url = {https://link.aps.org/doi/10.1103/2jz1-m1lb}
}

@article{xue2025statistics,
  title = {Statistics of {A}belian topological excitations},
  author = {Xue, Hanyu},
  journal = {Phys. Rev. B},
  volume = {113},
  issue = {4},
  pages = {045143},
  numpages = {36},
  year = {2026},
  month = {Jan},
  publisher = {American Physical Society},
  doi = {10.1103/g3nc-fwqg},
  url = {https://link.aps.org/doi/10.1103/g3nc-fwqg}
}

@article{Kitaev2002Topologicalquantummemory,
    author = {Dennis, Eric and Kitaev, Alexei and Landahl, Andrew and Preskill, John},
    title = {Topological quantum memory},
    journal = {Journal of Mathematical Physics},
    volume = {43},
    number = {9},
    pages = {4452-4505},
    year = {2002},
    month = {09},
    issn = {0022-2488},
    doi = {10.1063/1.1499754},
    url = {https://doi.org/10.1063/1.1499754},
}

@article{kapustin2025higher,
    author = "Kapustin, Anton and Xu, Shixiong",
    title = "{Higher symmetries and anomalies in quantum lattice systems}",
    eprint = "2505.04719",
    archivePrefix = "arXiv",
    primaryClass = "math-ph",
    month = "5",
    year = "2025"
}

@article{kapustin2025higher2,
    author = "Kapustin, Anton and Spodyneiko, Lev",
    title = "{Higher symmetries, anomalies, and crossed squares in lattice gauge theory}",
    eprint = "2507.16966",
    archivePrefix = "arXiv",
    primaryClass = "hep-th",
    month = "7",
    year = "2025"
}

@article{Levin2015loopbraiding,
  title = {Loop braiding statistics in exactly soluble three-dimensional lattice models},
  author = {Lin, Chien-Hung and Levin, Michael},
  journal = {Phys. Rev. B},
  volume = {92},
  issue = {3},
  pages = {035115},
  numpages = {22},
  year = {2015},
  month = {Jul},
  publisher = {American Physical Society},
  doi = {10.1103/PhysRevB.92.035115},
  url = {https://link.aps.org/doi/10.1103/PhysRevB.92.035115}
}

@article{Bi2014anyonloopbraiding,
  title = {Anyon and loop braiding statistics in field theories with a topological $\ensuremath{\Theta}$ term},
  author = {Bi, Zhen and You, Yi-Zhuang and Xu, Cenke},
  journal = {Phys. Rev. B},
  volume = {90},
  issue = {8},
  pages = {081110},
  numpages = {4},
  year = {2014},
  month = {Aug},
  publisher = {American Physical Society},
  doi = {10.1103/PhysRevB.90.081110},
  url = {https://link.aps.org/doi/10.1103/PhysRevB.90.081110}
}

@article{AharonovBohm1959,
  title = {Significance of Electromagnetic Potentials in the Quantum Theory},
  author = {Aharonov, Y. and Bohm, D.},
  journal = {Phys. Rev.},
  volume = {115},
  issue = {3},
  pages = {485--491},
  numpages = {0},
  year = {1959},
  month = {Aug},
  publisher = {American Physical Society},
  doi = {10.1103/PhysRev.115.485},
  url = {https://link.aps.org/doi/10.1103/PhysRev.115.485}
}

@article{Wilczek1982,
  title = {Quantum Mechanics of Fractional-Spin Particles},
  author = {Wilczek, Frank},
  journal = {Phys. Rev. Lett.},
  volume = {49},
  issue = {14},
  pages = {957--959},
  numpages = {0},
  year = {1982},
  month = {Oct},
  publisher = {American Physical Society},
  doi = {10.1103/PhysRevLett.49.957},
  url = {https://link.aps.org/doi/10.1103/PhysRevLett.49.957}
}

@article{ArovasSchriefferWilczek1984,
  title = {Fractional Statistics and the Quantum {H}all Effect},
  author = {Arovas, Daniel and Schrieffer, J. R. and Wilczek, Frank},
  journal = {Phys. Rev. Lett.},
  volume = {53},
  issue = {7},
  pages = {722--723},
  numpages = {0},
  year = {1984},
  month = {Aug},
  publisher = {American Physical Society},
  doi = {10.1103/PhysRevLett.53.722},
  url = {https://link.aps.org/doi/10.1103/PhysRevLett.53.722}
}

@article{Bockstein1942,
  author  = {Bockstein, M.},
  title   = {Universal systems of {$\nabla$}-homology rings},
  journal = {C. R. (Doklady) Acad. Sci. URSS (N.S.)},
  volume  = {37},
  year    = {1942},
  pages   = {243--245},
  mrnumber = {0008701}
}

@article{feng2026paulistabilizerformalismtopological,
      title={Pauli stabilizer formalism for topological quantum field theories and generalized statistics}, 
      author={Yitao Feng and Hanyu Xue and Ryohei Kobayashi and Po-Shen Hsin and Yu-An Chen},
      year={2026},
      eprint={2601.00064},
      archivePrefix={arXiv},
      primaryClass={quant-ph},
      url={https://arxiv.org/abs/2601.00064}, 
}

@article{Kapustin2025Anomalous,
	author = {Kapustin, Anton and Sopenko, Nikita},
	da = {2025/09/01},
	doi = {10.1007/s00220-025-05422-2},
	id = {Kapustin2025},
	isbn = {1432-0916},
	journal = {Communications in Mathematical Physics},
	number = {10},
	pages = {238},
	title = {Anomalous Symmetries of Quantum Spin Chains and a Generalization of the {L}ieb--{S}chultz--{M}attis Theorem},
	ty = {JOUR},
	url = {https://doi.org/10.1007/s00220-025-05422-2},
	volume = {406},
	year = {2025}}

@article{Wilbur2026DisentanglingAnomaly,
  title = {Disentangling Anomaly-Free Symmetries of Quantum Spin Chains},
  author = {Seifnashri, Sahand and Shirley, Wilbur},
  journal = {Phys. Rev. Lett.},
  volume = {136},
  issue = {21},
  pages = {216603},
  numpages = {9},
  year = {2026},
  month = {May},
  publisher = {American Physical Society},
  doi = {10.1103/bscj-r5tg},
  url = {https://link.aps.org/doi/10.1103/bscj-r5tg}
}

@article{GoldsteinTurner1976,
 ISSN = {00029939, 10886826},
 URL = {http://www.jstor.org/stable/2041412},
 author = {Richard Z. Goldstein and Edward C. Turner},
 journal = {Proceedings of the American Mathematical Society},
 number = {1},
 pages = {339--342},
 publisher = {American Mathematical Society},
 title = {A Formula for {S}tiefel-{W}hitney Homology Classes},
 urldate = {2026-07-13},
 volume = {58},
 year = {1976}
}

@InProceedings{Preskill1999TopologicalQuantumComputation,
author="Walter Ogburn, R.
and Preskill, John",
editor="Williams, Colin P.",
title="Topological Quantum Computation",
booktitle="Quantum Computing and Quantum Communications",
year="1999",
publisher="Springer Berlin Heidelberg",
address="Berlin, Heidelberg",
pages="341--356",
isbn="978-3-540-49208-5"
}

@article{Kapustin2014Anomalous,
  title = {Anomalous Discrete Symmetries in Three Dimensions and Group Cohomology},
  author = {Kapustin, Anton and Thorngren, Ryan},
  journal = {Phys. Rev. Lett.},
  volume = {112},
  issue = {23},
  pages = {231602},
  numpages = {4},
  year = {2014},
  month = {Jun},
  publisher = {American Physical Society},
  doi = {10.1103/PhysRevLett.112.231602},
  url = {https://link.aps.org/doi/10.1103/PhysRevLett.112.231602}
}

@article{kapustin2014anomaliesdiscretesymmetriesvarious,
      title={Anomalies of discrete symmetries in various dimensions and group cohomology}, 
      author={Anton Kapustin and Ryan Thorngren},
      year={2014},
      eprint={1404.3230},
      archivePrefix={arXiv},
      primaryClass={hep-th},
      url={https://arxiv.org/abs/1404.3230}, 
}

@article{Shao2023HigherGauging,
	author = {Roumpedakis, Konstantinos and Seifnashri, Sahand and Shao, Shu-Heng},
	da = {2023/08/01},
	doi = {10.1007/s00220-023-04706-9},
	id = {Roumpedakis2023},
	isbn = {1432-0916},
	journal = {Communications in Mathematical Physics},
	number = {3},
	pages = {3043--3107},
	title = {Higher Gauging and Non-invertible Condensation Defects},
	ty = {JOUR},
	url = {https://doi.org/10.1007/s00220-023-04706-9},
	volume = {401},
	year = {2023}}

@article{Theo2022Classification,
	author = {Johnson-Freyd, Theo},
	da = {2022/07/01},
	doi = {10.1007/s00220-022-04380-3},
	id = {Johnson-Freyd2022},
	isbn = {1432-0916},
	journal = {Communications in Mathematical Physics},
	number = {2},
	pages = {989--1033},
	title = {On the Classification of Topological Orders},
	ty = {JOUR},
	url = {https://doi.org/10.1007/s00220-022-04380-3},
	volume = {393},
	year = {2022}}

\end{document}